\theoremstyle{plain}
\newtheorem{theorem}{Theorem}[section]
\newtheorem{proposition}[theorem]{Proposition}
\newtheorem{lemma}[theorem]{Lemma}
\theoremstyle{definition}
\newtheorem{definition}[theorem]{Definition}
\theoremstyle{remark}
\newcommand{\xtest}{X_{\mathrm{te}}}
\newcommand{\ytest}{Y_{\mathrm{te}}}
\newcommand{\bfx}{\mathbf{x}}
\newcommand{\bfy}{\mathbf{y}}
\newcommand{\R}{\mathbb{R}}
\newcommand{\E}{\mathbb{E}}
\newcommand{\GP}{\mathcal{GP}}
\newcommand{\featurespace}{\R^2}
\newcommand{\bfeps}{\boldsymbol{\epsilon}}
\newcommand{\khelm}{k_{\mathrm{Helm}}}
\newcommand{\kvel}{k_{\mathrm{vel}}}
\newcommand{\obsvar}{\sigma^2_{\mathrm{obs}}}
\newcommand{\Xtrain}{\mathbf{X}_{\mathrm{tr}}}
\newcommand{\Xtest}{\mathbf{X}_{\mathrm{te}}}
\newcommand{\Ytrain}{\mathbf{Y}_{\mathrm{tr}}}
\newcommand{\Ytest}{\mathbf{Y}_{\mathrm{te}}}
\newcommand{\xtrain}{X_{\mathrm{tr}}}
\newcommand{\ytrain}{Y_{\mathrm{tr}}}
\newcommand{\Khtrtr}{K_{\mathrm{trtr}}} % k helmholtz train train
\newcommand{\Khtetr}{K_{\mathrm{tetr}}} % k helmholtz test train
\newcommand{\Khtete}{K_{\mathrm{tete}}} % k helmholtz test test
\newcommand{\sigmaFu}{\sigma_{1}}
\newcommand{\sigmaFv}{\sigma_{2}}
\newcommand{\sehelmgp}{\text{SE-Helmholtz GP}}
\newcommand{\sevelgp}{\text{SE-velocity GP}}
\newcommand{\bk}{\ensuremath {\boldsymbol {k}}}
\newcommand{\bi}{\ensuremath {\boldsymbol {i}}}
\newcommand{\bj}{\ensuremath {\boldsymbol {j}}}
\newcommand{\Var}{\mathrm{Var}}
\newcommand{\Cov}{\mathrm{Cov}}
\newcommand{\dt}[2]{#1^{(#2)}}
\newcommand{\lightcell}{\cellcolor{green!40!white}} 
\newcommand{\darkcell}{\cellcolor{green!80!black}}
\begin{document}

% The \icmltitle you define below is probably too long as a header.
% Therefore, a short form for the running title is supplied here:
\ifthenelse{\boolean{useTemplateICML}}
{
  % Template ICML
    \icmltitlerunning{Gaussian Processes at the Helm(holtz)}
    \twocolumn[
    \icmltitle{Gaussian \nolinebreak Processes \nolinebreak at \nolinebreak the \nolinebreak Helm(holtz): \nolinebreak A \nolinebreak More \nolinebreak Fluid \nolinebreak Model \nolinebreak for \nolinebreak Ocean \nolinebreak Currents}
    \icmlsetsymbol{equal}{*}

    % It is OKAY to include author information, even for blind
    % submissions: the style file will automatically remove it for you
    % unless you've provided the [accepted] option to the icml2023
    % package.
    
    % List of affiliations: The first argument should be a (short)
    % identifier you will use later to specify author affiliations
    % Academic affiliations should list Department, University, City, Region, Country
    % Industry affiliations should list Company, City, Region, Country
    
    % You can specify symbols, otherwise they are numbered in order.
    % Ideally, you should not use this facility. Affiliations will be numbered
    % in order of appearance and this is the preferred way.

    \begin{icmlauthorlist}
    \icmlauthor{Renato Berlinghieri}{mit}
    \icmlauthor{Brian L.~Trippe}{col}
    \icmlauthor{David R.~Burt}{mit}
    \icmlauthor{Ryan Giordano}{mit}
    \icmlauthor{Kaushik Srinivasan}{ucla}
    \icmlauthor{Tamay \"Ozg\"okmen}{miami}
    \icmlauthor{Junfei Xia}{miami}
    \icmlauthor{Tamara Broderick}{mit}
    \end{icmlauthorlist}
    
    \icmlaffiliation{mit}{Department of Electrical Engineering and Computer Science, Massachusetts Institute of Technology}
    \icmlaffiliation{col}{Department of Statistics, Columbia University}
    \icmlaffiliation{ucla}{Department of Atmospheric and Oceanic Sciences, UCLA}
    \icmlaffiliation{miami}{Department of Ocean Sciences, Rosenstiel School of Marine and Atmospheric Science, University of Miami}
    
    \icmlcorrespondingauthor{Renato Berlinghieri}{renb@mit.edu}

    % You may provide any keywords that you
    % find helpful for describing your paper; these are used to populate
    % the "keywords" metadata in the PDF but will not be shown in the document
    \icmlkeywords{Gaussian Processes, Helmholtz, currents, divergence, vorticity}

    % this must go after the closing bracket ] following \twocolumn[ ...
    
    \vskip 0.3in
    ]

    %\printAffiliationsAndNotice{}  % leave blank if no need to mention equal contribution

    \printAffiliationsAndNotice{} % otherwise use the standard text.
}
{
  % Template Arxiv
    \title{Gaussian processes at the Helm(holtz): \\ A more fluid model for ocean currents}
    \setshorttitle{Gaussian processes at the Helm(holtz): A more fluid model for ocean currents} % No line breaks
    \author{Renato Berlinghieri$^1$, Brian L.~Trippe$^2$, David R.~Burt$^1$, Ryan Giordano$^1$, \\ \textbf{Kaushik Srinivasan}$^3$, \textbf{Tamay \"{O}zg\"{o}kmen}$^4$, \textbf{Junfei Xia}$^4$, \textbf{Tamara Broderick}$^1$}
    \date{
    $^1$Department of Electrical Engineering and Computer Science, Massachusetts Institute of Technology;\\%
    $^2$Department of Statistics, Columbia University;\\
    $^3$Department of Atmospheric and Oceanic Sciences, UCLA;\\
    $^4$Department of Ocean Sciences, Rosenstiel School of Marine and Atmospheric Science, University of Miami.
    }
    \maketitle  
}

\doparttoc % Tell to minitoc to generate a toc for the parts
\faketableofcontents % Run a fake tableofcontents command for the partocs

\begin{abstract}
Given sparse observations of buoy velocities, oceanographers are interested in reconstructing ocean currents away from the buoys and identifying divergences in a current vector field.
As a first and modular step, we focus on the time-stationary case -- for instance, by restricting to short time periods.
Since we expect current velocity to be a continuous but highly non-linear function of spatial location, Gaussian processes (GPs) offer an attractive model.
But we show that applying a GP with a standard stationary kernel directly to buoy data can struggle at both current reconstruction and divergence identification, due to some physically unrealistic prior assumptions.
To better reflect known physical properties of currents, we propose to instead put a standard stationary kernel on the divergence and curl-free components of a vector field obtained through a Helmholtz decomposition.
We show that, because this decomposition relates to the original vector field just via mixed partial derivatives, we can still perform inference given the original data with only a small constant multiple of additional computational expense.
We illustrate the benefits of our method with theory and experiments on synthetic and real ocean data. 

%Oceanographers are interested in predicting ocean currents and identifying divergences in a current vector field based on sparse observations of buoy velocities. Since we expect current velocity to be a continuous but highly non-linear function of spatial location, Gaussian processes (GPs) offer an attractive model. But we show that -- if we assume time-stationarity over short time periods and focus on the spatial task -- applying a GP with a standard stationary kernel directly to buoy data can struggle at both current prediction and divergence identification, due to some physically unrealistic prior assumptions. To better reflect known physical properties of currents, we propose to instead put a standard stationary kernel on the divergence and curl-free components of a vector field obtained through a Helmholtz decomposition. We show that, because this decomposition relates to the original vector field just via mixed partial derivatives, we can still perform inference given the original data with only a small constant multiple of additional computational expense. We illustrate the benefits of our method on synthetic and real ocean data. 

%(This document provides a basic paper template and submission guidelines.
%Abstracts must be a single paragraph, ideally between 4--6 sentences long.
%Gross violations will trigger corrections at the camera-ready phase.)

\end{abstract}

\ifthenelse{\boolean{useTemplateICML}}
{
  % Template ICML
}
{
  % Template Arxiv
  \keywords{Gaussian Processes, Helmholtz decomposition, ocean currents, divergence, vorticity.}
}

\section{Introduction}
Ocean currents are key to the global distribution of water, heat, and nutrients. To better understand ocean currents, scientists are interested in two tasks: (1) reconstructing ocean currents at different locations and (2) identifying divergences %, also known as upwellings and downwellings, 
in the current vector field.
Reconstructing ocean currents accurately can facilitate weather forecasting, maritime navigation, and forecasting of oil spill dispersion.
And current divergences are important to identify since they are responsible for the transport of biomass, carbon, and nutrients -- with implications for ecosystem management, climate, and the fishing industry \citep{dasaro2018ocean}. %,Huguenard2016OnTN,Androulidakis2018InfluenceOR
With these tasks in mind, researchers release and track GPS-tagged buoys in the ocean \citep{ozgokmen2012carthe,dasaroLASERdata2017}.

It remains to choose an appropriate method to reconstruct currents and their divergences from buoy data. \citet{goncalves2019naive} and \citet{lodise2020investigating} proposed modeling buoy velocities in the latitude and longitude directions according to independent Gaussian processes (GPs) with standard spatiotemporal kernels (e.g., squared exponential kernels). In our work, we focus on the spatial aspects of this task and assume the velocity field is stationary in time. Even under this simplification, an independent spatial GP prior on the velocities is a natural choice due to its ability to handle the sparsity of buoy observations on the ocean surface and its assumption that currents vary continuously but in a nonlinear fashion. We call such a model the \emph{velocity GP}.

% A spatial GP prior on the velocities is a natural choice due to its ability to handle the sparsity of buoy observations on the ocean surface and its assumption that currents vary smoothly but in a highly nonlinear fashion. \textcolor{red}{DRB:mention stationarity somewhere in intro}

However, in what follows, we show that there remains substantial room for improvement. In simulated cases where we have access to ground truth, we observe that the velocity GP approach can fail to complete vortices or fail to connect currents when buoys are observed sparsely. And while we show how to derive divergence estimates for the velocity GP, we also find that these estimates often fail to capture the true divergence when it is known in simulations or real data.

To address these issues, we propose to more directly model known behaviors from fluid dynamics.
%In particular,
Scientists know that the motion of a volume element of a continuous fluid medium in two dimensions consists of (i) expansion or contraction in two orthogonal directions, (ii) rotation about an instantaneous axis, and (iii) translation. A \emph{Helmholtz decomposition} \citep{bhatia2013helmoltzhodge, arfken1999mathematical} from fluid dynamics lets us decompose the vector field of ocean currents into a \emph{divergent} component (or \emph{curl-free}, measuring expansion, contraction, and translation) and a \emph{rotational} component (or \emph{divergence-free}, measuring rotation).\footnote{The divergent component is called curl-free and the rotational component is called divergence-free because their curl and divergence are zero everywhere, respectively. See \cref{prop:div-curl-free}.
}

By contrast to the standard approach, we model the divergent and rotational components (rather than the velocity components) with independent GP priors.
The resulting \emph{Helmholtz GP} prior offers several conceptual advantages.
For one, oceanographers expect the two components to have substantially different magnitudes and length scales; it is straightforward to encode these differences with a Helmholtz GP.
By contrast, we prove that the velocity GP implies an a priori belief that the divergent and rotational components have the same magnitude.
Second, we expect correlation between the longitudinal and latitudinal components of a current, which the Helmholtz GP exhibits -- and the velocity GP lacks by construction.
Finally, the Helmholtz GP is agnostic to the (arbitrary) choice of reference frame, while the velocity GP is not necessarily.

We demonstrate that the Helmholtz GP is amenable to practical inference. Since (i) the Helmholtz decomposition is based on partial (mixed) derivatives, and (ii) the derivative of a GP is a GP, we show that our prior choice implies a tractable GP prior on the current itself. Therefore, we can still perform inference given the original data with no extra approximation. And our method suffers no increase in computational complexity relative to the velocity GP.

Finally, we demonstrate the superior performance of the Helmholtz GP at the current-reconstruction task and divergence-estimation task (as well as vorticity estimation) in a variety of experiments on simulated and real data. Code is available at \url{https://github.com/renatoberlinghieri/Helmholtz-GP}.
 
\textbf{Related work.} 
The Helmholtz decomposition has been used %extensively
for visualization and physical interpretation of an oceanographic field 
\citep{rocha2016mesoscale, zhang2018helm1, zhang2019helm3, lei2020helm2, buhler2014wave, caballero2020integration}.
But these methods assume the velocity vector field is known on a grid whereas our goal is to reconstruct the vector field from sparse observations.
%The Helmholtz decomposition has been used extensively for interpretation and visualization of an observed oceanographic field on a grid, without solving a prediction\footnote{Here and throughout, we use the word \emph{prediction} in the machine learning sense; it describes the task of making informed guesses about unseen data points using a trained model and need not imply looking forward in time.} task \citep{rocha2016mesoscale, zhang2018helm1, zhang2019helm3, lei2020helm2, buhler2014wave, caballero2020integration}.
Prior work in atmospheric statistics used a Helmholtz decomposition to perform regression on the residuals of a physical model \citep{daley1985analysis, hollingsworth1986statistical}; this approach would give the same mean prediction\footnote{Here and throughout, we use the word \emph{prediction} in the machine learning sense; it describes the task of making informed guesses about unseen data points using a trained model and need not imply looking forward in time.} as the Helmholtz GP if the same covariance function were used. However, to estimate the covariance function, the authors rely on a series representation and binning procedure; without hand-tuning that depends on the data and physics of the system (e.g., for determining the number of coefficients), this procedure can result in covariance functions that are not positive definite. 
%Prior work in atmospheric statistics considered using the Helmholtz decomposition to interpolate  a sparsely observed wind velocity field \citep{daley1985analysis, hollingsworth1986statistical}.
%The approach is equivalent to performing kernel ridge regression on the residuals of a physical model, and therefore gives the same mean prediction as our approach if the same form of covariance is assumed. However, they take an ad hoc approach to estimating the covariance functions based on a positing a series representation for the covariance functions and fit the coefficients of the series to the empirical covariance between pairs of observation, binned by radius, via a least squares procedure. Without extensive hand-tuning that is dependent on the physics of the system studied and data (e.g., for determining the number of coefficients), this procedure may result in covariance functions that are not positive definite. 

% This procedure requires a lot of hand-tuning which depends on the underlying physics of the system of interest and observed data  and may result in covariance functions that are not positive definite.

 Researchers have developed GP kernels to capture curl- or divergence-free fields \citep{narcowich1994generalized, lowitzsch2002approximation, fuselier2007refined, macedo2010learning,alvarez2012kernels}. \citet{macedo2010learning} propose using convex combinations of such kernels. However, these works do not propose methods for recovering the weighting of the two components and do not empirically test recovery of the components when the weighting is unknown -- the case of interest in the oceans problem.
\citet{wahlstrom2013modeling, wahlstrom2015modeling, solin2018modeling} use curl- and divergence-free kernels for electromagnetic fields. \citet{wahlstrom2015modeling} proposes independent GP priors on the terms in a Helmholtz decomposition. But the authors assume direct access to noisy observations of each of the divergence-free and curl-free components separately -- whereas we aim to recover the individual components from noisy observations of their sum. Moreover, \citet{wahlstrom2015modeling} constrains the two components to have the same prior magnitudes and length scales, but these quantities can be expected to vary substantially between components in ocean currents.
Finally, \citet{greydanus2022dissipative} extended Hamiltonian Neural Networks \citep{greydanus2019hamiltonian} to model both curl- and divergence-free dynamics simultaneously. Although the prediction problem is similar, the authors test their method only on low-resolution data available on a dense grid. In our experiments on (sparse) buoy data in \cref{sec:experiments} and \cref{app:experiment}, we find that their method often produces predictions that are less accurate and less physically plausible. 

In sum, then, it is not clear from existing work that divergence and vorticity can be usefully or practically recovered when observations come from a general (noisy) vector field that is neither curl- nor divergence-free. Moreover, there is no existing guidance on how to use a Helmholtz GP in practice for identifying divergences or making predictions from such noisy vector-field observations, and no information is available on how a Helmholtz GP compares to a velocity GP on these tasks -- either empirically or theoretically. We discuss related work further in \cref{app:related_work}.

\label{introduction}

%\section{Background}
% \documentclass[../main.tex]{subfiles}
% \graphicspath{{\subfix{../images/}}}
% \begin{document}
\section{Background}\label{sec:background}

In what follows, we first describe the problem setup. Then we establish necessary notation and concepts from the Helmholtz decomposition and Gaussian processes.

\textbf{Problem Statement.} %\label{sec:background-problem} 
We consider a dataset $D$ of $M$ observations, $\{(\bfx_m, \bfy_m)\}_{m=1}^M$. Here $\bfx_m = (x_m^{(1)}, x_m^{(2)})^\top \in \R^2$ represents the location of a buoy, typically a longitude and latitude pair. We treat $\bfx_m$ as a column vector. And $\bfy_m = (y_m^{(1)}, y_m^{(2)})^\top \in \R^2$ gives the corresponding longitudinal and latitudinal velocities of the buoy (the  \emph{drifter trace}). For $m \in \{1, \ldots, M\}$, we consider $\bfy_m$ as a sparse noisy observation of a 2-dimensional vector field, $F : \R^2 \rightarrow \R^2$, mapping spatial locations into longitudinal and latitudinal velocities, $F(\bfx_m) = (F^{(1)}(\bfx_m), F^{(2)}(\bfx_m))^\top$. We assume that the velocity field is stationary in time, and so $F$ is not a function of time. Our primary goals are (1) prediction of the field $F$ at new locations, not observed in the training data, and (2) estimation of the divergence, itself a function of location and which we define next as part of the Helmholtz decomposition. Secondarily, we are interested in recovering vorticity, another functional of $F$ described below.

\textbf{The Helmholtz Decomposition.} The motion of a volume element of a fluid, such as the ocean, can be decomposed into a divergent velocity and a rotational velocity. 

\begin{definition}[Helmholtz decomposition, {{\citealp{bhatia2013helmoltzhodge}}}]
\label{th:helmdec}
A twice continuously differentiable and compactly supported vector field $F : \mathbb{R}^2 \rightarrow \mathbb{R}^2$ can be expressed as the sum of the gradient of a scalar potential $\Phi: \R^2 \to\R$, called the \textit{potential function}, and the vorticity operator of another scalar potential $\Psi: \R^2 \to \R$, called the \textit{stream function}:
\begin{align}
\underbrace{F}_{\text{ocean flow}} = \underbrace{\mathrm{grad} \,\Phi}_{\text{divergent velocity}} + \underbrace{\mathrm{rot}\, \Psi}_{\text{rotational velocity}} \label{eqn:helmholtz-decomp}
\end{align}
where 
\begin{align}
\mathrm{grad} \,\Phi \!\coloneqq\!
\begin{bmatrix}
\partial\Phi/\partial x^{(1)} \\ \partial\Phi/\partial x^{(2)}
\end{bmatrix} \, \text{and} \,
\mathrm{rot}\,\Psi \!\coloneqq\!
\begin{bmatrix}
\partial\Psi/\partial x^{(2)} \\
-\partial\Psi/\partial x^{(1)}
\end{bmatrix}.
\end{align}
\end{definition}

The \emph{divergence} of $F$ (denoted $\delta$) and the \emph{vorticity} of $F$ (denoted $\zeta$) are
\begin{align}
\label{eqn:divergence-standard}
\delta &\!\coloneqq\!
\mathrm{div}(F) \!\coloneqq\!
\frac{\partial F^{(1)}}{\partial x^{(1)}} +        \frac{\partial F^{(2)}}{\partial x^{(2)}}
=\frac{\partial^2\Phi}{\partial^2 x^{(1)}} + \frac{\partial^2\Phi}{\partial^2 x^{(2)}}
 \\
\label{eqn:vorticity-standard}
\zeta &\!\coloneqq\!
\mathrm{curl}(F)\!\coloneqq\!
\frac{\partial F^{(1)}}{\partial x^{(2)}} - \frac{\partial F^{(2)}}{\partial x^{(1)}}
\!=\frac{\partial^2\Psi}{\partial^2 x^{(2)}} + \frac{\partial^2\Psi}{\partial^2 x^{(1)}}. 
\end{align}

In \cref{eqn:divergence-standard}, $\mathrm{div}(F)$ depends only on $\Phi$ because $\mathrm{div}  (\mathrm{rot}\,\Psi ) = 0$. In other words, the rotational velocity is divergence-free. Similarly, in \cref{eqn:vorticity-standard}, $\mathrm{curl}(F)$ depends only on $\Psi$ because $\mathrm{curl} (\mathrm{grad} \,\Phi ))=0$. In other words, the divergent velocity is curl-free. We review the $\mathrm{grad}$, $\mathrm{rot}$, $\mathrm{div}$, and $\mathrm{curl}$ operators -- and explore the equations above in more detail -- in \cref{app:divcurl}. We summarize the various terms in \cref{tab:relationships}. In \cref{app:helmholtz}, we present a graphical illustration of a Helmholtz decomposition of a selected vector field, and we further discuss the importance of divergence and vorticity within ocean currents.

% Here we are implicitly using the fact that the divergence of the vorticity component is zero,

% as well as the $\text{curl}$ of the divergence component,

% \textcolor{red}{(RG: I have opinions about the $\nabla \times$ notation that I want to make a placeholder for to remind myself to think / talk about.  In short, it's nice / standard to think of $\nabla$ as a vector of partial derivative operators, but then you need to be more careful that your curl and dot products make formal sense.)}
% so that then to get $\delta$ and $\zeta$ it is enough to apply the two operators to the corresponding components. Note that 

\begin{table}[t]
\caption{Terms and notation around the divergence and vorticity.}
\label{tab:relationships}
\begin{center}
\begin{tabular}{c | c}
$\Phi$ & potential function \\
$\mathrm{grad}\,\Phi$ & divergent velocity \\
%$\nabla \Phi$ & divergent velocity \\

$\delta = \mathrm{div} (\mathrm{grad} \,\Phi)$ & divergence \\ \hline
$\Psi$ & stream function \\ 
$\mathrm{rot}\,\Psi$ & rotational velocity \\ 
$\zeta = \mathrm{curl}(\mathrm{rot} \,\Psi)$ & vorticity
\end{tabular}
\end{center}
\end{table}

\textbf{Bayesian Approach and Gaussian Process Prior.} %\label{sec:background-gps}
In what follows, we will take a Bayesian approach to inferring $F$. In particular, we assume a likelihood, or noise model, relating the observed buoy velocities to the field $F$:
\begin{align}\label{eqn:likelihood}
\!\bfy_m\! =\! F(\bfx_m) + \bfeps_m, \bfeps_m  \!\stackrel{\text{ind}}{\sim}\! \mathcal{N}(0, \obsvar\mathbf{I}_2), 1 \!\leq\! m \!\leq\! M,
\end{align}
for some $\obsvar>0$ and independent ($\stackrel{\text{ind}}{\sim}$) noise across observations. Here and throughout, we use $\mathbf{I}_p \in R^{p \times p}$ to denote the identity matrix in $p$ dimensions. We use $0$ to denote the zero element in any vector space.

Before defining our prior, we review Gaussian processes (GPs). Let $\bfx, \bfx'\in \featurespace$ represent two input vectors. Assume that we want to model a $P$-dimensional function $G: \featurespace \to \R^P$, $G(\bfx) = (G^{(1)}(\bfx), \dotsc, G^{(P)}(\bfx))^\top$.  A $P$-output GP on covariate space $\featurespace$ is determined by a mean function $\mu : \featurespace \to \R^P$, $\mu(\bfx) = (\mu^{(1)}(\bfx), \dotsc, \mu^{(P)}(\bfx))^\top$, and a positive definite kernel function $k : \featurespace \times \featurespace \to \R^{P \times P}$. We use $k(\bfx, \bfx')_{i, j}$ to denote the $(i,j)$th output of $k(\bfx, \bfx')$. We say that $G$ is GP distributed and write $G\sim \mathcal{GP}(\mu, k)$ if for any $N \in \mathbb{N}$, for any $(\bfx_1, \dotsc, \bfx_N) \in \R^{2 \times N}$, and for any vector of indices $(p_1, \dotsc, p_N) \in \{1, \dotsc, P\}^N$, $(G^{(p_n)}(\bfx_n))_{n=1}^N$ is an $N$-dimensional Gaussian random variable with mean vector $(\mu^{(p_n)}(\bfx_n))_{n=1}^N$ and covariance matrix with $(i, j)$th entry $k(\bfx_i, \bfx_j)_{p_i, p_j}$. See \citet{alvarez2012kernels} for a review of multi-output GPs.

\textbf{Velocity Gaussian Process.} In spatial data analysis, commonly $\mu$ is chosen to be identically 0. And a conventional choice for $k$ would be an isotropic kernel\footnote{We say a kernel $k$ is \emph{isotropic} if there exists some $\kappa: \R^+\rightarrow \R$ such that for any $\bfx$ and $\bfx'$ in $\R^2,\ k(\bfx, \bfx')=\kappa(\|\bfx-\bfx'\|).$} separately in each output dimension. That is, for any $\bfx, \bfx' \in \featurespace$,
\begin{align}
	% k_{\text{SE}}(\bfx, \bfx')_{i,j} =  \mathbf{1}_{\{i=j\}} \sigma^2_{F_j} \exp\left(-\frac{\|\bfx -  \bfx'\|_2^2}{2\ell_j^2}\right). 
 	k_{\text{vel}}(\bfx, \bfx')=  \begin{bmatrix} k^{(1)}(\bfx, \bfx') & 0 \\
  0 & k^{(2)}(\bfx, \bfx') 
  \end{bmatrix}.
  \label{eqn:vel-kernel}
\end{align}
where $k^{(1)}$ and $k^{(2)}$ are isotropic kernels. We call this choice the \emph{velocity GP} to emphasize that the independent priors are directly on the observed velocities. A standard kernel choice for $k^{(i)}$, $i \in \{1,2\}$, is the squared exponential kernel,
\begin{align}\label{eqn:sq-exp-kernel} 
k_{\text{SE}}^{(i)}(\bfx, \bfx') = \sigma^2_{i} \exp\left(-\tfrac{1}{2}\|\bfx -  \bfx'\|_2^2/\ell_i^2\right).
\end{align}

The velocity GP with squared exponential kernels for each component (henceforth, the \emph{\sevelgp}) has four hyperparameters: for $i \in \{1,2\}$, the signal variance $\sigma^2_i > 0$ determines the variation of function values from their mean in the $i$th output dimension, and $\ell_i > 0$ controls the length scale on which the function varies. 

% We extend the definition of the mean and kernel function to allow for arbitrary finite collections of inputs. In particular, for $\mathbf{X}=(\bfx_1, \dotsc, \bfx_N) \in \R^{2 \times N}$ and $\mathbf{X'}=(\bfx'_1, \dotsc, \bfx'_{N'}) \in \R^{2 \times N}$,

% \begin{align}
%     \mu(\mathbf{X}) = 
%     \begin{pmatrix}
%         \mu^{(1)}(\mathbf{X}) \\ \mu^{(2)}(\mathbf{X})
%     \end{pmatrix}
%     \quad \text{and} \quad 
%     k(\mathbf{X}, \mathbf{X'}) = 
%     \begin{pmatrix}
%         k(\mathbf{X}, \mathbf{X'})_{1,1} & k(\mathbf{X}, \mathbf{X'})_{1,2} \\ 
%         k(\mathbf{X}, \mathbf{X'})_{2,1} & k(\mathbf{X}, \mathbf{X'})_{2,2}
%     \end{pmatrix}
% \end{align}
% where (a) for $i \in \{1,2\}$, $n \in \{1,\ldots,N\}$, $\mu^{(i)} (\mathbf{X})$ is an $N$-dimensional vector with $n$th entry $\mu_i (\bfx_n)$, and (b) for $i,j \in \{1,2\}$, $n \in \{1,\ldots,N\}$, $n' \in \{1,\ldots,N'\}$, $k(\mathbf{X},\mathbf{X'})_{i,j}$ is an $N \times N'$ matrix with $(n, n')$th entry $k(\bfx_n,\bfx'_{n'})_{i,j}$. See \cite{alvarez2012kernels} for a complete analysis of multi-output GPs.
%\label{sec:background}

\section{Gaussian Processes at the Helm(holtz)} % could save space by removing "our method" to move onto one line

% \documentclass[../main.tex]{subfiles}
% \graphicspath{{\subfix{../images/}}}
% \begin{document}

% Consider a dataset of M observations, $\{\textbf{x}_m, \textbf{y}_m\}_{m=1}^M$, where $\textbf{x}_m =  (x_m^{(1)}, x_m^{(2)})$ represents the spatial location of a buoy, and $\textbf{y}_m = (\text{y}_m^{(1)}, \text{y}_m^{(2)})$ the corresponding horizontal and vertical velocities (the so-called drifter trace). We treat these $\textbf{y}$'s as sparse noisy observations of a 2D vector field, $\text{F} : \mathbb{R}^2 \rightarrow \mathbb{R}^2$, mapping spatial locations into horizontal and vertical velocities, $(\text{F}^{(1)}, \text{F}^{(2)})$.
%
% We model the drifter traces as follows, with observation noise $\sigma^2_{\text{obs}}$:
% \begin{equation}
% \begin{bmatrix}
% \text{y}_m^{(1)} \\ \text{y}_m^{(2)}
% \end{bmatrix} \mid \text{F}(\textbf{x}_m) \stackrel{\text{ind}}{\sim} \mathcal{N}\left(\text{F}(\textbf{x}_m), \begin{bmatrix}
% \sigma^2_{\text{obs}} & 0 \\ 0 & \sigma^2_{\text{obs}}
% \end{bmatrix} \right)
% \label{eqn:likelihood}
% \end{equation}

% independently across $m$.

Instead of putting separate GP priors with isotropic kernels on the two components of $F$ as in the velocity GP, we propose to put separate GP priors with isotropic kernels on the Helmholtz scalar potentials $\Phi$ and $\Psi$. In this section, we describe our model and how to retrieve the quantities of interest from it. In the next section, we describe its conceptual strengths over the velocity GP, which we see empirically in \cref{sec:experiments}.

\looseness=-1 \textbf{Our Helmholtz GP prior.} To form our new \emph{Helmholtz GP} prior, we put independent GP priors on the Helmholtz stream and potential functions:
\begin{equation}
\label{eqn:helmprior}
 \Phi \sim \GP(0, k_{\Phi}) \quad \text{and} \quad
    \Psi \sim \GP(0, k_{\Psi}), 
\end{equation} 
where we take $k_{\Phi}$ and $k_{\Psi}$ to be isotropic kernels. When these kernels are chosen to be squared exponentials (\cref{eqn:sq-exp-kernel}), we call our model the \emph{\sehelmgp}. The $\sehelmgp$ has four parameters: $\ell_{\Phi}$ and $\sigma^2_{\Phi}$ for $k_{\Phi}$, and $\ell_{\Psi}$ and $\sigma^2_{\Psi}$ for $k_{\Psi}$. We could use any two kernels such that sample paths of the resulting GPs are almost surely continuously differentiable (so that $F$ in \cref{eqn:helmholtz-decomp} is well-defined and continuous). Generally, we will want to be able to consider divergences and vorticities of the implied process, which will require sample paths of the implied process to be at least twice-continuously differentiable. For the latter condition to hold, it is sufficient for $k_{\Phi}(0, \bfx)$ and $k_{\Psi}(0, \bfx)$ to have continuous mixed partial derivatives up to order five; see \citet[Theorem 2.09 \& Section 7.2]{lindgren2012stationary}.

First, we check that our prior yields a GP prior over the vector field $F$.

\begin{restatable}{proposition}{helmprior}\label{prop:helm-prior}
%    Let $F$ be the ocean current vector field and $\Phi$ and $\Psi$ the potential and stream functions of a Helmholtz decomposition of the field, respectively.
Let $F$ be an ocean current vector field defined by potential and stream functions that are independent and distributed as 
$\Phi \sim \GP(0, k_{\Phi})$ and 
$\Psi \sim \GP(0, k_{\Psi})$, where $k_{\Phi}$ and $k_{\Psi}$ are such that $\Phi$ and $\Psi$ have almost surely continuously differentiable sample paths.
Then
%\begin{align}
% \Phi \sim \GP(0, k_{\Phi}) \hspace{5pt}
%\end{align}
%Then
\begin{align}
F = \mathrm{grad} \,\Phi + \mathrm{rot}\, \Psi \sim \GP(0, \khelm ),
\end{align}
where, for $\bfx, \bfx'\in \featurespace$, $i,j \in {1,2}$, $\khelm(\bfx, \bfx')_{i,j}$ is equal to
\begin{align} 
\dfrac{
        \partial^2 k_{\Phi}(\bfx, \bfx')
    }{
        \partial x^{(i)} \partial (x')^{ {(j)}}
    }
    + (-1)^{\mathbf{1}\{i \ne j\}}
    \dfrac{
        \partial^2 k_{\Psi}(\bfx, \bfx')
    }{
        \partial x^{(3-i)} \partial (x')^{ {(3-j)}}
    }.
\end{align}
\end{restatable}
% \begin{remark}
% \Cref{prop:helm-prior} holds more generally, so long as the kernel has continuous mixed partial derivatives up to order two.
% \end{remark}
Our proof in \cref{app:prior-helm-gp} relies on two observations: (i) the Helmholtz decomposition is based on partial (mixed) derivatives and (ii) the derivative of a GP is a GP; see, e.g., \citet[Chapter 9.4]{rasmussen2005gp}, and \citet[Theorem 2.2.2]{adler1981geometry}.

\textbf{Making predictions.}
To make predictions using our Helmholtz GP, we need to choose the hyperparameter values and then evaluate the posterior distribution of the ocean current given those hyperparameters.

We choose the GP hyperparameters by maximizing the log marginal likelihood of the training data.
To write that marginal likelihood, we let $\Xtrain \in \R^{2 \times M}$ be the matrix with $m$th column equal to $\bfx_m$.
We define $\Ytrain = (\bfy_1^{(1)}, \ldots, \bfy_M^{(1)},\bfy_1^{(2)}, \ldots, \bfy_M^{(2)})^\top \in \R^{2M}$. We extend the definition of the mean and kernel function to allow for arbitrary finite collections of inputs. In particular, for $\mathbf{X}=(\bfx_1, \dotsc, \bfx_N) \in \R^{2 \times N}$ and $\mathbf{X'}=(\bfx'_1, \dotsc, \bfx'_{N'}) \in \R^{2 \times N'}$,
\begin{align}
    \mu(\mathbf{X}) &= 
    \begin{pmatrix}
        \mu^{(1)}(\mathbf{X}) \\ \mu^{(2)}(\mathbf{X})
    \end{pmatrix}
    \, \text{and} \, \\
    k(\mathbf{X}, \mathbf{X'}) &= 
    \begin{pmatrix}
        k(\mathbf{X}, \mathbf{X'})_{1,1} & k(\mathbf{X}, \mathbf{X'})_{1,2} \\ 
        k(\mathbf{X}, \mathbf{X'})_{2,1} & k(\mathbf{X}, \mathbf{X'})_{2,2}
    \end{pmatrix}
\end{align}
where (a) for $i \in \{1,2\}$, $n \in \{1,\ldots,N\}$, $\mu^{(i)} (\mathbf{X})$ is an $N$-dimensional column vector with $n$th entry $\mu^{(i)} (\bfx_n)$, and (b) for $i,j \in \{1,2\}$, $n \in \{1,\ldots,N\}$, $n' \in \{1,\ldots,N'\}$, $k(\mathbf{X},\mathbf{X'})_{i,j}$ is an $N \times N'$ matrix with $(n, n')$th entry $k(\bfx_n,\bfx'_{n'})_{i,j}$. With this notation, we denote the covariance of the training data with itself, under the full model including noise, as $\Khtrtr = k(\Xtrain,\Xtrain) + \obsvar \mathbf{I}_{2M}$. Then the log marginal likelihood is 
\begin{align}
\label{eqn:log-marginal-likelihood}
\begin{split}
&\log\, p(\Ytrain \mid \Xtrain)= \log \mathcal{N}(\Ytrain; 0, \Khtrtr) \\ 
    &=-\frac{1}{2} \Ytrain^T \Khtrtr^{-1}\Ytrain - \frac{1}{2} \log |\Khtrtr| - \frac{2M}{2} \log 2\pi,
\end{split}
\end{align}
where $|\cdot |$ takes the determinant of its matrix argument.
We provide details of our optimization procedure in \cref{sec:experiments}.

With hyperparameter values in hand, we form probabilistic predictions using the posterior of the GP. In particular, the posterior mean forms our prediction at a new set of points, and the posterior covariance encapsulates our uncertainty.

Consider $N$ new (test) locations at which we would like to predict the current. We gather them in $\Xtest \in \R^{2 \times N}$, with $n$th column equal to $\bfx_n^\star$. We denote the covariance of various training and testing combinations as:
$\Khtetr = k(\Xtest,\Xtrain)$ and
$\Khtete = k(\Xtest,\Xtest)$.
Then a posteriori after observing the training data $D$, the $2N$-long vector $(F^{(1)}(\bfx_1^\star), \dotsc,F^{(1)}(\bfx_N^\star),\dotsc,F^{(2)}(\bfx_1^\star)\dotsc, F^{(2)}(\bfx_N^\star))^\top$ describing the current at the test locations has a normal distribution with mean and covariance
\begin{align}
    \mu_{F \mid D} &= \Khtetr \Khtrtr^{-1} \Ytrain,  \label{eqn:helm_posterior_mean} \\
    K_{F \mid D} &= \Khtete - \Khtetr \Khtrtr^{-1} \Khtetr^\top. \label{eqn:helm_posterior_cov}
\end{align}

For more details, see \citet[Section 2.2]{rasmussen2005gp}. Note that these formulas can be used to evaluate posterior moments of the velocity field for either the Helmholtz GP (setting $k = \khelm$) or the velocity GP (with $k = \kvel$).

\textbf{Recovering divergence and vorticity.} We next show how to recover the posterior distributions on the divergence and vorticity scalar fields given a posterior on the current field $F$. We can estimate divergence and vorticity at any location by using the posterior mean at that point, and we can report uncertainty with the posterior variance. Note that our formulas recover divergence and vorticity for either our Helmholtz GP or the velocity GP. 

\begin{restatable}{proposition}{divergenceofagp}
\label{prop:divergenceofagp}
Let $F \sim \mathcal{GP}(\mu, k)$ be a two-output Gaussian process with almost surely continuously differentiable sample paths. Then, for $\bfx, \bfx'\in \featurespace$, 
\begin{align}
&\delta = \mathrm{div} \, F \sim \mathcal{GP}(\mathrm{div} \, \mu, k^{\delta}) \\ %\quad \text{and} \\
&\zeta = \mathrm{curl} \, F \sim \mathcal{GP}(\mathrm{curl} \, \mu, k^{\zeta})
\end{align}
where
%
% k_{\textup{div}}
\begin{align}
 &k^{\delta}(\bfx, \bfx') = \!\!\!\!\!\sum_{(i,j) \in \{1, 2\}^2} \!\!\!\!\!\frac{\partial^2 k(\bfx,\bfx')_{i,j}}{\partial x^{(i)}\partial x^{(j)}}
\\ %\quad \text{and} \\
&k^{\zeta}(\bfx, \bfx') \!=\!\!\!\!\! \sum_{(i,j) \in \{1, 2\}^2}\!\!\!\! (-1)^{i+j}\frac{\partial^2 k(\bfx, \bfx')_{i,j}}{\partial x^{(3-i)}\partial x^{(3-j)}}.
\end{align}
\end{restatable}

We provide the proof for \cref{prop:divergenceofagp} in \cref{app:div-vort-gp}.

% \begin{restatable}{proposition}{posteriordivergencehelmholtz}
% Let $\tilde{F}$ be the ocean current vector field, modeled by a GP prior with mean $0$ and kernel function $k_{\text{Helm}}$ as in \cref{prop:helm-prior}. Assuming the likelihood model in \cref{eqn:likelihood}, the divergence and vorticity conditional on the data are GPs as well, \textcolor{red}{DRB:todo, define tilde divergences}
% \begin{align*}
% \label{eqn:postdiv}
%  \tilde{\delta} \mid D \sim \text{GP}(\mu_{\delta \mid D}, k_{\delta \mid D}) \\ 
%   \tilde{\zeta} \mid D \sim \text{GP}(\mu_{\zeta \mid D}, k_{\zeta \mid D})
% \end{align*}
% with $\mu_{\delta \mid D}, K_{\delta \mid D}, \mu_{\zeta \mid D}, K_{\zeta \mid D}$ as in \cref{app:postdivergence}.
% \end{restatable}

\textbf{Computational Cost.} Since the latitude and longitude outputs are correlated under the Helmholtz GP, it generally has a higher computational cost than the velocity GP. We establish that the extra cost is no worse than a small factor.

\begin{proposition}
\label{prop:comp-cost}
    Take $M$ training data points. Let $C_{vel}(M)$ and $C_{helm}(M)$ be the computational costs for evaluating the log marginal likelihood (\cref{eqn:log-marginal-likelihood}) via %practical
    Cholesky or QR factorization algorithms for the velocity GP and Helmholtz GP, respectively. If we assume worst-case scaling for these algorithms, 
    %\begin{equation*}
        $\lim_{M \rightarrow \infty} C_{helm}(M)/C_{vel}(M) \leq 4.$
    %\end{equation*}
\end{proposition}

\looseness=-1 The cost of computing the log marginal likelihood is dominated by the cost of solving the linear system $\Khtrtr^{-1}\Ytrain$ and computing the log determinant $|\Khtrtr|$. Both of these costs in turn arise primarily from the cost of factorizing 
$\Khtrtr$. Let $\mathrm{CF}(s)$ be the cost of factorizing a square matrix with $s$ rows with Cholesky or QR factorization. Due to the two (correlated) outputs, the cost of the Helmholtz GP is dominated by $\mathrm{CF}(2M)$. In the velocity GP, the two outputs are uncorrelated and can be handled separately, so the cost is dominated by $2 \mathrm{CF}(M)$. Therefore, $\lim_{M \rightarrow \infty} C_{helm}(M) / C_{vel}(M)\leq \mathrm{CF}(2M)/ (2 \mathrm{CF}(M))$.
When factorizing the matrix costs $\mathrm{CF}(s) \sim cs^p$ for $p \in (0,3], c > 0$, the result follows by \cref{eqn:asymptotic-ratio}. Standard Cholesky and QR factorization algorithms satisfy the condition with $p = 3$ in the worst case \citep[p. 164, 249]{golub2013matrix}.
%Since the cost of practical Cholesky and QR factorization algorithms are $\mathrm{CF}(s)~cs^3$ for $c=2/3$ and $c=4/3$ respectively \citep[p. 164, 249]{golub2013matrix}, the result follows.

In \cref{app:comp-cost} we provide similar computational results for the task of prediction and discuss nuances of how any of these results may change in the presence of special structure.
%computational costs further, by deriving a similar cost bound for the computation of Helmholtz posteriors at new test locations.

\section{Advantages of the Helmholtz prior}\label{sec:advantages}

We next describe three key advantages of the Helmholtz GP prior over the velocity GP prior: (1) more physically realistic prior assumptions reflecting the relative magnitude and length scales of the divergence and vorticity, (2) more physically realistic correlation of the longitudinal and latitudinal velocities of current at any point, and (3) equivariance to reference frame.

\textbf{Prior magnitude of the divergence and vorticity.} In real ocean flows, except at small-scale frontal features, the divergence is known a priori to have both a substantively different magnitude and different length scale relative to the vorticity \citep{barkan2019}. In what follows, we argue that the Helmholtz GP is able
to capture the relative contributions of divergence and vorticity directly in the prior -- whereas the velocity GP does not have this direct control.

%We first discuss the magnitude point here% and then turn to the length scale point below.
On the magnitude side, the divergence is known to contribute much less to the current than the vorticity contributes.
%In a perfect two-dimensional flow, the divergence is exactly zero.
%Non-zero divergence at the surface of the real ocean is observed only because ocean flow is slightly three-dimensional.  However, the ocean is much wider than it is deep, and so the divergence is typically of smaller order than the vorticity. \textcolor{red}{DRB: Is the width relative to depth really the reason we expect more divergence than vorticity? I don't see how this immediately follows, so would like a bit of clarification if so.}
If we consider a \sehelmgp, the signal variance hyperparameters $\sigma_\Phi^2$ and $\sigma_\Psi^2$ control the magnitude of $\Phi$ and $\Psi$;
as a direct consequence of the linearity of the divergence $\delta$ and vorticity $\zeta$ in $\Phi$ and $\Psi$ (\Cref{eqn:divergence-standard,eqn:vorticity-standard}), the marginal variances of $\delta$ and $\zeta$ scale linearly with $\sigma_\Phi^2$ and $\sigma_\Psi^2$, respectively. The model can therefore directly and separately control the magnitude of the rotational and divergence components. A similar argument can be applied to more general Helmholtz GPs with parameters controlling the magnitude of $\Phi$ and $\Psi$.

%Moreover, we see from the result just below that the induced Helmholtz GP kernels on the divergence and vorticity will in turn have the output magnitude controlled by these signal variances.
%
%\begin{proposition} \label{prop:helmholtz_gp_magnitude}
%Let $\Phi \sim \GP(0,k_{\Phi})$ and $\Psi \sim \GP(0,k_{\Psi})$. Let $\khelm^\delta$ and $\khelm^\zeta$ be the induced Helmholtz GP kernels on the divergence and vorticity. \textcolor{red}{TB: refer to some earlier results/equations} Then $\khelm^\delta$ results from a linear operator applied to $k_{\Phi}$ and has no dependence on $k_{\Psi}$. Likewise, $\khelm^\zeta$ results from a linear operator applied to $k_{\Psi}$ and has no dependence on $k_{\Phi}$. 
%\end{proposition}

%\begin{proposition} \label{prop:vel_gp_magnitude}
%Let $F^{(1)} \sim \GP(0,k^{(1)})$ and $F^{(2)} \sim \GP(0,k^{(2)})$. Let $\kvel^\delta$ and $\kvel^\zeta$ be the induced Helmholtz GP kernels on the divergence and vorticity. \textcolor{red}{TB: refer to some earlier results/equations} Then both $\kvel^\delta$ and $\kvel^\zeta$ result from linear operators with nontrivial dependence on both $k^{(1)}$ and $k^{(2)}$.
%\end{proposition}
% \setcounter{footnote}{\value{footnote}+1}

%\footnotetext[\value{footnote}]{We say a kernel $k$ is isotropic if there exists some $\kappa: \R^+\rightarrow \R$ such that for any $\bfx$ and $\bfx'$ in $\R^2,\ k(\bfx, \bfx')=\kappa(\|\bfx-\bfx'\|).$  For example, squared exponential kernels are isotropic.}

By contrast, the velocity GP provides no such control.
In fact, for any isotropic choice of $k^{(1)}$ and $k^{(2)}$ we show that the resulting velocity GP must assume the same variance on the divergence and vorticity in the prior.
%   As described above, this assumption is inappropriate for ocean currents.

\begin{restatable}{proposition}{propEqualVarDivVort}\label{prop:equal-var-div-vort}
Let $k^{(1)}$ and $k^{(2)}$ be isotropic kernels with inputs $\bfx, \bfx' \in \R^2$. Take $F^{(1)} \sim \mathcal{GP}(0, k^{(1)})$ and $F^{(2)}\sim \mathcal{GP}(0, k^{(2)})$ independent. Suppose  $k^{(1)}$ and $k^{(2)}$ are such that $F^{(1)}, F^{(2)}$ have almost surely continuously differentiable sample paths.
Let $\delta$ and $\zeta$ be defined as in \cref{eqn:divergence-standard,eqn:vorticity-standard}.
Then for any $\bfx,\mathrm{Var}[\delta(\bfx)] = \mathrm{Var}[\zeta(\bfx)]$.
\end{restatable}
The proof of \cref{prop:equal-var-div-vort} appears in \cref{app:sec-equal-marginal-var}.

% %%%%%%%%%%%%%%%%%
\begin{table*}[!ht]
\vskip 0.15in
\caption{Green identifies the lowest RMSE. Dark green indicates the RMSE is at least two times smaller than the next best model.}
\begin{center}
\begin{small}
\begin{sc}
\begin{tabular}{c|| c | c| c|| c| c| c || c| c | c}
& \multicolumn{3}{c ||}{Velocity $F$} & \multicolumn{3}{c||}{Divergence $\delta$} & \multicolumn{3}{c}{Vorticity $\zeta$} \\
\toprule
 & \makecell{Helm} & \makecell{Vel} & \makecell{D-hnn} & \makecell{Helm} &  \makecell{Vel} & \makecell{D-hnn} & \makecell{Helm}   & \makecell{Vel} 
   & \makecell{D-hnn} \\
%\midrule
\hline \hline
\makecell{vortex} & \darkcell0.24 & 0.72 & 0.54 & \darkcell0.0 & 0.22 & 0.87 & \lightcell0.77 & 1.05 & 1.03 \\ \hline
%\makecell{vortex w/ straight current} & 0.30 & 0.49 & \lightcell0.28 & \darkcell0.0 & 0.57 & 0.51& \lightcell0.88   & 1.08& 1.91 \\ \hline
\makecell{small divergence} & 1.11 & 1.25 & \lightcell0.67 & 2.62 & \lightcell1.45  & 4.14 & \darkcell0.0  & 1.07 & 0.31 \\ \hline
\makecell{medium divergence} & \lightcell0.17 & 0.19 & 0.55 & 0.39 & \lightcell0.33 & 1.32 & \darkcell0.05 & 0.12 &  0.38\\ \hline
\makecell{big divergence} & \darkcell0.04 & 0.10 & 0.19 & \lightcell0.05 & 0.12 & 0.27 & \darkcell0.00 & 0.10 & 0.11  \\ \hline
\makecell{duffing w/ small divergence} & \darkcell0.96 & 2.05 & 2.14 & \lightcell0.94 & 0.95 & 1.89 & \lightcell1.40 & 2.28 & 2.64 \\ \hline
\makecell{duffing w/ medium divergence} & \darkcell0.19 & 0.60 & 1.65 & \darkcell0.14 & 0.50 & 1.15 & \lightcell0.24 & 0.26 & 2.39 \\ \hline
\makecell{duffing w/ big divergence} & 0.41 & \lightcell0.22 & 1.63 & \darkcell0.08 & 0.17 & 1.10 & 0.48 & \darkcell0.16 & 2.41 \\ \hline
\bottomrule
%\hline \hline
%\makecell{vortex} & \darkcell0.0556 & 0.5138 & 0.2948 & \darkcell0.0 & 0.0488 & 0.7531 & \lightcell0.5553 & 1.0999 & 1.0520 \\ \hline
%\makecell{continuous current \\ w/ vortex} & 0.0906 & 0.2422 & \lightcell0.0761 & \darkcell0.0 & 0.3204 & 0.2615& \lightcell0.7697   & 1.1620   & 3.6360 \\ \hline
%\makecell{small divergence} & 1.2263 & 1.5851 & \lightcell0.4531 & 6.8784 & \darkcell2.1250  & 17.1530 & \darkcell0.0  & 1.1537 & 0.0987 \\ \hline
%\makecell{medium divergence} & \lightcell0.0280 & 0.0344 & 0.3017 & 0.1519 & \lightcell0.1063 & 1.7428 & \darkcell0.0029 & 0.0141 &  0.1426\\ \hline
%\makecell{big divergence} & \darkcell0.0020 & 0.0108 & 0.0357 & \lightcell0.0028 & 0.0155 & 0.0717 & \darkcell0.0 & 0.0106 & 0.0113  \\ \hline
%\makecell{duffing w/ \\ small divergence} & \darkcell0.9236 & 4.2070 & 4.5906 & \lightcell0.8815 & 0.9199 & 3.5554 & \lightcell1.9602 & 5.1951 & 6.9440 \\ \hline
%\makecell{duffing w/ \\ medium divergence} & \darkcell0.0832 & 0.3629 & 2.7370 & \darkcell0.0183 & 0.2500 & 1.3161 & 0.2738 & \darkcell0.0655 & 5.7267 \\ \hline
%\makecell{duffing w/ \\ big divergence} & 0.2028 & \lightcell0.0511 & 2.6441 & \darkcell0.0065 & 0.0282 & 1.2195 & 0.2884 & \darkcell0.0261 & 5.8246 \\ \hline
%\bottomrule
\end{tabular}
\end{sc}
\end{small}
\end{center}
\vskip -0.1in
\label{tab:mse}
\end{table*}
%%%%%%%%%%%%%%%%%%%%%

\textbf{Prior length scales of the divergence and vorticity.} The divergence and vorticity are also known to operate
on very different length scales in real ocean flows.  Vorticity operates over long length scales, whereas divergence tends to be more localized. Similarly to the argument above, the Helmholtz GP allows control over the length scale in each of its components, which directly control the length scale of the divergence and vorticity. 
In particular, if $k_\Phi(\bfx, \bfx') = \kappa(\|\bfx-\bfx'\|/\ell)$, for some $\kappa:\R^+\rightarrow \R,$ then $k^{\delta}(\bfx, \bfx') = \ell^{-4}\eta(\|\bfx-\bfx'\|/\ell)$ for another function $\eta:\R^+\rightarrow \R$ that does not depend on $\ell$; see \Cref{sec:appendix_length_scale_relationship}.
By contrast, the velocity GP requires setting the length scales of its priors in tandem, and it is unclear how to control the length scales of the divergence and vorticity.

\looseness=-1 \textbf{Correlations between longitudinal and latitudinal current components.} Ocean flows have correlation between longitudinal and latitudinal velocities at single locations and across different locations.   For instance, within a vortex, the longitudinal velocity at six o'clock (relative to the center of the vortex) coincides with a zero latitudinal velocity at that same location, and also with a non-zero latitudinal velocity at three o'clock. Likewise, the occurrence of divergence at a given point induces a latitudinal velocity at six o'clock (with no longitudinal velocity), as well as a non-zero longitudinal velocity at three o'clock (with no latitudinal velocity).  By modeling the divergence and vorticity directly, the Helmholtz prior induces 
correlation between the longitudinal and latitudinal components,
which is absent in the velocity GP prior. 

%It is important to observe that in this study, the behavior at a single point is considered in the context of small neighborhoods around those points. This is necessary to allow us to talk about correlations between collections of vectors.

\textbf{Equivariance to reference frame.}
We now show the Helmholtz GP is agnostic to the choice of reference frame defined by longitude and latitude, but the velocity GP is not.

\begin{restatable}{proposition}{propEquivariance}\label{prop:Equivariance}
%Let $\Phi \sim \GP(0,k_{\Phi})$ and $\Psi \sim \GP(0,k_{\Psi})$ for any isotropic $k_{\Phi}$ and $k_{\Psi}$.
Let $\mu_{F\!\mid\!D}(\Xtest, \Xtrain, \Ytrain)$ denote the Helmholtz GP posterior mean for training data $\Xtrain, \Ytrain$ and test coordinates $\Xtest$,
and let $R$ be an operator rotating coordinates and velocities about $(0,0)$.
Then 
\begin{equation}\label{eqn:helm_equivariance}
\mu_{F\!\mid \!D}(R\Xtest, R\Xtrain, R\Ytrain) \!=\! R\mu_{F\!\mid \! D}(\Xtest, \Xtrain, \Ytrain).
\end{equation}
\end{restatable}

\Cref{prop:Equivariance} formalizes that it is equivalent to either (1) rotate the data and then predict using the Helmholtz GP or (2) predict using the Helmholtz GP and rotate the prediction. The proof of \cref{prop:Equivariance} is given in \cref{app:equivariancehelmgp}.

The equivariance property in \cref{prop:Equivariance} need not hold in general for velocity GP priors. 

\begin{restatable}{proposition}{propEquivarianceVelGP}\label{prop:EquivarianceVelGP}
For isotropic component kernels and zero prior mean, the velocity GP is reference-frame equivariant if and only if the kernels for each component are equal.
\end{restatable}

See \cref{app:equivariancevelgp} for the proof. 
Intuitively, if the kernels are equal, 
both the prior and likelihood (and therefore the entire model) are isotropic, and so there is no special reference frame.
For intuition in the other direction, consider the following counterexample. Let $F^{(1)} \sim \GP(0,k^{(1)})$ for some non-identically zero isotropic $k^{(1)}$. And $F^{(2)} = 0$, a trivial isotropic prior.
Take any data $\Xtrain$, $\Ytrain$, $\Xtest$, and a positive (counterclockwise) $90^{\circ}$ rotation. 
Due to the trivial prior in the second coordinate, the posterior in the second coordinate has mean $\mu^{(2)}_{F\!\mid\!D}(\Xtest, \Xtrain, \Ytrain) = 0$.
If we rotate the data first, the posterior in the second coordinate is still zero, and generally the posterior in the first coordinate will be nontrivial.
But if we first compute the posterior and then rotate the mean, the posterior in the first coordinate will now be zero instead, and the posterior in the second coordinate will be nonzero.
Therefore, the equality in \cref{eqn:helm_equivariance} will not hold for this velocity GP.

\section{Experimental Results}\label{sec:experiments}
% \documentclass[../main.tex]{subfiles}
% \graphicspath{{\subfix{../images/}}}
% \begin{document}

\begin{figure*}[t]
    \centering
    \includegraphics[trim={0 0.25cm 0 0.08cm},clip, draft=False, width=\textwidth]{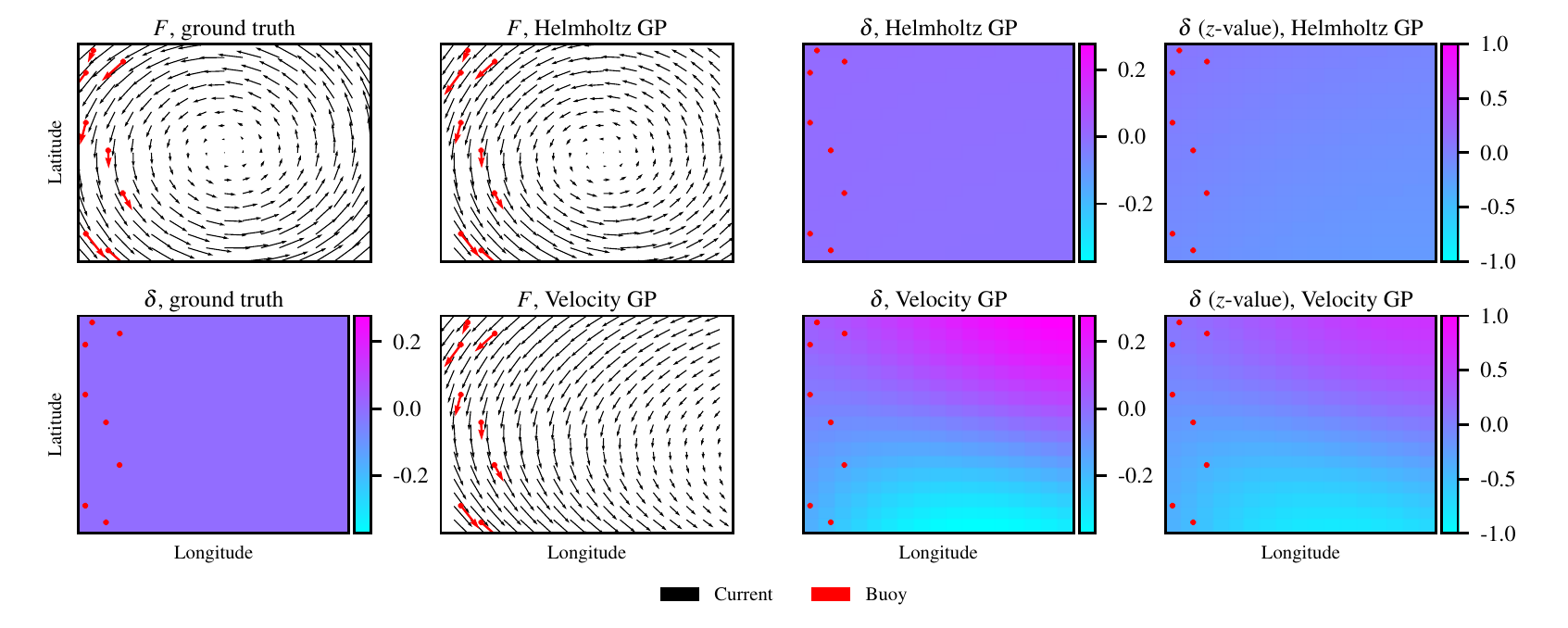}
    \caption{First column: ground truth predictions (upper) and divergence (lower). Second column: current predictions. Third column: divergence estimates. Fourth column: posterior divergence z-values.}
    \label{fig:vortex}
\end{figure*}

 We next empirically compare the $\sehelmgp$ and $\sevelgp$. We find that the $\sehelmgp$ yields better current predictions as well as better divergence (and vorticity) identification, across a variety of simulated and real data sets.
 We also compare to dissipative Hamiltonian neural networks (D-HNNs) \citep{greydanus2022dissipative} in \cref{tab:mse} and \cref{app:experiment} but find that the GP methods generally perform better.

\textbf{Data.} The real datasets we use consist of drifter traces of GPS-tagged buoys in the ocean. While oceanographers have some knowledge that allows a rough assessment of the real data, only in simulations do we have access to ground truth currents, divergences, and vorticities. Therefore, we run a variety of simulations with current vector fields reflecting known common ocean behaviors. We simulate buoy trajectories by initializing buoys at a starting point and allowing the current field to drive their motion. See \cref{app:experiment} for more details of our setup in each specific simulation.

\textbf{Performance.}
In what follows, we emphasize visual comparisons both because the distinctions between methods are generally clear and because it is illuminating to visually pick out behaviors of interest. We also provide root mean squared error (RMSE) comparisons in \cref{tab:mse}. However, we note that the RMSE can be expected to vary as one changes either the ocean area or the densities (or more generally locations) of test points, and both of these choices must always be somewhat arbitrary.

 %Beyond the $\sevelgp$, in \cref{app:experiment} and \cref{tab:mse}, we also compare with dissipative Hamiltonian neural networks (D-HNNs) \citep{greydanus2022dissipative}. Generally we find worse performance for D-HNNs compared to either GP method.

 % For the purposes of \cref{tab:mse}, we choose the test points to lie on a rectangular grid that is somewhat larger than the training data.

\textbf{Algorithmic details.} In our comparisons, each full model including an $\sehelmgp$ prior or an $\sevelgp$ prior has five hyperparameters: $\sigma^2_{\Phi}$, $\ell_{\Phi}$, $\sigma^2_{\Psi}$, $\ell_{\Psi}$, $\obsvar$ and $\ell_1, \sigmaFu^2, \ell_{2}, \sigmaFv^2, \obsvar$, respectively. In each case, we fit the log of the hyperparameters by maximizing the marginal likelihood using Adam \citep{kingma2015adam}. We optimize in the log-scale and then exponentiate the optimal values to ensure positivity of the hyperparameters. We run each experiment until the log marginal likelihood changes by less than $10^{-4}$, which occurs in fewer than 2000 iterations for all experiments. With the exception of the GLAD data (which presents special difficulties that we describe in \cref{app:experiments-glad}), we found that results were not sensitive to initialization. To train the D-HNN, we ran the code from \citet{greydanus2022dissipative}. More algorithmic details are provided individually for each experiment in \cref{app:experiment}.

\begin{figure*}[t]
    \centering
    \includegraphics[trim={0 0.24cm 0 0.06cm}, clip, width=\textwidth]{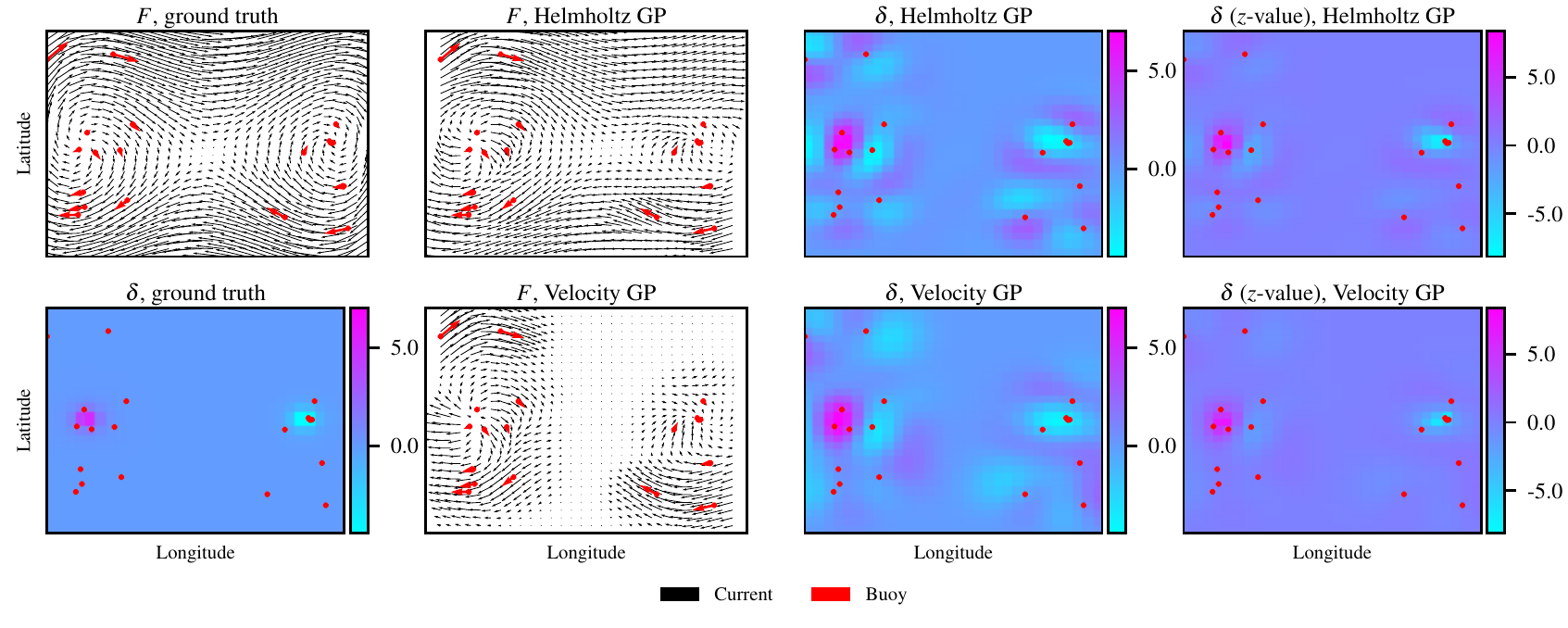}
    \caption{First column: ground truth predictions (upper) and divergence (lower). Second column: current predictions. Third column: divergence estimates. Fourth column: posterior divergence z-values.}
    \label{fig:duffing}
\end{figure*}

\subsection{Simulated experiments}

We focus on simulations of key ocean behaviors of interest to oceanographers: vortices, concentrated divergences, and combinations thereof. As a summary across simulation experiments in \cref{tab:mse}, we see that -- for predictions, divergence estimates, and vorticity estimates -- the $\sehelmgp$ is most accurate by RMSE on a majority of tasks, often by a substantial factor. We next examine individual experiments more closely; all simulated experiments are described in detail in \cref{app:sec-synthetic-experiments}.

\textbf{Vortex with zero divergence.} First, we consider a vortex with no divergence. The $\sehelmgp$ is better at predicting the current and identifying the lack of divergence.

In a vortex, water particles rotate around a central point. The black arrows in the upper left plot of \cref{fig:vortex} show the ground-truth vector field at test points, with longitude on the horizontal axis and latitude on the vertical axis. Red arrows show our simulated buoy trajectories, which we give to all methods as training data. See \cref{app:subsectionvortex} for additional details of the setup and results. 

The second column in \cref{fig:vortex} shows predictions from the $\sehelmgp$ (upper) and $\sevelgp$ (lower) at the test points. The red arrows are still the training data. Despite having access to data only from one side of the vortex, the $\sehelmgp$ is able to reconstruct the full vortex. The $\sevelgp$ is not. 

The ground truth divergence is identically 0 throughout the domain and depicted in the lower left plot. The third column shows divergence estimates from the $\sehelmgp$ (upper) and $\sevelgp$ (lower) on the same color scale. The fourth column helps us understand if either posterior is reporting a nonzero divergence. In particular, for each point we plot a ``z-value'': precisely, the posterior mean at that point divided by the posterior standard deviation. One might, for instance, conclude that a method has detected a nonzero divergence if the magnitude of the z-value is greater than 1.
From the third column, we conclude that the $\sehelmgp$ estimate of the divergence is closer to the ground truth of zero than the $\sevelgp$. From the fourth column, we see that neither method concludes nonzero divergence, but the $\sehelmgp$ posterior is more concentrated near zero.

% \textbf{Vortex adjacent to a straight current.}
% In \cref{app:sec-vortex-current}, we consider a vortex adjacent to a current that runs across the bottom of our view. The ground truth divergence is zero. The $\sehelmgp$ correctly estimates no divergence. The $\sevelgp$ erroneously and very confidently reports a divergence -- with z-values reaching above 6 in magnitude.

\textbf{Duffing oscillator with areas of concentrated divergence.} We next simulate a classic example called a Duffing oscillator, and we add two areas of divergence; the ground truth current appears in the upper left plot of \cref{fig:duffing}, and the ground truth divergence appears in the lower left. The simulated buoy trajectories appear in red. See \cref{app:sec-duffing-divergence} for further details on setup and results.

We see in the second column that the $\sehelmgp$ (upper) is largely able to reconstruct the two vortices in the Duffing oscillator (upper left), though it struggles with the upper right current. By contrast, the $\sevelgp$ does not connect the currents continuously across the two sides of the space, in disagreement with conservation of momentum. 

Again, the third column depicts divergence estimates from both methods, and the fourth column depicts z-values. In this case, both methods accurately recover the two areas of divergence. In \cref{app:sec-duffing-divergence,fig:smallduffing,fig:midduffing,fig:bigduffing}, we experiment with smaller and larger areas of divergence with the Duffing oscillator. In \cref{app:sec-divergence}, we isolate areas of divergence without the Duffing oscillator. Across the six experiments involving regions of divergence, the $\sehelmgp$ typically outperforms the $\sevelgp$ in detecting these regions -- often by a substantial margin, as shown in \cref{tab:mse}. In these same experiments, the $\sehelmgp$ similarly outperforms the $\sevelgp$ at predicting the velocity field.

%\textbf{A discontinuous velocity field.}
% In \cref{app:sec-vortex-current}, we consider a vortex adjacent to a current that runs across the bottom of our view.
% This example has divergence concentrated along a line, leading to a discontinuous vector field.
% Since the divergence and vorticity fields are not defined everywhere, it is not obvious what the desired behavior is.
% But the $\sehelmgp$ provides a reasonable approximation to the underlying discontinuous  velocity field.

\textbf{A note on vorticity.} Although we have not focused on vorticity estimation in the main text, 
generally we find superior performance on this task from the $\sehelmgp$ relative to the $\sevelgp$ (and D-HNNs), similar to divergence estimation.
See the righthand side of \cref{tab:mse} for an RMSE comparison, and see \cref{app:experiment} for a visual comparison. 
%The RMSEs in \cref{tab:mse} are always better for the $\sehelmgp$, except for one experiment.
%confirm this finding.
For example, the $\sehelmgp$ can predict zero vorticity when there is no vorticity, whereas the $\sevelgp$ and D-HNN fail in this task (\cref{fig:smalldivapp,fig:middivapp,fig:bigdivapp}).

\begin{figure*}[t]
    \centering
    \includegraphics[trim={0 0.25cm 0 0.06cm},clip,width=\textwidth]{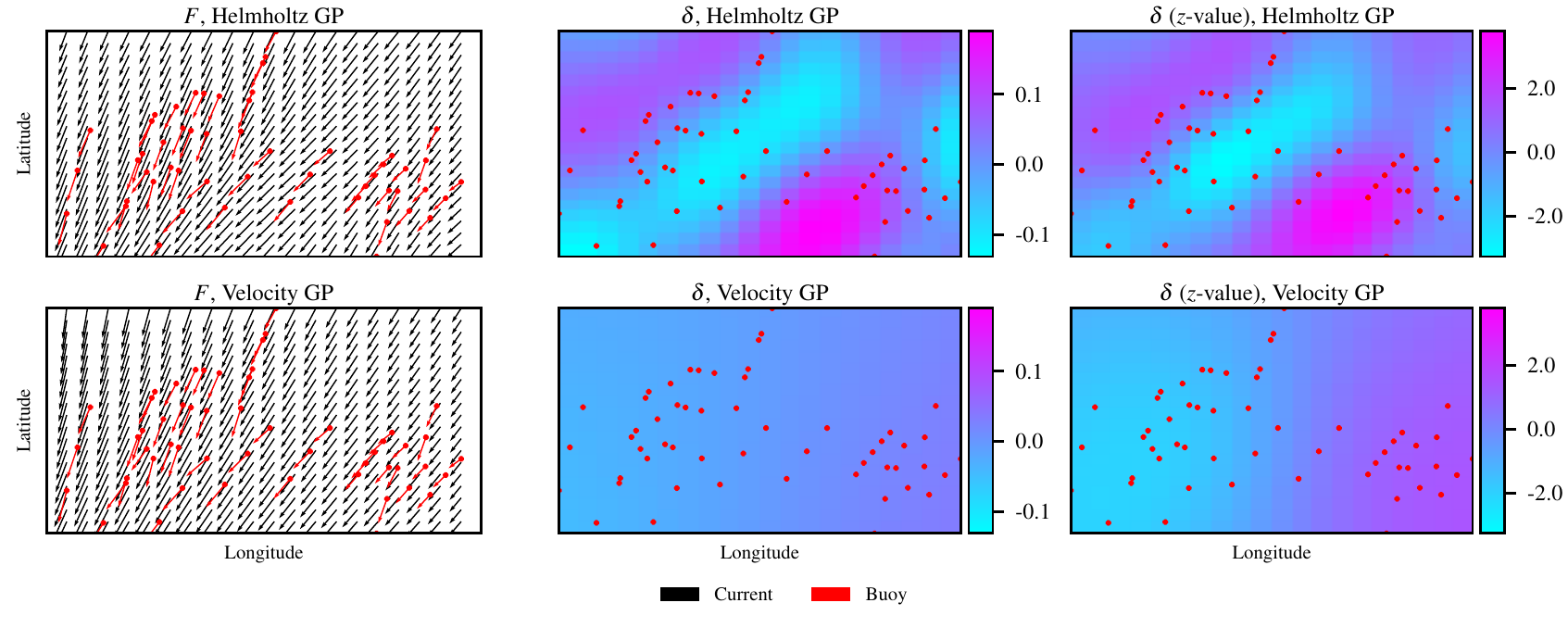}
    \caption{First column: current predictions. Second column: divergence estimates. Third column: posterior divergence z-values.}
    \label{fig:laser}
\end{figure*}

\subsection{Real-data experiments}

Although ground truth currents and divergences are not available for real data, we can still assess performance against oceanographers' expert knowledge.

\textbf{LASER data.} The LAgrangian Submesoscale ExpeRiment \citep{dasaroLASERdata2017} was performed in the Gulf of Mexico in 2016. The full dataset spans two winter months and more than 1000 buoys; see \cref{app:experiments-laser} for details. We focus on a particular spatial area and 2-hour time span in which oceanographers expect to see a particular convergent front. Time information within this range is discarded. We are left with 19 buoys. When recorded spatial coordinates overlap across data points, we observe that both the $\sehelmgp$ and the $\sevelgp$ treat all data as noise. While this issue merits further investigation and model development, for now we downsample across time to form the final data sets here.

The left column of \cref{fig:laser} shows the current predictions using the $\sehelmgp$ (upper) and $\sevelgp$ (lower). Red arrows show the observed buoy data, with 55 total observations across all buoys. The black arrows show the current posterior means at test locations. The two sets of predictions are qualitatively very similar.

The second column shows the divergence predictions for the $\sehelmgp$ (upper) and $\sevelgp$ (lower); the third column shows the z-values for the respective posterior distributions. The $\sehelmgp$ predicts a negative divergence area (light blue diagonal) that agrees with the area where oceanographers expect a convergent front. By contrast, the $\sevelgp$ does not identify any divergence.

We find that the discrepancy between the $\sehelmgp$ and the $\sevelgp$ observed in \cref{fig:laser} depends on the amount of data. When we double the amount of data from that in \cref{fig:laser} (by downsampling less), we find that both methods are able to recover the same convergent front; see \cref{app:experiments-laser,fig:laserfull}. This finding then also corroborates that the convergent front in fact does exist and looks like the posterior detection by the $\sehelmgp$ in \cref{fig:laser}. In that way, the finding further lends support to the superior performance of the $\sehelmgp$ in \cref{fig:laser}.

\textbf{GLAD data.} The Grand Lagrangian Deployment (GLAD) experiment \citep{ozgokmen2012carthe} was conducted  near the Deepwater Horizon site and Louisiana coast in July 2012. The full dataset consists of over 300 buoys. Unlike the winter LASER data, the summer GLAD data faces additional challenges from regular oscillations due to wind (rather than current); see \cref{app:experiments-glad} for more details. Rather than account for these oscillations, for the moment we ameliorate their effect by downsampling considerably both in time and across buoys. We are left with 85 observations.

The current predictions between the $\sehelmgp$ and $\sevelgp$ are generally quite similar (\cref{fig:gladsparse,fig:gladfull}). Although in this case we do not have any ground truth from oceanographers in advance, some features of the $\sehelmgp$ prediction seem more realistic from physical intuition: namely, the downturn in the lower left region of the plot and the vortex in the upper left region of the plot. We also note that the $\sehelmgp$ and $\sevelgp$ predict substantially different divergences: respectively, zero and nonzero.

To check that our $\sehelmgp$ method is computationally reasonable, we run on a larger subset of the GLAD data with 1200 data points. We find that it takes less than 10 minutes to optimize the hyperparameters, form predictions, and estimate the divergences at the test points.

% \end{document}

\section{Discussion and Future Work}
% \documentclass[../main.tex]{subfiles}
% \graphicspath{{\subfix{../images/}}}
% \begin{document}

We have demonstrated the conceptual and empirical advantages of our Helmholtz GP relative to the velocity GP. A number of challenges remain. While we have focused on the purely spatial case for modularity, in general we expect currents to change over time, and it remains to extend our method to the spatiotemporal case -- which we believe should be straightforward. Moreover, \citet{goncalves2019naive,lodise2020investigating} used more complex kernels with two length scales per dimension in their spatiotemporal extension of the velocity GP. From known ocean dynamics, we expect that in fact two length scales would be appropriate in the Helmholtz GP for modeling the vorticity -- but unnecessary for the divergence.
While our independent Gaussian noise model is standard in spatiotemporal modeling -- and shared by \citet{goncalves2019naive,lodise2020investigating} -- our real-data experimentation suggests that a better noise model might account for short-term wind oscillations and other noise patterns distinctive to oceans. 
%\newpage

\section*{Acknowledgements}

The authors are grateful to the Office of Naval Research for partial support under grant N00014-20-1-2023 (MURI ML-SCOPE). Renato Berlinghieri and Tamara Broderick were also supported in part by an NSF CAREER Award, and Tamay Özgökmen was supported in part by RSMAES, the University of Miami.

\bibliography{references}

\begin{thebibliography}{38}
\providecommand{\natexlab}[1]{#1}
\providecommand{\url}[1]{\texttt{#1}}
\expandafter\ifx\csname urlstyle\endcsname\relax
  \providecommand{\doi}[1]{doi: #1}\else
  \providecommand{\doi}{doi: \begingroup \urlstyle{rm}\Url}\fi

\bibitem[Adler(1981)]{adler1981geometry}
Adler, R.
\newblock \emph{The Geometry of Random Fields}.
\newblock Society for Industrial and Applied Mathematics, 1981.

\bibitem[Alvarez et~al.(2012)Alvarez, Rosasco, and
  Lawrence]{alvarez2012kernels}
Alvarez, M.~A., Rosasco, L., and Lawrence, N.~D.
\newblock Kernels for vector-valued functions: A review.
\newblock \emph{Foundations and Trends in Machine Learning}, 4\penalty0
  (3):\penalty0 195--266, 2012.

\bibitem[Arfken \& Weber(1999)Arfken and Weber]{arfken1999mathematical}
Arfken, G.~B. and Weber, H.~J.
\newblock Mathematical methods for physicists, 1999.

\bibitem[Berta et~al.(2015)Berta, Griffa, Magaldi, {\"O}zg{\"o}kmen, Poje,
  Haza, and Olascoaga]{berta2015improved}
Berta, M., Griffa, A., Magaldi, M.~G., {\"O}zg{\"o}kmen, T.~M., Poje, A.~C.,
  Haza, A.~C., and Olascoaga, M.~J.
\newblock Improved surface velocity and trajectory estimates in the {G}ulf of
  {M}exico from blended satellite altimetry and drifter data.
\newblock \emph{Journal of Atmospheric and Oceanic Technology}, 32\penalty0
  (10):\penalty0 1880--1901, 2015.

\bibitem[Bhatia et~al.(2013)Bhatia, Norgard, Pascucci, and
  Bremer]{bhatia2013helmoltzhodge}
Bhatia, H., Norgard, G., Pascucci, V., and Bremer, P.-T.
\newblock The {H}elmholtz-{H}odge decomposition—a survey.
\newblock \emph{IEEE Transactions on Visualization and Computer Graphics},
  19\penalty0 (8):\penalty0 1386--1404, 2013.

\bibitem[B{\"u}hler et~al.(2014)B{\"u}hler, Callies, and
  Ferrari]{buhler2014wave}
B{\"u}hler, O., Callies, J., and Ferrari, R.
\newblock Wave--vortex decomposition of one-dimensional ship-track data.
\newblock \emph{Journal of Fluid Mechanics}, 756:\penalty0 1007--1026, 2014.

\bibitem[Caballero et~al.(2020)Caballero, Mulet, Ayoub, Manso-Narvarte, Davila,
  Boone, Toublanc, and Rubio]{caballero2020integration}
Caballero, A., Mulet, S., Ayoub, N., Manso-Narvarte, I., Davila, X., Boone, C.,
  Toublanc, F., and Rubio, A.
\newblock Integration of {HF} radar observations for an enhanced coastal mean
  dynamic topography.
\newblock \emph{Frontiers in Marine Science}, pp.\  1005, 2020.

\bibitem[Chavanne \& Klein(2010)Chavanne and Klein]{chavanne2010can}
Chavanne, C.~P. and Klein, P.
\newblock Can oceanic submesoscale processes be observed with satellite
  altimetry?
\newblock \emph{Geophysical Research Letters}, 37\penalty0 (22), 2010.

\bibitem[Contreras et~al.(2022)Contreras, Renault, and
  Marchesiello]{contreras2022understanding}
Contreras, M., Renault, L., and Marchesiello, P.
\newblock Understanding energy pathways in the {Gulf Stream}.
\newblock \emph{Journal of Physical Oceanography}, 2022.

\bibitem[D'Asaro et~al.(2017)D'Asaro, Guigand, Haza, Huntley, Novelli,
  Özgökmen, and Ryan]{dasaroLASERdata2017}
D'Asaro, E., Guigand, C., Haza, A., Huntley, H., Novelli, G., Özgökmen, T.,
  and Ryan, E.
\newblock Lagrangian submesoscale experiment (laser) surface drifters,
  interpolated to 15-minute intervals, 2017.
\newblock URL
  \url{https://data.gulfresearchinitiative.org/data/R4.x265.237:0001}.

\bibitem[D’Asaro et~al.(2018)D’Asaro, Shcherbina, Klymak, Molemaker,
  Novelli, Guigand, Haza, Haus, Ryan, Jacobs, Huntley, Laxague, Chen, Judt,
  McWilliams, Barkan, Kirwan, Poje, and Özgökmen]{dasaro2018ocean}
D’Asaro, E.~A., Shcherbina, A.~Y., Klymak, J.~M., Molemaker, J., Novelli, G.,
  Guigand, C.~M., Haza, A.~C., Haus, B.~K., Ryan, E.~H., Jacobs, G.~A.,
  Huntley, H.~S., Laxague, N. J.~M., Chen, S., Judt, F., McWilliams, J.~C.,
  Barkan, R., Kirwan, A.~D., Poje, A.~C., and Özgökmen, T.~M.
\newblock Ocean convergence and the dispersion of flotsam.
\newblock \emph{Proceedings of the National Academy of Sciences}, 115\penalty0
  (6):\penalty0 1162--1167, 2018.

\bibitem[Fu \& Ferrari(2008)Fu and Ferrari]{fu2008observing}
Fu, L.-L. and Ferrari, R.
\newblock Observing oceanic submesoscale processes from space.
\newblock \emph{Eos, Transactions American Geophysical Union}, 89\penalty0
  (48):\penalty0 488--488, 2008.

\bibitem[Fu et~al.(2010)Fu, Chelton, Le~Traon, and Morrow]{fu2010eddy}
Fu, L.-L., Chelton, D.~B., Le~Traon, P.-Y., and Morrow, R.
\newblock Eddy dynamics from satellite altimetry.
\newblock \emph{Oceanography}, 23\penalty0 (4):\penalty0 14--25, 2010.

\bibitem[Fuselier(2007)]{fuselier2007refined}
Fuselier, Jr., E.~J.
\newblock \emph{Refined error estimates for matrix-valued radial basis
  functions}.
\newblock PhD thesis, Texas A\&M University, 2007.

\bibitem[Golub \& Van~Loan(2013)Golub and Van~Loan]{golub2013matrix}
Golub, G.~H. and Van~Loan, C.~F.
\newblock \emph{Matrix computations}.
\newblock JHU press, 2013.

\bibitem[Gonçalves et~al.(2019)Gonçalves, Iskandarani, Özgökmen, and
  Thacker]{goncalves2019naive}
Gonçalves, R.~C., Iskandarani, M., Özgökmen, T., and Thacker, W.~C.
\newblock Reconstruction of submesoscale velocity field from surface drifters.
\newblock \emph{Journal of Physical Oceanography}, 49\penalty0 (4), 2019.

\bibitem[Greydanus \& Sosanya(2022)Greydanus and
  Sosanya]{greydanus2022dissipative}
Greydanus, S. and Sosanya, A.
\newblock Dissipative {H}amiltonian neural networks: Learning dissipative and
  conservative dynamics separately.
\newblock \emph{arXiv preprint arXiv:2201.10085}, 2022.

\bibitem[Greydanus et~al.(2019)Greydanus, Dzamba, and
  Yosinski]{greydanus2019hamiltonian}
Greydanus, S., Dzamba, M., and Yosinski, J.
\newblock Hamiltonian neural networks.
\newblock \emph{Advances in Neural Information Processing Systems}, 32, 2019.

\bibitem[Han \& Huang(2020)Han and Huang]{lei2020helm2}
Han, L. and Huang, R.~X.
\newblock Using the {H}elmholtz decomposition to define the {Indian Ocean}
  meridional overturning streamfunction.
\newblock \emph{Journal of Physical Oceanography}, 50\penalty0 (3), 2020.

\bibitem[Haza et~al.(2018)Haza, D’Asaro, Chang, Chen, Curcic, Guigand,
  Huntley, Jacobs, Novelli, Özgökmen, Poje, Ryan, and
  Shcherbina]{haza2018laser}
Haza, A.~C., D’Asaro, E., Chang, H., Chen, S., Curcic, M., Guigand, C.,
  Huntley, H.~S., Jacobs, G., Novelli, G., Özgökmen, T.~M., Poje, A.~C.,
  Ryan, E., and Shcherbina, A.
\newblock Drogue-loss detection for surface drifters during the lagrangian
  submesoscale experiment ({LASER}).
\newblock \emph{Journal of Atmospheric and Oceanic Technology}, 35\penalty0
  (4), 2018.

\bibitem[Kingma \& Ba(2015)Kingma and Ba]{kingma2015adam}
Kingma, D.~P. and Ba, J.
\newblock Adam: A method for stochastic optimization.
\newblock In \emph{International Conference on Learning Representations
  (ICLR)}, 2015.

\bibitem[Lindgren(2012)]{lindgren2012stationary}
Lindgren, G.
\newblock \emph{Stationary stochastic processes: theory and applications}.
\newblock CRC Press, 2012.

\bibitem[Lodise et~al.(2020)Lodise, Özgökmen, Gonçalves, Iskandarani, Lund,
  Horstmann, Poulain, Klymak, Ryan, and Guigand]{lodise2020investigating}
Lodise, J., Özgökmen, T., Gonçalves, R.~C., Iskandarani, M., Lund, B.,
  Horstmann, J., Poulain, P.-M., Klymak, J., Ryan, E.~H., and Guigand, C.
\newblock Investigating the formation of submesoscale structures along
  mesoscale fronts and estimating kinematic quantities using {L}agrangian
  drifters.
\newblock \emph{Fluids}, 5\penalty0 (3), 2020.

\bibitem[Lowitzsch(2002)]{lowitzsch2002approximation}
Lowitzsch, S.
\newblock \emph{Approximation and interpolation employing divergence-free
  radial basis functions with applications}.
\newblock Texas A\&M University, 2002.

\bibitem[Mac{\^e}do \& Castro(2010)Mac{\^e}do and Castro]{macedo2010learning}
Mac{\^e}do, I. and Castro, R.
\newblock Learning divergence-free and curl-free vector fields with
  matrix-valued kernels.
\newblock Technical report, Instituto Nacional de Matemática Pura e Aplicada,
  2010.

\bibitem[Mariano et~al.(2016)Mariano, Ryan, Huntley, Laurindo, Coelho, Griffa,
  Özgökmen, Berta, Bogucki, Chen, Curcic, Drouin, Gough, Haus, Haza, Hogan,
  Iskandarani, Jacobs, Kirwan~Jr., Laxague, Lipphardt~Jr., Magaldi, Novelli,
  Reniers, Restrepo, Smith, Valle-Levinson, and Wei]{mariano2016statistical}
Mariano, A.~J., Ryan, E.~H., Huntley, H.~S., Laurindo, L., Coelho, E., Griffa,
  A., Özgökmen, T.~M., Berta, M., Bogucki, D., Chen, S.~S., Curcic, M.,
  Drouin, K., Gough, M., Haus, B.~K., Haza, A.~C., Hogan, P., Iskandarani, M.,
  Jacobs, G., Kirwan~Jr., A.~D., Laxague, N., Lipphardt~Jr., B., Magaldi,
  M.~G., Novelli, G., Reniers, A., Restrepo, J.~M., Smith, C., Valle-Levinson,
  A., and Wei, M.
\newblock Statistical properties of the surface velocity field in the northern
  {G}ulf of {M}exico sampled by {GLAD} drifters.
\newblock \emph{Journal of Geophysical Research: Oceans}, 121\penalty0
  (7):\penalty0 5193--5216, 2016.

\bibitem[Novelli et~al.(2017)Novelli, Guigand, Cousin, Ryan, Laxague, Dai,
  Haus, and {\"O}zg{\"o}kmen]{novelli2017biodegradable}
Novelli, G., Guigand, C.~M., Cousin, C., Ryan, E.~H., Laxague, N.~J., Dai, H.,
  Haus, B.~K., and {\"O}zg{\"o}kmen, T.~M.
\newblock A biodegradable surface drifter for ocean sampling on a massive
  scale.
\newblock \emph{Journal of Atmospheric and Oceanic Technology}, 34\penalty0
  (11):\penalty0 2509--2532, 2017.

\bibitem[{\"O}zg{\"o}kmen(2012)]{ozgokmen2012carthe}
{\"O}zg{\"o}kmen, T.
\newblock {CARTHE: GLAD} experiment {CODE}-style drifter trajectories (lowpass
  filtered, 15 minute interval records), northern {Gulf of Mexico} near {DeSoto
  Canyon}, {J}uly-{O}ctober 2012.
\newblock \emph{Gulf of Mexico Research Initiative}, 10:\penalty0 N7VD6WC8,
  2012.

\bibitem[Poje et~al.(2014)Poje, Özgökmen, Lipphardt, Haus, Ryan, Haza,
  Jacobs, Reniers, Olascoaga, Novelli, Griffa, Beron-Vera, Chen, Coelho, Hogan,
  Kirwan, Huntley, and Mariano]{poje2014submesoscale}
Poje, A.~C., Özgökmen, T.~M., Lipphardt, B.~L., Haus, B.~K., Ryan, E.~H.,
  Haza, A.~C., Jacobs, G.~A., Reniers, A. J. H.~M., Olascoaga, M.~J., Novelli,
  G., Griffa, A., Beron-Vera, F.~J., Chen, S.~S., Coelho, E., Hogan, P.~J.,
  Kirwan, A.~D., Huntley, H.~S., and Mariano, A.~J.
\newblock Submesoscale dispersion in the vicinity of the {Deepwater Horizon}
  spill.
\newblock \emph{Proceedings of the National Academy of Sciences}, 111\penalty0
  (35):\penalty0 12693--12698, 2014.

\bibitem[Rasmussen \& Williams(2005)Rasmussen and Williams]{rasmussen2005gp}
Rasmussen, C.~E. and Williams, C. K.~I.
\newblock \emph{Gaussian processes for machine learning}.
\newblock MIT Press, 2005.

\bibitem[Rocha et~al.(2016)Rocha, Chereskin, Gille, and
  Menemenlis]{rocha2016mesoscale}
Rocha, C.~B., Chereskin, T.~K., Gille, S.~T., and Menemenlis, D.
\newblock Mesoscale to submesoscale wavenumber spectra in {Drake Passage}.
\newblock \emph{Journal of Physical Oceanography}, 46\penalty0 (2):\penalty0
  601--620, 2016.

\bibitem[Solin et~al.(2018)Solin, Kok, Wahlstr{\"o}m, Sch{\"o}n, and
  S{\"a}rkk{\"a}]{solin2018modeling}
Solin, A., Kok, M., Wahlstr{\"o}m, N., Sch{\"o}n, T.~B., and S{\"a}rkk{\"a}, S.
\newblock Modeling and interpolation of the ambient magnetic field by
  {G}aussian processes.
\newblock \emph{IEEE Transactions on robotics}, 34\penalty0 (4):\penalty0
  1112--1127, 2018.

\bibitem[Srinivasan et~al.(2023)Srinivasan, Barkan, and
  McWilliams]{srinivasan2023forward}
Srinivasan, K., Barkan, R., and McWilliams, J.~C.
\newblock A forward energy flux at submesoscales driven by frontogenesis.
\newblock \emph{Journal of Physical Oceanography}, 53\penalty0 (1):\penalty0
  287--305, 2023.

\bibitem[Wahlstr{\"o}m(2015)]{wahlstrom2015modeling}
Wahlstr{\"o}m, N.
\newblock \emph{Modeling of magnetic fields and extended objects for
  localization applications}.
\newblock PhD thesis, Link{\"o}ping University Electronic Press, 2015.

\bibitem[Wahlstr{\"o}m et~al.(2013)Wahlstr{\"o}m, Kok, Sch{\"o}n, and
  Gustafsson]{wahlstrom2013modeling}
Wahlstr{\"o}m, N., Kok, M., Sch{\"o}n, T.~B., and Gustafsson, F.
\newblock Modeling magnetic fields using {Gaussian processes}.
\newblock In \emph{2013 IEEE International Conference on Acoustics, Speech and
  Signal Processing}, pp.\  3522--3526. IEEE, 2013.

\bibitem[Yaremchuk \& Coelho(2014)Yaremchuk and Coelho]{yaremchuk2014filtering}
Yaremchuk, M. and Coelho, E.~F.
\newblock Filtering drifter trajectories sampled at submesoscale resolution.
\newblock \emph{IEEE Journal of Oceanic Engineering}, 40\penalty0 (3):\penalty0
  497--505, 2014.

\bibitem[Zhang et~al.(2018)Zhang, Wei, Liu, and Fu]{zhang2018helm1}
Zhang, C., Wei, H., Liu, Z., and Fu, X.
\newblock Characteristic ocean flow visualization using {H}elmholtz
  decomposition.
\newblock In \emph{2018 Oceans-MTS/IEEE Kobe Techno-Oceans (OTO)}, pp.\  1--4.
  IEEE, 2018.

\bibitem[Zhang et~al.(2019)Zhang, Wei, Bi, and Liu]{zhang2019helm3}
Zhang, C., Wei, H., Bi, C., and Liu, Z.
\newblock Helmholtz–{H}odge decomposition-based {2D} and {3D} ocean surface
  current visualization for mesoscale eddy detection.
\newblock \emph{Journal of Visualization}, 22, 01 2019.

\end{thebibliography}
\bibliographystyle{icml2023}

\appendix
% \documentclass[../main.tex]{subfiles}
% \graphicspath{{\subfix{../images/}}}
% \begin{document}

%%%%%%%%%%%%%%%%%%%%%%%%%%%%%%%%%%%%%%%%%%%%%%%%%%%%%%%%%%%%%%%%%%%%%%%%%%%%%%%
%%%%%%%%%%%%%%%%%%%%%%%%%%%%%%%%%%%%%%%%%%%%%%%%%%%%%%%%%%%%%%%%%%%%%%%%%%%%%%%
% APPENDIX
%%%%%%%%%%%%%%%%%%%%%%%%%%%%%%%%%%%%%%%%%%%%%%%%%%%%%%%%%%%%%%%%%%%%%%%%%%%%%%%
%%%%%%%%%%%%%%%%%%%%%%%%%%%%%%%%%%%%%%%%%%%%%%%%%%%%%%%%%%%%%%%%%%%%%%%%%%%%%%%
\newpage

% these to deal with "part 1" before appendix
\renewcommand{\partname}{}
\renewcommand{\thepart}{}

\appendix
\onecolumn
\addcontentsline{toc}{section}{Appendix} % Add the appendix text to the document TOC
\part{Appendix} % Start the appendix part
\parttoc % Insert the appendix TOC

\section{Related Work}
\label{app:related_work}

In what follows, we present related work in more detail. 
We first describe in more detail the differences between our work and that of \citet{goncalves2019naive,lodise2020investigating}. Then we discuss why we chose to put our priors on the Helmholtz decomposition, rather than an alternative decomposition.
%We first consider the literature for GP regression to learn ocean currents. Then we analyze how the Helmholtz decomposition has been used by oceanographers in the past. And finally we present some approaches to model curl-free and divergence-free quantities using GPs and neural networks.

\paragraph{Velocity GP vs.\ the GP of \citet{goncalves2019naive,lodise2020investigating}.} \citet{goncalves2019naive,lodise2020investigating} used many of the components of the velocity GP that we describe in the main text, but their prior was substantially more complex than the velocity GP. Like the velocity GP, they focused on a GP prior with a squared exponential covariance function. Unlike the velocity GP as described in the main text here, their squared exponential prior included not only terms in each of the longitude and latitude directions, but also a term in the time direction. Each term has its own length scale. As a second principle difference, their covariance was in fact a sum of two such squared exponential kernels -- introducing a total of 6 length scales (one for longitude, latitude, and time in each of the two kernels), 2 signal variances, and a single noise variance. They mention also trying 3 kernels (instead of 2), but it appears all their results were reported for 2 kernels. We have here tried to take a modular approach to examine the squared exponential prior on its own, so our velocity GP should not be seen as a direct reflection of the performance of the \citet{goncalves2019naive,lodise2020investigating} covariance function.

\paragraph{Why we focused on the Helmholtz decomposition.} The Helmholtz decomposition is a widely recognized dynamically significant method for dissecting the oceanic velocity field. An alternative -- albeit related -- decomposition that sees frequent use is the Geostrophic-Ageostrophic (G-Ag hereinafter) decomposition \citep{vallis2017}, which relies on the dominance of geostrophic balance at large spatial and time scales in the ocean. Though not directly related to our current work here, we discuss it briefly to provide
a holistic oceanographic context to our choice of the Helmholtz decomposition. The Helmholtz decomposition is defined through exact linear operators into velocity components that are orthogonal complements and can be separated easily, allowing priors to be placed on the underlying potentials; the G-Ag decomposition, however, can  be derived only by first eliminating the Ageostrophic flow (which represents faster, smaller scales) through an ad hoc time smoothing of drifter velocities using a multi-day filter. The G-Ag components are notably not orthogonal complements and consequently have to be separately estimated through a velocity GP, leading to a more complex modeling pipeline with additional physical hyperparameters (like the smoothing time) that are not easily determined. While there is a measure of correspondence between the geostrophic and rotational components, and the ageostrophic and divergent components,  the lack of precision in defining the G-Ag components makes the Helmholtz a natural modeling pathway. Recent oceanographic studies \citep{barkan2019, srinivasan2023forward} showing that the Helmholtz decomposition is directly relevant to the dynamics of oceanic components at smaller spatial scales of around O(1 km) offer further justification for our present choice.

\section{Divergence, Gradient, and Curl Operators in 2D}
\label{app:divcurl}

In this section we provide some background for the Helmholtz decomposition in 2D. In the first part, we provide definitions for $\mathrm{grad}, \mathrm{div}$, $\mathrm{curl}$, and $\mathrm{rot}$ operators. In \cref{prop:div-curl-free} we then characterize a property of vector fields obtained combining these operators.

Consider a scalar-valued differentiable function $f : \mathbb{R}^2 \rightarrow \mathbb{R}$. The \textit{gradient} of $f$ is the vector-valued function $\nabla f$ whose value at point $\textbf{x}$ is the vector whose components are the partial derivatives of $f$ at $\textbf{x}$. Formally, 
\begin{equation*}
    \mathrm{grad}\,f(\bfx) := 
\begin{bmatrix}
\frac{\partial f(\textbf{x})}{\partial x^{(1)}} \\ \frac{\partial f(\textbf{x})}{\partial x^{(2)}} 
\end{bmatrix}
= \bi \frac{\partial f(\textbf{x})}{\partial x^{(1)}} + 
\bj \frac{\partial f(\textbf{x})}{\partial x^{(2)}} 
\end{equation*}
where $\bi$ and $\bj$  are the standard unit vectors in the direction of the $x^{(1)}$ and $x^{(2)}$ coordinates. From this rewriting, one can note that taking the gradient of a function is equivalent to taking a vector operator $\nabla$, called \textit{del}:
$$
\nabla = 
\bi \frac{\partial}{\partial x^{(1)}} + \bj \frac{\partial}{\partial x^{(2)}} \equiv 
\bigg(\frac{\partial}{\partial x^{(1)}}, \frac{\partial}{\partial x^{(2)}}\bigg)
$$

Using this operator, two operations on vector fields can be defined. 

\begin{definition}
\normalfont Let $A \subset \mathbb{R}^2$ be an open subset and let $F : A \rightarrow \mathbb{R}^2$ be a vector field. The \textit{divergence} of $F$ is the scalar function $\normalfont \text{div} F : A \rightarrow \mathbb{R}$, defined by 
$$
\normalfont \text{div}\,F(\textbf{x}) := (\nabla \cdot F)(\textbf{x}) = \frac{\partial F^{(1)}}{\partial x^{(1)}} + \frac{\partial F^{(2)}}{\partial x^{(2)}}
$$
\end{definition}

\begin{definition}
\normalfont Let $A \subset \mathbb{R}^2$ be an open subset and let $F : A \rightarrow \mathbb{R}^2$ be a vector field. The \textit{curl} of $F$ is the scalar function $\normalfont \text{curl} F : A \rightarrow \mathbb{R}$, defined by 
$$
\normalfont \text{curl} F(\textbf{x}) := 
%\hat{k} \times (\nabla F)(\textbf{x}) = \hat{k} \times \bigg(\hat{i} \frac{\partial F}{\partial x^{(1)}} + \hat{j} \frac{\partial F}{\partial x^{(2)}}\bigg) 
\frac{\partial F^{(1)}}{\partial x^{(2)}} - \frac{\partial F^{(2)}}{\partial x^{(1)}}
$$
%where $\hat{i}$, $\hat{j}$, and $\hat{k}$ are standard unit vectors in three dimensions.  
\end{definition}

Note that this curl definition follows directly from the definition of curl in three dimensions, where this quantity describes infinitesimal circulation. 

In the 3D world, curl and divergence are enough to characterize the Helmholtz decomposition. For the 2D version, however, we need to characterize an additional operator - which we call \textit{rot} operator - that plays the role of the standard curl operator in the 3D version. In 2D, the rot formally requires the introduction of a third unit vector, $\bk$ that is orthogonal to the plane containing, $\bi$ and $\bj$.

\begin{definition}
\normalfont
Let $f : \mathbb{R}^2 \rightarrow \mathbb{R}$ be a scalar field. The \textit{rot} of $f$ is the vector field $\bk\times\nabla f,$ defined by
\begin{equation*}
    \mathrm{rot}\,f(\textbf{x})\equiv \bk\times\nabla f = 
\begin{bmatrix}
\frac{\partial f}{\partial x^{(2)}} \\ \frac{- \partial f}{\partial x^{(1)}} 
\end{bmatrix}
= \bi \frac{\partial f}{\partial x^{(2)}} - 
\bj \frac{\partial f}{\partial x^{(1)}} 
\end{equation*}
where $\bi$ and $\bj$ represents, respectively, the standard unit vectors in the direction of the $x^{(1)}$ and $x^{(2)}$ coordinates; $\bk$ is the unit vector orthogonal to the plane containing $\bi$ and $\bj$ satisfying the identities, $\bk\times\bj = -\bi$ and $\bk\times\bi = \bj$.
\end{definition}

Thus the rot operator can be thought of as a $\pi/2$ rotation of the grad operator. The precise reason why we need the introduction of a separate rot operator in 2D is because of a hidden peculiarity that the stream function, $\Psi$ is actually
the only non-zero component of a 3D vector potential field, $\boldsymbol{A}(\textbf{x})$, but that non-zero component is along the $\bk$ direction, $\boldsymbol{A}\equiv(0, 0, \Psi(\textbf{x}))$; equivalently $\boldsymbol{A}=\Psi\bk$. Given this observation, it can be shown that $\nabla_{3D}\times\boldsymbol{A} = \bk\times\nabla\Psi$, where $\nabla_{3D}$ is the direct 3D extension of the 2D $\nabla$ operator defined above.
% Note that we use the notation $\nabla \times$ to draw the parallel with the curl operator in three dimension. 
The ideas of gradient, divergence, rot, and curl lead to the following characterization of vector fields. 

\begin{definition}
\normalfont A vector field $F: A \rightarrow \mathbb{R}^2$ is called \textit{rotation-free} (or curl-free) if the curl is zero, $\normalfont{\text{curl}} F = 0$, and it is called \textit{incompressible} (or divergence-free) if the divergence is zero, $\normalfont{\text{div}} F = 0$.
\end{definition}

\begin{proposition}\label{prop:div-curl-free}
Let $f$ be a scalar field and $\mathcal{C}^2$ the class of functions whose second derivatives exist and are continuous. 
\begin{enumerate}
    \item If $f$ is $\mathcal{C}^2$, then $\normalfont{\textrm{curl}}(\mathrm{grad}\,f) = 0$. Every gradient of a scalar field is rotation free.
    \item If $f$ is $\mathcal{C}^2$, then $\normalfont{\textrm{div}}(\mathrm{rot}\,f) = 0$. Every $\mathrm{rot}$ transformation of a scalar field is incompressible.
\end{enumerate}
\end{proposition}

\begin{proof}
For (1), we have the following:
$$
\mathrm{curl}(\mathrm{grad}\,f) = \text{curl}
\begin{bmatrix}
\frac{\partial f(\textbf{x})}{\partial x^{(1)}} \\ \frac{\partial f(\textbf{x})}{\partial x^{(2)}} 
\end{bmatrix}
= 
\frac{\partial f (\textbf{x})/ \partial x^{(1)}}{\partial x^{(2)}} - \frac{\partial f (\textbf{x})/ \partial x^{(2)}}{\partial x^{(1)}} = 0.
$$

For (2):
$$
\text{div}(\mathrm{rot}\, f) = \text{div}
\begin{bmatrix}
\frac{\partial f(\textbf{x})}{\partial x^{(2)}} \\ \frac{- \partial f(\textbf{x})}{\partial x^{(1)}} 
\end{bmatrix}
= 
\frac{\partial f(\textbf{x}) / \partial x^{(2)}}{\partial x^{(1)}} + \frac{- \partial f(\textbf{x}) / \partial x^{(1)}}{\partial x^{(2)}} = 0 .
$$
\end{proof}

For more material on vector calculus, we refer the reader to \citet{arfken1999mathematical}.

\section{Helmholtz Decomposition in the Ocean}
\label{app:helmholtz}

In what follows we relate the Helmholtz decomposition to ocean currents. In the first part, we provide intuition of how divergence and vorticity are significant in the context of oceanography. Next, in \cref{fig:helmholtz}, we present a visual representation of the Helmholtz decomposition and highlight the relevant aspects.

The divergence and vorticity of the ocean flow are relevant for oceanographic studies. Divergence characterizes fronts -- small structures with spatial scales on the order of $0.1$-$10$ km and temporal scales on the order of $1$-$100$h. These are associated with strong vertical motions comprised of a narrow and intense downwelling (flow into the ocean from the surface) and broad, diffuse upwelling (flow from depths to the surface). The strong downwelling regions play a crucial role in air-sea fluxes (including uptake of gases into the ocean) and for biological productivity, since floating particles
in the ocean (that include plankton and algae) are concentrated at these fronts. On the other hand, vorticity characterizes eddies, larger structures that usually evolve over a long timescale. These account for kinetic energy in the ocean, which makes them a crucial part of global balances of energy, momentum, heat, salt, and chemical constituents (such as carbon dioxide).

In \cref{fig:helmholtz} we provide visual intuition on how the Helmholtz theorem decomposes a vector field (ocean flow) into a divergent velocity field and a rotation velocity field. In this plot, one can see that from the divergence we can read areas of downwelling/sink (arrows pointing inwards to a single point) and upwelling/source (arrows pointing outwards from a single point). The vorticity, instead, characterizes rotational elements of the vector field, e.g., vortices/eddies in our ocean setting.

\begin{figure}[!h]
    \centering
    \includegraphics[scale=0.24]{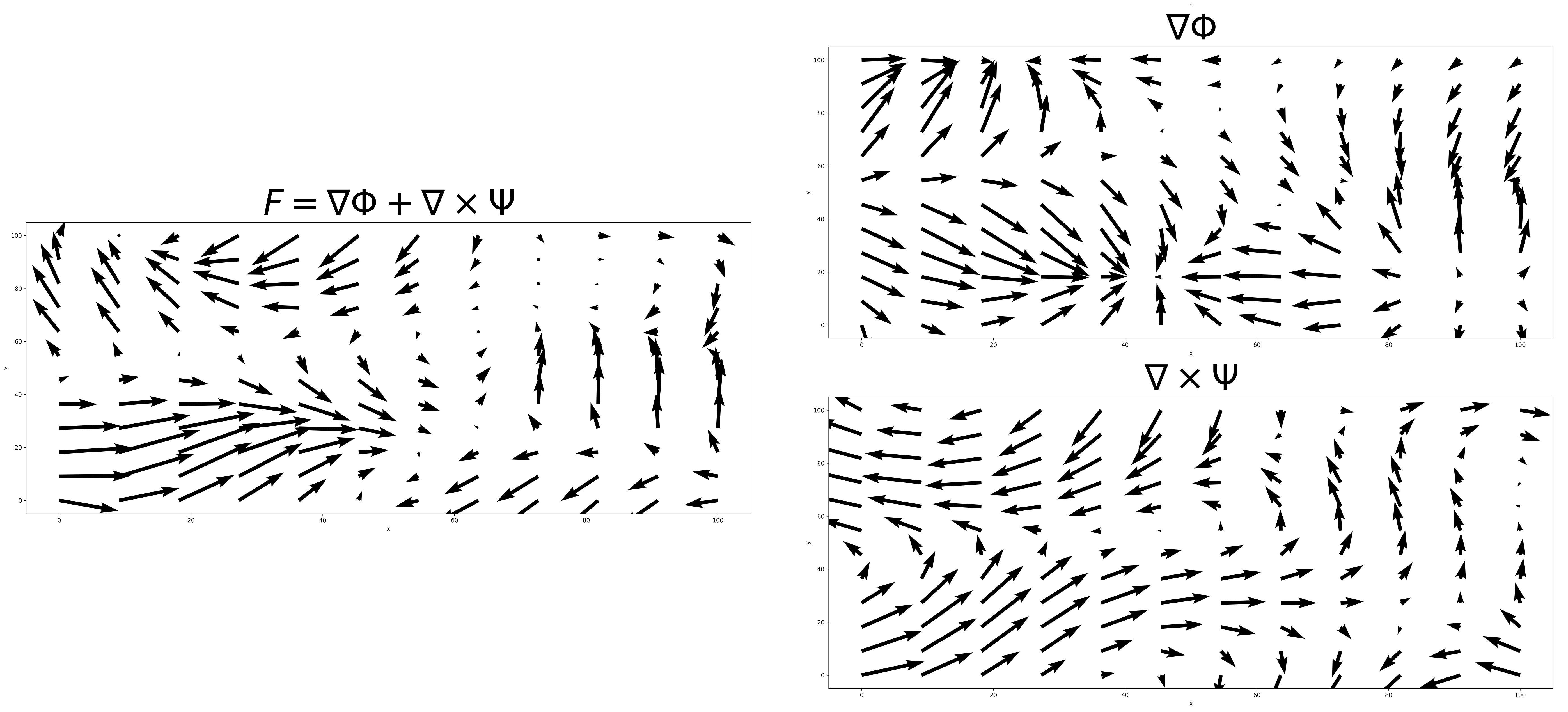}
    \caption{Helmholtz decomposition of vector field $F$. \textit{Left}: original vector field $F$. \textit{Top-right}: divergence component of Helmholtz decomposition of $F$. \textit{Bottom-right}: vorticity component of Helmholtz decomposition of $F$.}
    \label{fig:helmholtz}
\end{figure}

\section{Optimal Interpolation vs. Gaussian Processes}

Optimal interpolation is a powerful and widely used technique for the analysis of atmospheric data. The technique is only optimal under strong assumptions, and is therefore instead often referred to as \textit{statistical interpolation} \citep{daley1993atmospheric}. We formulate the statistical interpolation algorithm following \citet[Section 4.2]{daley1993atmospheric}. We are interested in modeling a variable $y = f(\boldsymbol{x})$, where $\bf x$ represent the 2D spatial coordinates of a point. For simplicity, assume $f : \mathbb{R}^2 \rightarrow \mathbb{R}$ is a univariate function. The multivariate statistical interpolation algorithm is very similar. See Chapter 5 in \citet{daley1993atmospheric}.

We have some locations $\boldsymbol{x}_k, k\in{1,\ldots,K}$, where we observe the variable of interest directly, $f_O(\boldsymbol{x}_k)$, and some other locations $\boldsymbol{x}_i, i\in{1,\ldots,I}$, where we want to infer (\textit{interpolate}) the value of the function of interest, $f_A(\boldsymbol{x}_i)$. Moreover, there might be some background knowledge about this function (e.g. a climate model providing a first guess), that we denote by 
$f_B(\boldsymbol{x}_k)$ and $f_B(\boldsymbol{x}_i)$. We assume a linear relationship between the observations and the function we want to model (note that this does not imply the modeled function is linear in the spatial locations). Then 
\begin{equation}
    f_A(\boldsymbol{x}_i) = f_B(\boldsymbol{x}_i) + \sum_{i=1}^K W_{ik}(f_O(\boldsymbol{x}_k) - f_B(\boldsymbol{x}_k))
\end{equation}
where $K$ is the number of locations where we observe the variable of interest, and $W_{ik}$ is the weight given to the residual $f_O(\boldsymbol{x}_k) - f_B(\boldsymbol{x}_k)$ when interpolating at $\boldsymbol{x}_i$. Finally, let the true value of the function at points $\boldsymbol{x}_i$ and $\boldsymbol{x}_k$ be $f_T(\boldsymbol{x}_i)$ and $f_T(\boldsymbol{x}_k)$, respectively. 

Define $\bar{\boldsymbol{W}}_i$ to be the column vector of length $K$ of weights, $\bar{\boldsymbol{B}}_i$ the column vector of length $K$ whose elements are $\E[(f_B(\boldsymbol{x}_k)-f_T(\boldsymbol{x}_k))(f_B(\boldsymbol{x}_i)-f_T(\boldsymbol{x}_i))]$, and $\boldsymbol{B}$ and $\boldsymbol{O}$ the background and observation error covariance matrices, respectively. That is, for $k, l \in \{1,\ldots,K\}$,
\begin{align*}
    \boldsymbol{B}_{k,l} = \E [(f_B(\boldsymbol{x}_k)-f_T(\boldsymbol{x}_k))(f_B(\boldsymbol{x}_l)-f_T(\boldsymbol{x}_l))], \\
    \boldsymbol{O}_{k,l} = \E [(f_O(\boldsymbol{x}_k)-f_T(\boldsymbol{x}_k))(f_O(\boldsymbol{x}_l)-f_T(\boldsymbol{x}_l))].
\end{align*}

Then, assuming that (i) the background and the observations are unbiased and (ii) that the correlation between background and observation error is zero, with some algebra it can be shown that the optimal weights satisfy 
$$
[\boldsymbol{B} + \boldsymbol{O}] \bar{\boldsymbol{W}}_i = \bar{\boldsymbol{B}}_i
$$
We refer the reader to Section 4.2 in \citet{daley1993atmospheric} for details on this derivation. This is the same as applying a least square approach on the residuals obtained by subtracting the background model estimates from the observations. 

Now assume that we do not have any background knowledge ($f_B(\boldsymbol{x}_i) = f_B(\boldsymbol{x}_k) = 0$) and we do not know the ground truth. Instead, we model it with a Gaussian process with zero mean and covariance function $\Tilde{K}$, $f_T \sim \GP (0, \Tilde{K})$. Assuming that the model is well-specified and the observation noise is $\sigma^2_{\text{obs}}$, we have
\begin{align*}
    & \bar{\boldsymbol{B}}_{i,k} = \E (f_T(\boldsymbol{x}_i), f_T(\boldsymbol{x}_k)) = \Tilde{K}(\boldsymbol{x}_i, \boldsymbol{x}_k) \\
    & \boldsymbol{B}_{k,l} = \E (f_T(\boldsymbol{x}_k), f_T(\boldsymbol{x}_l)) = \Tilde{K}(\boldsymbol{x}, \boldsymbol{x})_{k,l} \\
    & \boldsymbol{O}_{k,l} = \E (f_O(\boldsymbol{x}_k) - f_T(\boldsymbol{x}_k), f_O(\boldsymbol{x}_l) - f_T(\boldsymbol{x}_l)) = \sigma^2_{\text{obs}} \\
\end{align*}
where by $\Tilde{K}(\boldsymbol{x}, \boldsymbol{x})$ we mean the $I \times I$ matrix formed by evaluating the $\tilde{K}$ on all pairs of observed locations. So the optimal weights are
$$
\bar{\boldsymbol{W}}_{\ell,k} = \sum_{i=1}^I[\Tilde{K}(\boldsymbol{x}, \boldsymbol{x}) + \sigma^2_{\text{obs}}\mathbf{I}]^{-1}_{\ell, i} \Tilde{K}(\boldsymbol{x}_i, \boldsymbol{x}_k),
$$
These weights lead to exactly the same predictions as when doing standard Gaussian process regression. Therefore, we can see Gaussian process regression as a specific case of statistical interpolation, where we do not include any background knowledge (corresponding to a zero mean prior), and we do not have access to the ground truth function. We then choose a covariance function that encodes physical intuition of how the underlying system behaves, and estimate parameters from data (for example, by maximum likelihood). \citet[Section 4.3]{daley1993atmospheric} provides a detailed discussion of the choice of covariance functions in the oceanographic and atmospheric literature. % Note that a common choice of covariance function is the RBF kernel we use in this paper, with a observation covariance that is a multiple of the identity. In this case, the Velocity GP model is the same as the statistical interpolation model from this appendix. 
% To overcome this issue, we need to make the Gaussian modeling assumption and choose a covariance function that explains how the underlying system behaves. 

\section{Helmholtz Gaussian Process Prior}
\label{app:prior-helm-gp}

In this section, we state and prove \cref{prop:helm-prior} from the main text.

\helmprior*

\begin{proof}
We obtain the result in two steps.
First, we argue that under the assumptions of the proposition, 
$F$ is distributed as a Gaussian process and so may be characterized through its mean and covariance function.
Second, we show $F$ has mean zero,
and the proposed covariance kernel.

To see that $F$ is a Gaussian process, observe that it is the sum of linear transformations of two independent Gaussian processes.
This follows from the fact that $\mathrm{grad}$ and $\mathrm{rot}$ are linear operators
on any vector space of differentiable functions, and because $k_\Phi$ and $k_\Psi$ are chosen to have almost surely continuously differentiable sample paths. Therefore, $\mathrm{grad} \Phi$ and $\mathrm{rot} \Psi$ are two independent GPs, and so $F$ is a Gaussian process as well. 

%By the definition of the model and since $\gamma^{-1}$ is linear,
%\begin{align}
%\tilde{F} = \gamma^{-1}(\mathrm{grad} \,\Phi  +
%\mathrm{rot}\,\Psi)=  (\gamma^{-1} \circ \mathrm{grad}) \,\Phi +(\gamma^{-1}\circ \mathrm{rot})\,\Psi.
%\end{align}
%As both $\mathrm{grad}$ and $\mathrm{rot}$ are linear operators, and compositions of linear operators are linear, $\gamma^{-1} \circ \mathrm{grad}$ and $\gamma^{-1}\circ \mathrm{rot}$ are linear operators.
%Since the image of a Gaussian process under a linear operator is a Gaussian process, $(\gamma^{-1} \circ \mathrm{grad}) \,\Phi$ and $(\gamma^{-1}\circ \mathrm{rot})\,\Psi$ are Gaussian processes.
%Moreover, by the independence of $\Psi$ and $\Phi$, they are independent.
%Since the sum of independent Gaussian processes is also a Gaussian process with mean equal to the sum of the mean functions and covariance function equal to the sum of the covariance functions
We next turn to the mean and covariance functions.
By linearity of expectation,
\begin{align*}
    \E[F]
    &= \E \left[\mathrm{grad} \,\Phi\right] + \E\left[\mathrm{curl} \,\Psi\right]\\
    &= \mathrm{grad} \,\E \Phi + \mathrm{curl} \,\E\Psi\\
    &= 0,
\end{align*}
where the last line follows from the assumption that $\Phi$ and $\Psi$ both have mean 0 everywhere.
%it suffices to calculate the mean and covariance functions of $(\gamma^{-1} \circ \mathrm{grad}) \,\Phi$ and  $(\gamma^{-1}\circ \mathrm{rot})\,\Psi$. First,
%\begin{align}
%\E (\gamma^{-1} \circ \mathrm{grad}) \,\Phi = (\gamma^{-1} \circ \mathrm{grad}) \,\E\Phi = 0, \\
%\E (\gamma^{-1}\circ \mathrm{rot})\,\Psi = (\gamma^{-1} \circ \mathrm{rot}) \, \E \Psi = 0.
%\end{align}
%by linearity of expectation.
It remains to calculate the covariance function.
Since $\Phi$ and $\Psi$ are assumed independent 
we compute the covariance as the sum of covariances for 
$\mathrm{grad} \,\Phi$ and $\mathrm{curl} \,\Psi.$
Consider two points $\bfx$ and $\bfx'.$
\begin{align*}
\Cov \left[(\mathrm{grad} \,\Phi)(\bfx),
(\mathrm{grad} \,\Phi)(\bfx')\right] 
 &=
    \Cov                 
    \left[\begin{pmatrix}
    \frac{\partial\Phi(\bfx)}{\partial x^{(1)}} \\
    \frac{\partial\Phi(\bfx)}{\partial x^{(2)}}
    \end{pmatrix}
    , 
    \begin{pmatrix}
    \frac{\partial\Phi(\bfx')}{\partial (x')^{(1)}} \\
    \frac{\partial\Phi(\bfx')}{\partial (x')^{(2)}} 
    \end{pmatrix} 
    \right] \\
&= \begin{bmatrix}
    \Cov\left(\frac{\partial\Phi(\bfx)}{\partial x^{(1)}}, \frac{\partial\Phi(\bfx')}{\partial (x')^{(1)}}\right) 
    & 
    \Cov\left(\frac{\partial\Phi(\bfx)}{\partial x^{(1)}}, \frac{\partial\Phi(\bfx')}{\partial (x')^{(2)}}\right) 
    \\
    \Cov\left(\frac{\partial\Phi(\bfx)}{\partial x^{(2)}}, \frac{\partial\Phi(\bfx')}{\partial (x')^{(1)}}\right) 
    & 
    \Cov\left(\frac{\partial\Phi(\bfx)}{\partial x^{(2)}}, \frac{\partial\Phi(\bfx')}{\partial (x')^{(2)}}\right) 
    \end{bmatrix} \\
&= \begin{bmatrix}
    \dfrac{\partial^2 k_{\Phi}(\bfx, \bfx')}{\partial x^{(1)} \partial (x')^{{(1)}}} & \dfrac{\partial^2 k_{\Phi}(\bfx, \bfx')}{\partial x^{(1)} \partial (x')^{{(2)}}}\\
    \dfrac{\partial^2 k_{\Phi}(\bfx, \bfx')}{\partial x^{(2)} \partial (x')^{{(1)}}} & \dfrac{\partial^2 k_{\Phi}(\bfx, \bfx')}{\partial x^{(2)} \partial (x')^{{(2)}}},
    \end{bmatrix}
\end{align*}
where exchange of integration and differentiation to obtain the final matrix is permissible by the almost surely continuously differentiable sample paths assumption. 

Similarly,
\begin{align*}
\Cov \left[(\mathrm{rot} \,\Psi)(\bfx),(\mathrm{rot} \,\Psi))(\bfx')\right]
 &=
    \Cov                 
    \left[\begin{pmatrix}
    \frac{\partial\Psi(\bfx)}{\partial x^{(2)}} \\
    -\frac{\partial\Psi(\bfx)}{\partial x^{(1)}}
    \end{pmatrix}
    , 
    \begin{pmatrix}
    \frac{\partial\Psi(\bfx')}{\partial (x')^{(2)}} \\
    -\frac{\partial\Psi(\bfx')}{\partial (x')^{(1)}} 
    \end{pmatrix} 
    \right] \\
&= \begin{bmatrix}
    \Cov\left(\frac{\partial\Psi(\bfx)}{\partial x^{(2)}}, \frac{\partial\Psi(\bfx')}{\partial (x')^{(2)}}\right) 
    & 
    \Cov\left(\frac{\partial\Psi(\bfx)}{\partial x^{(2)}}, -\frac{\partial\Psi(\bfx')}{\partial (x')^{(1)}}\right) 
    \\
    \Cov\left(-\frac{\partial\Psi(\bfx)}{\partial x^{(1)}}, \frac{\partial\Psi(\bfx')}{\partial (x')^{(2)}}\right) 
    & 
    \Cov\left(\frac{\partial\Psi(\bfx)}{\partial x^{(1)}}, \frac{\partial\Psi(\bfx')}{\partial (x')^{(1)}}\right) 
    \end{bmatrix} \\
&= \begin{bmatrix}
    \dfrac{\partial^2 k_{\Psi}(\bfx, \bfx')}{\partial x^{(2)} \partial (x')^{{(2)}}} & -\dfrac{\partial^2 k_{\Psi}(\bfx, \bfx')}{\partial x^{(2)} \partial (x')^{{(1)}}}\\
    -\dfrac{\partial^2 k_{\Psi}(\bfx, \bfx')}{\partial x^{(1)} \partial (x')^{{(2)}}} & \dfrac{\partial^2 k_{\Psi}(\bfx, \bfx')}{\partial x^{(1)} \partial (x')^{{(1)}}}
    \end{bmatrix}.
\end{align*}
The desired expression for $k_{\mathrm{Helm}}$ is obtained by taking the sum of these two matrices.
\end{proof}

% \section{Posterior Helmholtz components derivation}
%\label{app:postcomponents}

\section{Divergence and Vorticity of A Gaussian Process}
\label{app:div-vort-gp}

In this section, we state and prove \cref{prop:divergenceofagp} from the main text.

\divergenceofagp*

\begin{proof}
By the assumption that the sample paths are almost surely continuously differentiable, $\mathrm{div}\,F$ and $\mathrm{curl}\,F$ are well-defined. Since the image of a Gaussian process under a linear transformation is a Gaussian processes both $\mathrm{div}\,F$ and $\mathrm{curl}\,F$ are Gaussian processes. It remains to compute the moments. The expectation can be calculated via linearity,
\begin{align}
\mathbb{E} (\mathrm{div}\,F) &= \mathrm{div} (\mathbb{E} F) = \mathrm{div} \,\mu, \label{eqn:expected-div}\\
\mathbb{E} (\mathrm{curl}\,F) &= \mathrm{curl} (\mathbb{E} F) = \mathrm{curl} \,\mu. \label{eqn:expected-curl}
\end{align}

We next turn to the covariance. Define the centered process $G =F - \mu$. By \cref{eqn:expected-div} and \cref{eqn:expected-curl}, $\mathrm{div}\, G$ and $\mathrm{curl}\, G$ are centered Gaussian processes with the covariance functions $k^{\delta}$ and $k^{\zeta}$ respectively. 

Consider two points $\bfx, \bfx' \in \featurespace$. Unpacking the definition of $\mathrm{div}$,
\begin{align*}
k^{\delta}(\bfx, \bfx') &= 
\mathbb{E}\left[\left(\frac{\partial G^{(1)}(\bfx)}{\partial x^{(1)}} + \frac{\partial G^{(2)}(\bfx)}{\partial x^{(2)}}\right)\left(\frac{\partial G^{(1)}(\bfx')}{\partial (x')^{(1)}} + \frac{\partial G^{(2)}(\bfx')}{\partial (x')^{(2)}}\right)\right] \\
&= \sum_{(i,j) \in \{1, 2\}^2} \mathbb{E}\left[\frac{\partial G^{(i)}(\bfx)}{\partial x^{(i)}} \frac{\partial G^{(j)}(\bfx')}{\partial (x')^{(j)}}\right] \\
&= \!\!\!\!\!\sum_{(i,j) \in \{1, 2\}^2} \!\!\!\!\!\frac{\partial^2 k(\bfx,\bfx')_{i,j}}{\partial x^{(i)}\partial (x')^{(j)}},
\end{align*}

where exchange of integration and differentiation in the final line is permissible given that the sample paths are almost surely continuously differentiable. Similarly,

\begin{align*}
k^{\zeta}(\bfx, \bfx') &= 
\mathbb{E}\left[\left(\frac{\partial G^{(1)}(\bfx)}{\partial x^{(2)}} - \frac{\partial G^{(2)}(\bfx)}{\partial x^{(1)}}\right)\left(\frac{\partial G^{(1)}(\bfx')}{\partial (x')^{(2)}} - \frac{\partial G^{(2)}(\bfx')}{\partial (x')^{(1)}}\right)\right] \\
&= \sum_{(i,j) \in \{1, 2\}^2} (-1)^{i+j}\mathbb{E}\left[\frac{\partial G^{(i)}(\bfx)}{\partial x^{(3-i)}} \frac{\partial G^{(j)}(\bfx')}{\partial (x')^{(3-j)}}\right] \\
&= \!=\!\!\!\!\! \sum_{(i,j) \in \{1, 2\}^2}\!\!\!\! (-1)^{i+j}\frac{\partial^2 k(\bfx, \bfx')_{i,j}}{\partial x^{(3-i)}\partial (x')^{(3-j)}}.
\end{align*}

\end{proof}

\section{Computational Costs for evaluating Helmholtz GP Posterior}
\label{app:comp-cost}

In this section, we provide a bound for the cost of computing velocity predictions using the Helmholtz GP. We also discuss in more detail the assumption of using a Cholesky factorization or QR decomposition.

\begin{proposition} \label{prop:cost_prediction}
    Suppose we have observed $M$ training data points and would like to predict the current at $N$ new (test) locations. Assume we use a Cholesky or QR factorization, together with solving the system of equations with back-substitution. Let $C_{vel}(M, N)$  be the total worst-case\footnote{As in the main text, we assume that the computation incurs the worst-case cost of a Cholesky factorization or QR decomposition. If the matrices involved have special structure, the cost might be much less than the worst-case.} computational cost for evaluating both the posterior mean (\cref{eqn:helm_posterior_mean}) and covariance (\cref{eqn:helm_posterior_cov}) for the velocity GP. Let and $C_{helm}(M, N)$ be the analogous total cost for the Helmholtz GP. Then 
    \begin{equation}\label{eqn:comp-cost-pred}
        \lim_{M, N \rightarrow \infty} C_{helm}(M, N)/C_{vel}(M, N) \leq 4
    \end{equation}
    where $M$ and $N$ can tend to infinity at arbitrary, independent rates.
\end{proposition}

\begin{proof}
Recall that the posterior mean and covariance can be obtained by solving $\Khtetr \Khtrtr^{-1} \Ytrain$ and computing $\Khtete - \Khtetr \Khtrtr^{-1} \Khtetr^\top$, respectively \cref{eqn:helm_posterior_mean,eqn:helm_posterior_cov}. To compute the mean we (A) compute a Cholesky (or QR) factorization of a $2M \times 2M$ matrix, (B) perform a back-solve of a $2M$ dimensional system of equations, and (C) compute $N$ $2M$-dimensional inner products. To compute the covariance we (D) compute a Cholesky (or QR) factorization of a $2M \times 2M$ matrix, (E) perform a back-solve of $N$ distinct $2M$ dimensional systems of equations, (F) compute an $(2N \times 2M) \times (2M \times 2N)$ matrix-multiplication, and (G) subtract $2N \times 2N$ matrices.

We first argue that it suffices to consider steps A and D separately from the remaining steps. 
%the most expensive step(s) for the computation of 
%each step separately.
In particular, for non-negative numbers $\{a_i\}_{i=1}^I$ and $\{b_i\}_{i=1}^I$, we observe that 
\begin{align} \label{eq:bound_by_max_frac}
\frac{\sum_{i=1}^I a_i}{\sum_{i=1}^I b_i} \leq \max \frac{a_i}{b_i}.
\end{align}
% It suffices to consider only the computations with (asymptotically) the largest number of float-point operations. To see this, let $a_i(M,N)$ denote the computational complexity of the most expensive operations

%The cost of factorizing $\Khtrtr$ is independent of $N$, and
For our purposes, let $a_1$ be the cost of steps A and D in the Helmholtz GP computation, and let $a_2$ be the cost of steps B, C, E, F, and G in the Helmholtz GP computation. Analogously, define $b_1$ and $b_2$ for the velocity GP computation.
The proof of \cref{prop:comp-cost} already showed that $a_1/b_1$ is asymptotically bounded above by $4$. By \cref{eq:bound_by_max_frac}, then, it suffices to separately check $a_2/b_2$.

Next, we will use the following observation to focus on the asymptotically dominant terms of $a_2$ and $b_2$, respectively. Consider some index $n \rightarrow \infty$. For $r_n \sim s_n$ and $t_n \sim u_n$ with all terms bounded away from zero,
\begin{align}
\frac{r_n}{s_n} \sim \frac{t_n}{u_n}. \label{eqn:asymptotic-ratio}
\end{align}
%
%First, consider step 1 in the list of steps above. The cost of factorizing $\Khtrtr$ is independent of $N$, and it was shown in \cref{prop:comp-cost} that the ratio of the computational costs is asymptotically bounded above by $4$. By \cref{eq:bound_by_max_frac}, then, it remains only to consider the cost of 
Among steps B, C, E, F, and G, the dominant costs are due to performing the triangular back-substitution (step E) and the matrix-matrix multiplication (step F) used to compute the covariance. %, as these operations provide the leading term (asymptotically) in the computation of the posterior moments.
%We do not need to consider the ratios of lower order terms by \cref{eqn:asymptotic-ratio}.
We can again consider these two steps separately due to \cref{eq:bound_by_max_frac}.

First consider back-substitution. Given a $t$ (square) lower triangular system of equations of dimension $s$, the cost of solving this system with back-substitution is $\sim ts^2$ floating-point operations \citep[Section 3.1.2]{golub2013matrix}. When evaluating posteriors with the Helmholtz GP, we have $\Khtetr$ of shape $2N \times 2M$ and $\Khtrtr=LL^\top$ (or $QR$) of shape $2M \times 2M$. So back-substitution in the case of the Helmholtz GP incurs cost (in floating point operations) $\sim 8NM^2$. For the velocity GP, we can exploit the fact that the two outputs are uncorrelated and can be handled separately, so the cost of back-substitution is $\sim 4NM^2$. Therefore, and again using \cref{eqn:asymptotic-ratio}, the ratio of the costs of the back-substitution algorithm is asymptotically $2$.

Once back-substitution has been performed, matrix-multiplication must be performed. Using textbook matrix-multiplication  leads to the same considerations as back-substitution. 

In all the steps used to compute the posterior moments, we see that the cost of the Helmholtz GP is not (asymptotically) more than $4$ times the cost of the velocity GP, and the result follows.
\end{proof}

\textbf{Simplifying structure.}
In \cref{prop:comp-cost,prop:cost_prediction},
we have made the assumption that users are solving a linear system or computing a log determinant with a general and standard choice, such as a Cholesky factorization or QR decomposition. We expect essentially the same result to hold for any other general method for computing these quantities. However, if there is special structure that can be used to solve the linear system or compute the log determinant more efficiently, that might change the bounds we have found here. Conversely, we are immediately aware of special structure that we can expect to always apply in the application to modeling current. And any such structure would likely also require special algorithmic development and coding.

%The cost of computing the log marginal likelihood is dominated by the cost of solving the linear system $\Khtrtr^{-1}\Ytrain$ and computing the log determinant $|\Khtrtr|$. The solution to the linear system can be obtained at a cost that does not exceed the cost of performing matrix inversion. Let $\mathrm{MI}(s)$ be the cost of inverting a square matrix with $s$ rows. Due to the two (correlated) current outputs, the cost of the Helmholtz GP is dominated by $\mathrm{MI}(2M)$. In the velocity GP, the two outputs are uncorrelated and can be handled separately, so the cost is dominated by $2 \mathrm{MI}(M)$. So $\lim_{M \rightarrow \infty} C_{helm}(M) / C_{vel}(M)\leq \mathrm{MI}(2M)/ (2 \mathrm{MI}(M))$. Since the cost of matrix inversion is a polynomial with degree at most 3, the result follows. The same considerations apply for the log determinant.

% Appendix sections related to Section 4
\section{Benefits of the Helmholtz GP: additional information and supplemental proofs}

In what follows, we provide additional information on the benefits of using the Helmholtz GP. In \cref{app:sec-equal-marginal-var} we state and prove \cref{prop:equal-var-div-vort}, showing how with independent velocity priors we obtain equal marginal variances for vorticity and divergence. In \cref{sec:appendix_length_scale_relationship} we provide more intuition on the result that length scales are conserved across \texorpdfstring{$k_\Phi$}{kphi} vs.\ \texorpdfstring{$k^\delta$}{kdelta}
and across \texorpdfstring{$k_\Psi$}{kPsi} vs. \texorpdfstring{$k^\zeta$}{kzeta}. Finally, in \cref{app:equivariancehelmgp} we state and prove \cref{prop:Equivariance} about the equivariance of Helmholtz GP predictions.

\subsection{Equality of marginal variances of vorticity and divergence with independent velocity priors} \label{app:sec-equal-marginal-var}
\propEqualVarDivVort*

\begin{proof}
Because $\dt{k}{1}$ and $\dt{k}{2}$ are assumed to be isotropic we may write for any $\bfx, \bfx' \in\R^2$
$$
\dt{k}{1}(\bfx, \bfx')=  \kappa_1(\|\bfx-\bfx'\|^2)
\ \ \mathrm{and}\ \ 
\dt{k}{2}(\bfx, \bfx')=  \kappa_2(\|\bfx-\bfx'\|^2)
$$
for some $\kappa_1, \kappa_2: \R^+ \rightarrow \R.$
Because isotropy implies stationarity, it suffices to consider the variance at any a single point,
and so we consider $\bfx=\bfx'=(0,0)$. By assumption, we have
$$
F \sim \GP\left(
\begin{pmatrix}
0 \\ 0
\end{pmatrix}, 
\begin{pmatrix}
k^{(1)} & 0 \\ 0 & k^{(2)}
\end{pmatrix}
\right)$$
By \cref{prop:divergenceofagp}, the induced divergence and vorticity are Gaussian processes with mean $0$ and covariances

\begin{align*}
k^{\delta}(\bfx, \bfx') &=  \frac{\partial^2 k^{(1)}(\bfx,\bfx')}{\partial x^{(1)}\partial (x')^{(1)}} + \frac{\partial^2 k^{(2)}(\bfx,\bfx')}{\partial x^{(2)}\partial (x')^{(2)}} \\
k^{\zeta}(\bfx, \bfx') &=  \frac{\partial^2 k^{(1)}(\bfx,\bfx')}{\partial x^{(2)}\partial (x')^{(2)}} + \frac{\partial^2 k^{(2)}(\bfx,\bfx')}{\partial x^{(1)}\partial (x')^{(1)}} 
\end{align*}
respectively. 

Then, we can compute the variance at $\bfx=\bfx'=(0,0)$ by

\begin{align*}
\Var[\delta(0,0)]
    &=  \frac{\partial^2 k^{(1)}(\bfx,\bfx')}{\partial x^{(1)}\partial (x')^{(1)}}\big|_{\bfx=0,\bfx'=0} + \frac{\partial^2 k^{(2)}(\bfx,\bfx')}{\partial x^{(2)}\partial (x')^{(2)}}\big|_{\bfx=0,\bfx'=0} \\
    &= \frac{\partial^2 \kappa_1(\|\bfx-\bfx'\|^2)}{\partial x^{(1)}\partial (x')^{(1)}}\big|_{\bfx=0,\bfx'=0} + \frac{\partial^2 \kappa_2(\|\bfx-\bfx'\|^2)}{\partial x^{(2)}\partial (x')^{(2)}}\big|_{\bfx=0,\bfx'=0} \\
    &= \frac{\partial}{\partial x^{(1)}}(-2\kappa'_1(\|\bfx\|^2)x^{(1)}) \big|_{\bfx=0} + \frac{\partial}{\partial x^{(2)}}(-2\kappa'_2(\|\bfx\|^2)x^{(2)}) \big|_{\bfx=0} \\
    &= -2(\kappa_1^{\prime\prime}(0) +\kappa_2^{\prime\prime}(0))
\end{align*}

Consequently, we have that for any $\bfx\in \R^2$,
$\Var[\delta(\bfx)]=-2(\kappa_1^\prime(0) +\kappa_2^\prime(0))$.

The computation is similar for the vorticity.
We have that 

\begin{align*}
\Var[\zeta(0,0)]
    &=  \frac{\partial^2 k(^{(1)}(\bfx,\bfx')}{\partial x^{(2)}\partial (x')^{(2)}}\big|_{\bfx=0,\bfx'=0} + \frac{\partial^2 k(^{(2)}(\bfx,\bfx')}{\partial x^{(1)}\partial (x')^{(1)}}\big|_{\bfx=0,\bfx'=0} \\
    &= \frac{\partial^2 \kappa_1(\|\bfx-\bfx'\|^2)}{\partial x^{(2)}\partial (x')^{(2)}}\big|_{\bfx=0,\bfx'=0} + \frac{\partial^2 \kappa_2(\|\bfx-\bfx'\|^2)}{\partial x^{(1)}\partial (x')^{(1)}}\big|_{\bfx=0,\bfx'=0} \\
    &= \frac{\partial}{\partial x^{(2)}}(-2\kappa'_1(\|\bfx\|^2)x^{(1)}) \big|_{\bfx=0} + \frac{\partial}{\partial x^{(1)}}(-2\kappa'_2(\|\bfx\|^2)x^{(2)}) \big|_{\bfx=0} \\
    &= -2(\kappa_1^{\prime\prime}(0) +\kappa_2^{\prime\prime}(0))
\end{align*}

Therefore for any $\bfx\in \R^2$,
$\Var[\zeta(\bfx)]=-2(\kappa_1^\prime(0) +\kappa_2^\prime(0))$,
and we see $\Var[\zeta(\bfx)]= \Var[\delta(\bfx)].$
This completes the proof.
\end{proof}

\subsection{Conservation of length scales across \texorpdfstring{$k_\Phi$}{kphi} vs.\ \texorpdfstring{$k^\delta$}{kdelta}
(and \texorpdfstring{$k_\Psi$}{kPsi} vs. \texorpdfstring{$k^\zeta$}{kzeta})}\label{sec:appendix_length_scale_relationship}
This subsection provides a derivation of the claim that 
if $k_\Phi(\bfx,\bfx'; \ell)=\kappa(\|\bfx-\bfx'\|/\ell),$
for some $\kappa:\R_+\rightarrow \R,$
then $\khelm^{\delta}(\bfx,\bfx'; \ell)=\ell^{-4}\eta(\|\bfx-\bfx'\|/\ell)$
for some $\eta:\R_+ \rightarrow \R$ that does not depend on $\ell.$
The relationship (argument to see it) is identical $K_\Psi$ and $\khelm^\zeta.$

We may see the claim to be true
by expanding out the dependencies of $\khelm^{\delta}$ and $\khelm^{\zeta}$ on 
$k_{\Phi}$ and $k_\Psi,$
seeing that they involve fourth order partial derivatives,
and applying a change of variables four times;
each change of variables contributes one factor of $\ell^{-1}.$

In particular,
\begin{align}\label{eqn:divergence_kernel}
\begin{split}
\khelm^{\delta} &= \left(
\frac{\partial^{4}}{\partial (x^{(2)})^{2}\partial ((x')^{(2)})^{2}} + 
\frac{\partial^{4}}{\partial (x^{(2)})^{2}\partial ((x')^{(1)})^{2}} + {} \right. 
 \left.
\frac{\partial^{4}}{\partial (x^{(1)})^{2}\partial ((x')^{(2)})^{2}} +
\frac{\partial^{4}}{\partial (x^{(1)})^{2}\partial ((x')^{(1)})^{2}}
\right)k_{\Phi} \quad \mathrm{and} \\
\khelm^{\zeta} &= \left(
\frac{\partial^{4}}{\partial (x^{(2)})^{2}\partial ((x')^{(2)})^{2}} +
\frac{\partial^{4}}{\partial (x^{(2)})^{2}\partial ((x')^{(1)})^{2}} + {} \right. 
 \left.
\frac{\partial^{4}}{\partial (x^{(1)})^{2}\partial ((x')^{(2)})^{2}} +
\frac{\partial^{4}}{\partial (x^{(1)})^{2}\partial ((x')^{(1)})^{2}}
\right)k_{\Psi}.
\end{split}
\end{align}

Consider first the potential function and divergence.
If $k_\Phi(\bfx, \bfx')=\kappa(\|\bfx-\bfx'\|),$
then any second order mixed partial derivative may be written through a change of variables ($x$ to $x/\ell$) as 
$\frac{\partial^2}{\partial \bfx\partial \bfx'} k_\Phi(\bfx, \bfx')=\ell^{-2}\frac{\partial^2}{\partial \ell \bfx \partial \ell \bfx'} \kappa(\|\ell \bfx-\ell \bfx'\|/\ell)=\ell^{-2}\frac{\partial^2}{\partial \bfx\partial \bfx'} \kappa(\|\bfx-\bfx'\|/\ell).$
Analogously, when we differentiate four times rather than twice to obtain $\khelm^{\delta}$
we have that if 
$k_\Phi(\bfx,\bfx')=\kappa(\|\bfx-\bfx'\|/\ell),$
for some $\kappa,$
then $\khelm^{\delta}=\ell^{-4}\eta(\|\bfx-\bfx'\|/\ell)$
for some $\eta$ that does not depend on $\ell.$

\subsection{Equivariance of Helmoltz GP predictions}
\label{app:equivariancehelmgp}
\propEquivariance*
To prove the proposition, it is helpful to distinguish between random variables and the values they take on.
We use boldface to denote the random variables, for example $\Ytrain$.
When a random variable $\Ytrain$ takes a value $Y$ we write $\Ytrain=Y.$
The rotation operator $R$ is characterized by a $2\times 2$ rotation matrix;
if $\Xtrain=[(\dt{x_1}{1}, \dt{x_1}{2})^\top, \dots, 
(\dt{x_N}{1}, \dt{x_N}{2})^\top],$
then $R\Xtrain=[R(\dt{x_1}{1}, \dt{x_1}{2})^\top, \dots, 
R(\dt{x_N}{1}, \dt{x_N}{2})^\top] = [(\dt{(R\bfx_1)}{1}, \dt{(R\bfx_1)}{2})^\top, \dots, (\dt{(R\bfx_N)}{1}, \dt{(R\bfx_N)}{2})^\top]$, where we denote by $\dt{(R\bfx)}{1}$ the rotated first coordinate, and $\dt{(R\bfx)}{2}$ the rotated second coordinate. When the input is flattened, as in the case of $\Ytrain$ or $\mu_{F \mid D}$, the R operator is applied as follows: (1) unflatten the vector to get it in the same form as $\Xtrain$, then (2) apply the operator R as specified above, and finally (3) flatten the output vector to go back to the original $\Ytrain$ shape.
Our proof relies on $k_\Phi$ and $k_\Psi$ being isotropic kernels.

\begin{lemma}[Invariance of the likelihood]\label{lemma:invariance_of_likelihood}
Suppose $F$ is distributed as a Helmholtz GP,
and there are $M$ observations $\Ytrain\mid F, \Xtrain=X \sim \mathcal{N}\left([F^{(1)}(X), F^{(2)}(X)]^\top, \Khtrtr(X, X)\right)$, where $I_{2M}$ denotes the identity matrix of size $2M$.
Then the marginal likelihood of the observations is \emph{invariant} to rotation.
That is, for any $2\times 2$ rotation matrix $R,$
$$
p(\Ytrain=Y\mid \Xtrain=X) = p(\Ytrain=RY \mid \Xtrain = RX).
$$
\end{lemma}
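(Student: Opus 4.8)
The plan is to marginalize out $F$, obtain a closed-form Gaussian density for $\Ytrain\mid\Xtrain=X$, and then show that every ingredient of that density is preserved when $(X,Y)$ is replaced by $(RX,RY)$. Since $F$ is a zero-mean Helmholtz GP and the observation model is Gaussian, integrating out $F$ yields
$$\Ytrain\mid\Xtrain=X\;\sim\;\mathcal{N}\!\big(0,\ \Sigma(X)\big),\qquad \Sigma(X):=\Khtrtr(X,X)+\sigma^2 I_{2M},$$
so that $p(\Ytrain=Y\mid\Xtrain=X)=(2\pi)^{-M}\,|\Sigma(X)|^{-1/2}\exp\!\big(-\tfrac12\,Y^\top\Sigma(X)^{-1}Y\big)$ (any isotropic noise covariance works identically). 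The mean is zero and hence trivially fixed by $R$, so comparing $p(Y\mid X)$ with $p(RY\mid RX)$ reduces entirely to understanding how $\Sigma(\cdot)$ transforms. Writing $\tilde R:=R\otimes I_M$ for the block rotation induced by the flattening convention fixed above, the goal becomes the single identity $\Sigma(RX)=\tilde R\,\Sigma(X)\,\tilde R^\top$.

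The crux of the argument, and the step I expect to be the main obstacle, is the pointwise equivariance of the Helmholtz kernel,
$$K_H(Rx,Rx')=R\,K_H(x,x')\,R^\top,$$
which is exactly where isotropy of $k_\Phi$ and $k_\Psi$ enters. Recall that $K_H(x,x')=\nabla_x\nabla_{x'}^\top k_\Phi(x,x')+\nabla_x^\perp(\nabla_{x'}^\perp)^\top k_\Psi(x,x')$, with $\nabla^\perp=J\nabla$ and $J$ the $90^\circ$ rotation. I would first note that an isotropic scalar kernel is rotation-invariant, $k(Rx,Rx')=k(x,x')$, because it depends on $x,x'$ only through $\|x-x'\|=\|R(x-x')\|$. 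Differentiating this identity twice and applying the chain rule shows that each factor of $R$ inside the arguments produces one factor of $R$ on the outside, giving $[\nabla_1\nabla_2^\top k](Rx,Rx')=R\,[\nabla_1\nabla_2^\top k](x,x')\,R^\top$ for both $k_\Phi$ and $k_\Psi$. For the divergence-free term I would additionally use that $\nabla^\perp=J\nabla$ and that planar rotations commute, $JR=RJ$, so that conjugating the already-equivariant mixed Hessian by $J$ preserves the $R(\cdot)R^\top$ form. Summing the two (independent) contributions yields the claimed pointwise identity.

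Finally I would lift the pointwise identity to the full covariance and close the argument by orthogonality. Under the chosen flattening, $\Khtrtr(RX,RX)$ is assembled block-by-block from $K_H(Rx_i,Rx_j)=R\,K_H(x_i,x_j)\,R^\top$, which is precisely the statement $\Khtrtr(RX,RX)=\tilde R\,\Khtrtr(X,X)\,\tilde R^\top$; and because $\tilde R$ is orthogonal, the isotropic noise term is fixed, $\tilde R(\sigma^2 I_{2M})\tilde R^\top=\sigma^2 I_{2M}$. Hence $\Sigma(RX)=\tilde R\,\Sigma(X)\,\tilde R^\top$. Since $\det\tilde R=(\det R)^M=1$, the normalizing determinant is unchanged, $|\Sigma(RX)|=|\Sigma(X)|$, and using $\tilde R^\top\tilde R=I_{2M}$ the quadratic form is invariant,
$$(RY)^\top\Sigma(RX)^{-1}(RY)=(\tilde R Y)^\top\big(\tilde R\,\Sigma(X)\,\tilde R^\top\big)^{-1}(\tilde R Y)=Y^\top\Sigma(X)^{-1}Y.$$
Therefore $p(\Ytrain=RY\mid\Xtrain=RX)=p(\Ytrain=Y\mid\Xtrain=X)$, which is the assertion. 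Everything after the pointwise kernel equivariance is routine bookkeeping about conjugation by an orthogonal matrix; that pointwise identity is the only place the geometry of the Helmholtz construction and the isotropy hypothesis are genuinely needed.
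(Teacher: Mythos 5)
Your proposal is correct and follows essentially the same route as the paper: both reduce the claim to the conjugation identity for the training covariance, $K(RX,RX)=(R\otimes I_M)\,K(X,X)\,(R^{\top}\otimes I_M)$, obtained from isotropy of $k_\Phi$ and $k_\Psi$, and then finish by invariance of the Gaussian determinant and quadratic form under orthogonal conjugation. The only difference is presentational: you derive the pointwise kernel equivariance by differentiating the invariance $k(R\bfx,R\bfx')=k(\bfx,\bfx')$ and handle the divergence-free term via $JR=RJ$, whereas the paper writes out the transformed second-derivative matrices directly through a change of variables and treats the $\mathrm{rot}\,\Psi$ term as "an identical argument up to signs" --- the same chain-rule computation, with yours being the somewhat more explicit write-up of that key step.
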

\begin{proof}
By assumption, $k_\Phi$ is stationary and so, for any two locations $\bfx$ and $\bfx'$ in $\featurespace$  we may write
$k_\Phi(\bfx, \bfx')=\kappa(\|\bfx-\bfx'\|)$ for some function $\kappa:\R_+\rightarrow \R.$
Following Appendix C, we may write the induced covariance for $\mathrm{grad}\Phi$ as

\begin{align*}
 \Cov \left[(\mathrm{grad} \,\Phi)(\bfx),
(\mathrm{grad} \,\Phi)(\bfx')\right] 
&= \begin{bmatrix}
    \dfrac{\partial^2 k_{\Phi}(\bfx, \bfx')}{\partial x^{(1)} \partial (x')^{{(1)}}} & \dfrac{\partial^2 k_{\Phi}(\bfx, \bfx')}{\partial x^{(1)} \partial (x')^{{(2)}}}\\
    \dfrac{\partial^2 k_{\Phi}(\bfx, \bfx')}{\partial x^{(2)} \partial (x')^{{(1)}}} & \dfrac{\partial^2 k_{\Phi}(\bfx, \bfx')}{\partial x^{(2)} \partial (x')^{{(2)}}} 
\end{bmatrix} \\
&= \begin{bmatrix}
    \dfrac{\partial^2 \kappa(\|\bfx-\bfx'\|)}{\partial x^{(1)} \partial (x')^{{(1)}}} & \dfrac{\partial^2 \kappa(\|\bfx-\bfx'\|)}{\partial x^{(1)} \partial (x')^{{(2)}}}\\
    \dfrac{\partial^2 \kappa(\|\bfx-\bfx'\|)}{\partial x^{(2)} \partial (x')^{{(1)}}} & \dfrac{\partial^2 \kappa(\|\bfx-\bfx'\|)}{\partial x^{(2)} \partial (x')^{{(2)}}},
\end{bmatrix}
\end{align*}

Similarly, we may compute $\Cov \left[(\mathrm{grad} \,\Phi)(R\bfx),
(\mathrm{grad} \,\Phi)(R\bfx')\right]$ through a change of variables ($\bfx$ to $R\bfx$) as 

\begin{align*}
\Cov \left[(\mathrm{grad} \,\Phi)(R\bfx),
(\mathrm{grad} \,\Phi)(R\bfx')\right] &= \begin{bmatrix}
    \dfrac{\partial^2 k_{\Phi}(R\bfx, R\bfx')}{\partial x^{(1)} \partial (x')^{{(1)}}} & \dfrac{\partial^2 k_{\Phi}(R\bfx, R\bfx')}{\partial x^{(1)} \partial (x')^{{(2)}}}\\
    \dfrac{\partial^2 k_{\Phi}(R\bfx, R\bfx')}{\partial x^{(2)} \partial (x')^{{(1)}}} & \dfrac{\partial^2 k_{\Phi}(R\bfx, R\bfx')}{\partial x^{(2)} \partial (x')^{{(2)}}} 
\end{bmatrix} \\
&= \begin{bmatrix}
    \dfrac{\partial^2 \kappa(\|R\bfx-R\bfx'\|)}{\partial x^{(1)} \partial (x')^{{(1)}}} & \dfrac{\partial^2 \kappa(\|R\bfx-R\bfx'\|)}{\partial x^{(1)} \partial (x')^{{(2)}}}\\
    \dfrac{\partial^2 \kappa(\|R\bfx-R\bfx'\|)}{\partial x^{(2)} \partial (x')^{{(1)}}} & \dfrac{\partial^2 \kappa(\|R\bfx-R\bfx'\|)}{\partial x^{(2)} \partial (x')^{{(2)}}},
\end{bmatrix} \\
&= R^\top \Cov \left[(\mathrm{grad} \,\Phi)(\bfx),
(\mathrm{grad} \,\Phi)(\bfx')\right] R
\end{align*}

and see that $ \Cov \left[(\mathrm{grad} \,\Phi)(\bfx),
(\mathrm{grad} \,\Phi)(\bfx')\right] =R\Cov \left[(\mathrm{grad} \,\Phi)(R\bfx),
(\mathrm{grad} \,\Phi)(R\bfx')\right])R^\top.$

Similarly, for a collections of $M$ locations $X$ 
we have that $\Cov \left[(\mathrm{grad} \,\Phi)(X),
(\mathrm{grad} \,\Phi)(X)\right] =(R \otimes I_M)\Cov \left[(\mathrm{grad} \,\Phi)(RX),
(\mathrm{grad} \,\Phi)(RX)\right] (R^\top \otimes I_M),$ where $\otimes$ denotes the Kronecker product.

An identical argument (up to a change in the sign of off-diagonal terms) can be used to derive the induced covariance for $\mathrm{rot}\Psi$, $\Cov \left[(\mathrm{rot} \,\Psi)(\bfx),
(\mathrm{rot} \,\Psi)(\bfx')\right]$. We obtain 
\begin{align*}
\Cov \left[(\mathrm{rot} \,\Psi)(\bfx),
(\mathrm{rot} \,\Psi)(\bfx')\right]&=R\Cov \left[(\mathrm{rot} \,\Psi)(R\bfx),
(\mathrm{rot} \,\Psi)(R\bfx')\right]R^\top \quad \text{and}\\
\Cov \left[(\mathrm{rot} \,\Psi)(X),
(\mathrm{rot} \,\Psi)(X)\right]&=(R \otimes I_M)\Cov \left[(\mathrm{rot} \,\Psi)(RX),
(\mathrm{rot} \,\Psi)(RX)\right] (R^\top \otimes I_M).
\end{align*}

 Together, this implies that if we write the covariance of $M$ vector velocity \emph{tr}aining observations $\Ytrain$ at $X$ as 
 \begin{align*}
 \Khtrtr(X,X) :&=\Var [\Ytrain|\Xtrain=X] \\
 &= k_{Helm}(X, X) + \sigma_{obs}^2I_{2M} \\
 &= \Cov \left[(\mathrm{grad} \,\Phi)(X),
(\mathrm{grad} \,\Phi)(X)\right] +\Cov \left[(\mathrm{rot} \,\Psi)(X),
(\mathrm{rot} \,\Psi)(X)\right]+ \sigma_{obs}^2I_{2M}
 \end{align*}
 then 
 $$
 \Khtrtr(X,X)  = (R \otimes I_M) \Khtrtr(RX, RX) (R^\top \otimes I_M)
 $$
 
As a result, for any $R, Y$ and $X$ we may compute the log likelihood according to the likelihood model as  
 \begin{align*}
\log p(\Ytrain=RY &\mid \Xtrain=RX) \\
&= \log \mathcal{N}\left(RY;0, \Khtrtr(RX, RX) \right) \\
&=-M\log(2\pi) -\frac{1}{2}\log\left|  \Khtrtr(RX, RX) \right| \\
&- \frac{1}{2} \log ((R\otimes I_M) Y)^\top \left[ \Khtrtr(RX, RX) \right]^{-1}((R\otimes I_M) Y)\\
&=-M\log(2\pi) -\frac{1}{2}\log\left|  \Khtrtr(X, X) \right| - \frac{1}{2} \log Y^\top \Khtrtr(X, X)^{-1} Y\\
&=\log p(\Ytrain=Y \mid \Xtrain=X),
\end{align*}
as desired.

\end{proof}
\begin{lemma}[Invariance of the conditionals]\label{lemma:invariance_of_conditionals}
The conditionals distributions of the Helmoltz GP are invariant to rotation.
That is, for any $2\times 2$ rotation matrix $R,$
\begin{align*}
&p(\Ytest=\ytest \mid \Xtest=\xtest, \Xtrain=\xtrain, \Ytrain=\ytrain)  \\
=&p(\Ytest=R\ytest \mid \Xtest=R\xtest, \Xtrain=R\xtrain, \Ytrain=R\ytrain) 
\end{align*}
\end{lemma}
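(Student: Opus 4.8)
The plan is to derive the conditional invariance directly from \Cref{lemma:invariance_of_likelihood}, reusing it as a black box rather than recomputing any Gaussian moments. The key observation is that, because $F$ is a Helmholtz GP, the outputs $(\Ytest, \Ytrain)$ are jointly Gaussian given the inputs, so the predictive density is a ratio of two marginal likelihoods,
\[
p(\Ytest=\ytest \mid \Xtest=\xtest, \Xtrain=\xtrain, \Ytrain=\ytrain)
= \frac{p(\Ytest=\ytest,\, \Ytrain=\ytrain \mid \Xtest=\xtest, \Xtrain=\xtrain)}{p(\Ytrain=\ytrain \mid \Xtrain=\xtrain)}.
\]
The numerator is nothing but the marginal likelihood of the Helmholtz GP evaluated on the \emph{combined} collection of $N_*+N$ test and training locations with the concatenated outputs, i.e.\ an instance of the object treated in \Cref{lemma:invariance_of_likelihood} with $M=N_*+N$; the denominator is the training marginal likelihood, another instance with $M=N$.

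Given this factorization, I would apply \Cref{lemma:invariance_of_likelihood} twice: applied to the combined set it gives invariance of the numerator, and applied to the training set alone it gives invariance of the denominator. Since the rotation operator acts coordinatewise on each location, rotating the stacked input (resp.\ output) vector agrees with rotating the test and training blocks separately, so $p(\Ytest=R\ytest,\, \Ytrain=R\ytrain \mid \Xtest=R\xtest, \Xtrain=R\xtrain)$ is exactly the numerator evaluated at the rotated combined data. Dividing the two invariances then gives
\begin{align*}
p(\Ytest=R\ytest \mid \Xtest=R\xtest, \Xtrain=R\xtrain, \Ytrain=R\ytrain)
&= \frac{p(\Ytest=R\ytest,\, \Ytrain=R\ytrain \mid \Xtest=R\xtest, \Xtrain=R\xtrain)}{p(\Ytrain=R\ytrain \mid \Xtrain=R\xtrain)} \\
&= \frac{p(\Ytest=\ytest,\, \Ytrain=\ytrain \mid \Xtest=\xtest, \Xtrain=\xtrain)}{p(\Ytrain=\ytrain \mid \Xtrain=\xtrain)} \\
&= p(\Ytest=\ytest \mid \Xtest=\xtest, \Xtrain=\xtrain, \Ytrain=\ytrain),
\end{align*}
which is the claim.

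The step that needs the most care is the clean application of \Cref{lemma:invariance_of_likelihood} to the joint over test and training outputs, and this is really bookkeeping rather than a genuine obstacle. First, one must check that the combined covariance transforms as required: its kernel part obeys the Helmholtz covariance identity established in the proof of \Cref{lemma:invariance_of_likelihood} (which is pointwise in the locations and hence applies to any combined set), while the observation-noise part, being of the form $I_2\otimes D$ with $D$ diagonal on the point index, satisfies $(R\otimes I)(I_2\otimes D)(R^\top\otimes I)=I_2\otimes D$ because $RR^\top=I$; thus the lemma's determinant-and-quadratic-form computation goes through verbatim, and it continues to hold if the test block is taken noise-free ($D$ with zero test entries), which is what one needs to transfer the result to the latent field in \Cref{prop:equivariance}. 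Second, one must verify the flattening convention is respected when concatenating test and training points, but since $R\otimes I$ applies the same $2\times2$ rotation to every location independently of its index, the grouping of points is immaterial. An alternative, more computational route would compute the predictive mean $\mu$ and covariance $\Sigma$ from the standard conditioning formulas, use the same covariance identity together with the orthogonality of $R$ to show that under the joint rotation $\mu\mapsto R\mu$ and $\Sigma\mapsto R\Sigma R^\top$, and conclude from the unit Jacobian and the invariance of the Gaussian quadratic form; the ratio argument above is preferable because it isolates all the analytic content in \Cref{lemma:invariance_of_likelihood}.
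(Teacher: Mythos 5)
Your proof is correct and takes essentially the same route as the paper: both express the conditional as a ratio of densities and reduce the claim to \Cref{lemma:invariance_of_likelihood}. The only differences are bookkeeping — the paper invokes Bayes' rule and handles the normalizing integral with a change of variables using $|R|=1$, whereas you divide by the training marginal directly and are more explicit that the lemma must be applied to the \emph{combined} test-plus-training collection (including the possibly noise-free test block), a point the paper's citation of the lemma leaves implicit.
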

\begin{proof}
The lemma is obtained by applying Bayes' rule and \Cref{lemma:invariance_of_likelihood} as
\begin{align*}
&p(\Ytest=R\ytest \mid \Xtest=R\xtest, \Xtrain=R\xtrain, \Ytrain=R\ytrain) \\
&=\frac{
p(\Ytest=R\ytest, \Xtest=R\xtest, \Xtrain=R\xtrain, \Ytrain=R\ytrain)
}{
\int p(\Ytest=R\ytest^\prime, \Xtest=R\xtest, \Xtrain=R\xtrain, \Ytrain=R\ytrain) dR\ytest^\prime 
}\\
&=\frac{
p(\Ytest=\ytest, \Xtest=\xtest, \Xtrain=\xtrain, \Ytrain=\ytrain)
}{
\int p(\Ytest=\ytest^\prime, \Xtest=\xtest, \Xtrain=\xtrain, \Ytrain=\ytrain) d\ytest^\prime 
}\\
&=p(\Ytest=\ytest \mid \Xtest=\xtest, \Xtrain=\xtrain, \Ytrain=\ytrain).
\end{align*}
\end{proof}

\textbf{Proof of the equivariance proposition:}

We now prove the proposition.
Recall that 
$$\mu_{F\mid D}(\Xtest, \Xtrain, \Ytrain)=\E[\Ytest\mid \Xtest=\xtest, \Ytrain=\ytrain, \Xtrain=\xtrain].$$
Therefore, for any $R$, we may compute $\mu_{F\mid D}(R\Xtest, R\Xtrain, R\Ytrain)$ as 
\begin{align*}
\mu_{F\mid D}(R\Xtest, R\Xtrain, R\Ytrain)
&=\E[\Ytest\mid \Xtest=R\xtest, \Ytrain=R\ytrain, \Xtrain=R\xtrain] \\
&=\int \ytest \  p(\Ytest=\ytest \mid \Xtest=R\xtest, \Ytrain=R\ytrain, \Xtrain=R\xtrain)  d\ytest \\
&=\int R\ytest\  p(\Ytest=R\ytest \mid \Xtest=R\xtest, \Ytrain=R\ytrain, \Xtrain=R\xtrain)  d\ytest \\
&=\int R\ytest\  p(\Ytest=\ytest \mid \Xtest=\xtest, \Ytrain=\ytrain, \Xtrain=\xtrain)  d\ytest \\
&=\E[R\Ytest \mid \Xtest=\xtest, \Ytrain=\ytrain, \Xtrain=\xtrain] \\
&=R\mu_{F\mid D}(\Xtest, \Xtrain, \Ytrain)
\end{align*}
Where in the third line we perform a change of variables, noting that $|R|=1.$
The fourth line follows from \Cref{lemma:invariance_of_conditionals}.
The final line is a result of linearity of expectation and the definition of $\mu_{F\mid D},$
and provides the desired equality.

\subsection{Non-Equivariance of Velocity GP predictions}
\label{app:equivariancevelgp}
In this appendix, we show that the velocity GP requires special constraints to exhibit reference-frame equivariance, and these constraints force an undesirable coupling of divergence and vorticity length scales.

\propEquivarianceVelGP*

\proof 
We first show that if the velocity GP is reference-frame equivariant and has isotropic kernels for each component, then the kernels for the two velocity components are equal. In order to show this, it suffices to show that the prior is not equivariant, as this is the posterior given the empty dataset. Let $F \sim GP(0, k)$ be a function from $\mathbb{R}^2 \rightarrow \mathbb{R}^2$ by stacking $F(\cdot)=\left[F^{(1)}(\cdot), F^{(2}(\cdot)\right]^T$. Rotational invariance of the prior is equivalent to the condition that for an arbitrary 2-dimensional rotation matrix $R, R^{-1} F(R \cdot)=F(\cdot)$, where equality is in distribution for an entire sample path. Consider a 90 degree rotation of the coordinate axis. In the case of a 90 degree rotation, by isotropy of the kernels, $R^{-1} F(R \cdot)=R^{-1} F(\cdot)=\left[F_2(\cdot), F_1(\cdot)\right]^T$. Considering each coordinate of $R^{-1} F(R \cdot)$, equality in distribution to $F$ implies that $F^{(2)}(\cdot)=F^{(1)}(\cdot)$ in distribution, and so the two components must have the same kernel.

We next show that if the two kernels are equal, then the velocity GP is rotationally equivariant. By \cref{lemma:invariance_of_conditionals}, it suffices to show there exist kernels $k_\Phi, k_\Psi$ such that the prior is equal in distribution to a Helmholtz GP with these kernels. Let $F$ denote a zero mean velocity GP with kernels $k^{(1)} = k^{(2)}$. Let $F'$ denote a Helmholtz GP with kernels $k_\Phi = k_\Psi$, and $k_\Phi(\bfx, \bfx') = \int_{s'=0}^{x^{(1)}}\int_{s=0}^{(x')^{(1)}} k^{(1)}(s, s')dsds'+ \int_{s'=0}^{x^{(2)}}\int_{s=0}^{(x')^{(2)}} k^{(2)}(s, s')dsds'$. Applying \cref{prop:helm-prior} and the fundamental theorem of calculus, we see the covariance functions of $F$ and $F'$ are equal, and since $F$ and $F'$ are zero mean Gaussian processes, they are therefore equal in distribution.

\newpage

\section{Experimental Results}
\label{app:experiment}

In this section, we provide more details on our experimental results. The section is organized in three parts. \Cref{app:sec-synthetic-experiments} focuses on experiments with simulated data. \Cref{app:experiments-laser} focuses on experiments with real data from the LASER experiment \citep{dasaroLASERdata2017}. \Cref{app:experiments-glad} focuses on real data from the GLAD experiment \citep{ozgokmen2012carthe}. In each section, we have one subsection for each experiment. These subsections provide simulation details (e.g., what is the underlying vector field, and how we generated the buoys trajectories), model fitting details (e.g., hyperparameter optimization), and results. At the end of each subsection, we include a figure with these results. All the figures have the same structure. The first column represents ground truths. Second, third and fourth columns contain, respectively, $\sehelmgp$, $\sevelgp$, and D-HNN results. The first two rows represent results for the velocity prediction task: row 1 shows reconstructed velocity fields, row 2 differences from ground truth. Rows 3, 4, and 5 are about divergence: first divergence predictions, then standard deviation and z-values for the two GP models. Finally, rows 6, 7, and 8 concern vorticity: vorticity predictions, standard deviation and z-values for the two GP models. See \cref{fig:vortexapp} for an example. For real data experiments, where we do not have ground truths, we omit the first column and the second row, i.e., all plots involving comparisons with ground truth quantities. See \cref{fig:lasersparse} for an example.

For the simulated experiments, all root mean square errors are evaluated on the grids used to simulate the experiment. Specific grids are discussed in the ``simulation details`` paragraph of each individual experiments subsection. More explicitly, the root mean square error is calculated as,
\begin{align}
    \text{RMSE}=\sqrt{\frac{1}{|L|}\sum_{\bfx \in L} \|F(\bfx) - \hat{F}(\bfx)\|_2^2}
\end{align}
where $F(\bfx)$ denotes the simulated vector field, $\hat{F}(\bfx)$ denotes the predictions of a given model and $L$ is the grid used to simulate the vector field.

\paragraph{Initialization}

In all experiments except the GLAD data, we initialize our parameters such that $\log \ell_{\Phi} = 0, \log \sigma_{\Phi}=0, \log \ell_{\Psi} = 1, \log \sigma_{\Psi} = -1, \log \obsvar = -2$
and likewise
$\log \ell_{1} = 0, \log \sigmaFu=0, \log \ell_{2} = 1, \log \sigmaFv = -1, \log \obsvar = -2$.
We describe the special difficulties of the GLAD data in \cref{app:experiments-glad}. In all other cases, we found that results were not sensitive to initialization.

\subsection{Simulated Experiments}\label{app:sec-synthetic-experiments}

We focus on simulations of key ocean behaviors of interest to oceanographers: vortices, straight currents, concentrated divergences, and combinations thereof.

\subsubsection{Simulated experiment 1: single vortex}
\label{app:subsectionvortex}

A single vortex in the ocean is a fluid flow pattern in which water particles rotate around a central point, with the flow pattern resembling a spiral. These vortices can occur due to a variety of factors such as the wind, currents, and tides. Single ocean vortices, also known as ocean eddies, can have a significant impact on ocean circulation and can transport heat, salt, and nutrients across vast distances. They can also affect the distribution of marine life. The vortex constructed has zero divergence and constant vorticity.

\paragraph{Simulation details.} To simulate a vortex vector field in a two dimensional space, we first define a grid of points $L$ of size 17 x 17, equally spaced over the interval $[-1,1] \times [-1,1]$. For each point $\bm{x} = (x^{(1)}, x^{(2)}) \in L$, we compute the vortex longitudinal and latitudinal velocities by: 
\begin{align*}
    F^{(1)}(\bm{x}) &= -x^{(2)} \\
    F^{(2)}(\bm{x}) &= x^{(1)}
\end{align*}
From these equations we obtain that the divergence of the vortex is 0 for any $\bm{x} = (x^{(1)}, x^{(2)}) \in L$:
\begin{align*}
\delta(\bm{x}) = \mathrm{div} \cdot F = \frac{\partial F^{(1)}}{\partial x^{(1)}} + \frac{\partial F^{(2)}}{\partial x^{(2)}} = 0 + 0 = 0
\end{align*}
and the vorticity is -2 for any $\bm{x} = (x^{(1)}, x^{(2)}) \in L$: 
\begin{align*}
    \zeta(\bm{x}) = \mathrm{curl} \cdot F = \frac{\partial F^{(1)}}{\partial x^{(2)}} - \frac{\partial F^{(2)}}{\partial x^{(1)}} = -1 -1 = -2
\end{align*}

In our simulated experiment, we then use this vector field to simulate buoys trajectories, i.e. the evolution of buoys positions and velocities across time. In doing so, we make an implicit stationarity assumption about the vector field. That is, we assume that across the total time where we want to simulate buoys trajectories, the vector field remains the same. Then we fix starting positions for the desired amount of buoys, in this case 4. We set these to be just on one side of the vortex, to evaluate the ability of the models to reconstruct the full vortex by having access to observations covering only a portion of it. We pick the total time (here 1) for which we observe the trajectories, and the amount of time steps at which we want to observe the buoys trajectories (here 2), to split the total time. To find the trajectories, we solve the velocity-time ordinary differential equation, $d \bm{x}/dt = F$, where $d/dt$ represents the time-derivative operator. Once we obtain the evolution of buoys' locations, we obtain the corresponding velocities by doing a linear interpolation of the underlying vortex field. By doing this interpolation, we end up with our simulated dataset, consisting in this case of 8 observations. 

\paragraph{Model fitting.} We are interested in evaluating the models' capabilities of reconstructing the full vortex, and capturing the underlying divergence and vorticity structure. To do so, we consider test locations corresponding to the grid $L$, so that we can compare our results with the ground truth, for velocities, divergence, and vorticity. To fit the $\sehelmgp$, we initialize the hyperparameters as explained in the Initialization paragraph at the start of this appendix:  $\ell_{\Phi} = 1, \sigma_{\Phi}=1, \ell_{\Psi} = 2.7, \sigma_{\Psi} = 0.368, \obsvar = 0.135$. The objective function of our optimization routine is the log marginal likelihood from \cref{eqn:likelihood}. We optimize the parameter using the gradient-based algorithm Adam \citep{kingma2015adam}. Note that we optimize the hyperparameters in the $\log$-scale. That is, we consider as parameters in the optimization step $\log\ell_{\Phi}, \log\sigma_{\Phi}, \log\ell_{\Psi}, \log\sigma_{\Psi}$, and $\log\obsvar$, and we exponentiate these when evaluating the log marginal likelihood. In doing so, we ensure that the optimal parameters are positive, as needed in this model. We run the optimization routine until the algorithm reaches convergence. In this case, the convergence criterion is the difference of log marginal likelihood in two consecutive optimization steps being less than $10^{-4}$. This convergence is achieved in less than 1000 iterations. The optimal hyperparameters are: $\ell_{\Phi} = 1.1131, \sigma_{\Phi}=0.0342, \ell_{\Psi} = 1.5142, \sigma_{\Psi} = 0.8884, \obsvar = 0.1597$. The same optimization routine is performed for the $\sevelgp$. In this case, the initial hyperparameters are $\ell_{1} = 1, \sigmaFu=1, \ell_{2} = 2.7, \sigmaFv = 0.368, \obsvar = 0.135$. The optimal hyperparameters are: $\ell_{1} = 1.6191, \sigmaFu=0.9710, \ell_{2} = 2.7183, \sigmaFv = 0.5811, \obsvar = 0.1759$. For both optimization routines, we tried different initial parametrizations, and the results agree substantially both in terms of RMSEs and visual reconstruction. Finally, to train the D-HNN model, we run the training routine provided in \citet{greydanus2022dissipative} code. 

\paragraph{Results.} We show the results in \cref{fig:vortexapp}. For each of the plots, the horizontal and vertical axes represent, respectively, latitude and longitude. The first row represents the ground truth simulated vector field (left), and the reconstruction using the $\sehelmgp$ (center-left), the $\sevelgp$ (center-right) and the D-HNN (right). Red arrows are the observed buoy data, black arrows show the predicted current at test locations. We can see how our method predicts a full vortex covering the spatial domain, whereas the $\sevelgp$ predicts a smooth curve with much longer length scale, that does not resemble the ground truth. The D-HNN prediction looks more similar to a vortex, but still not as good as the $\sehelmgp$. To support this claim we also show differences from the ground truth in the second row. Finally, note that the RMSE for the $\sehelmgp$ is 0.24, whereas for the $\sevelgp$ it is 0.72 and for the D-HNN is 0.54. 

In the third row, we analyze the divergence. The left box shows the constantly zero ground truth. Our model prediction (center-left) correctly captures this behavior, whereas the $\sevelgp$ (center-right) predicts an irregular pattern not resembling the truth. The same happens for D-HNN (right box). In the fourth row we show the standard deviation of divergence predictions for the two GP models, and we can see how the $\sehelmgp$ is very certain that there is no divergence, whereas the uncertainty for the $\sevelgp$ predictions is higher. Finally, in the fifth row, we show the z-values for the divergence prediction, defined as the ratio between the mean and the standard deviation. This is a measure of how far from zero the prediction is, measured in terms of standard deviation. Some standard cut-off values for this quantity are $-1$ and $+1$, and one usually concludes that the prediction is significantly different (in the sense of one standard deviation) from 0 if the corresponding z-value is beyond these thresholds. By using this indicator, we conclude that none of the two predictions are significantly far from zero, so both models are accurate in predicting zero divergence, but our prediction is more precise, in the sense that the mean is closer to the real value and the uncertainty is lower. This is confirmed by looking at RMSEs: $0.0$ for the $\sehelmgp$, $0.22$ for $\sevelgp$, and $0.87$ for the D-HNN. 

Finally, in the last three rows we analyze results for the vorticity. The left box shows the constant (-2) ground truth. The $\sehelmgp$ (center-left) predicts that the vorticity is around that value, especially in the center of the vortex, whereas in the corners the behavior is not as good. The $\sevelgp$ (center-right) performs much worse also here, by predicting vorticity very close to zero, or positive, on almost all the spatial domain. The D-HNN (right box) predicts negative vorticity in most of the domain, but the pattern is very irregular. In the second-to-last row we show the standard deviation of divergence predictions for the two GP models, and we can see how the range of uncertainties on this task is more similar than before, meaning that there are areas where both models are not very confident. Still, if we look at the z-values in the last row, combined with the prediction plots, we see our model is better at predicting the magnitude and size of the vorticity area. In terms of RMSEs, we have $0.77$ for the $\sehelmgp$, $1.05$ for the $\sevelgp$, and $1.03$ for the D-HNN. 

In general, in this experiment we have shown that when working with this very simple underlying vector field, our model behaves better than the alternatives. In particular, we have seen how the prediction of the vortex is very accurate for the $\sehelmgp$, whereas the two other models are more off (and this is reflected in the respective RMSEs). In terms of divergence, our model predicts with certainty that there is no divergence, whereas the $\sevelgp$ approach is less precise (by predicting non-zero divergence with high uncertainty). Finally, we saw how in terms of vorticity our model is the only one able to understand that there is a non-zero vorticity: even if the prediction is not perfect, it is still significantly better than all the other models. 

\afterpage{
\begin{figure*}[!h]
    \centering
    \includegraphics[width=0.96\textwidth]{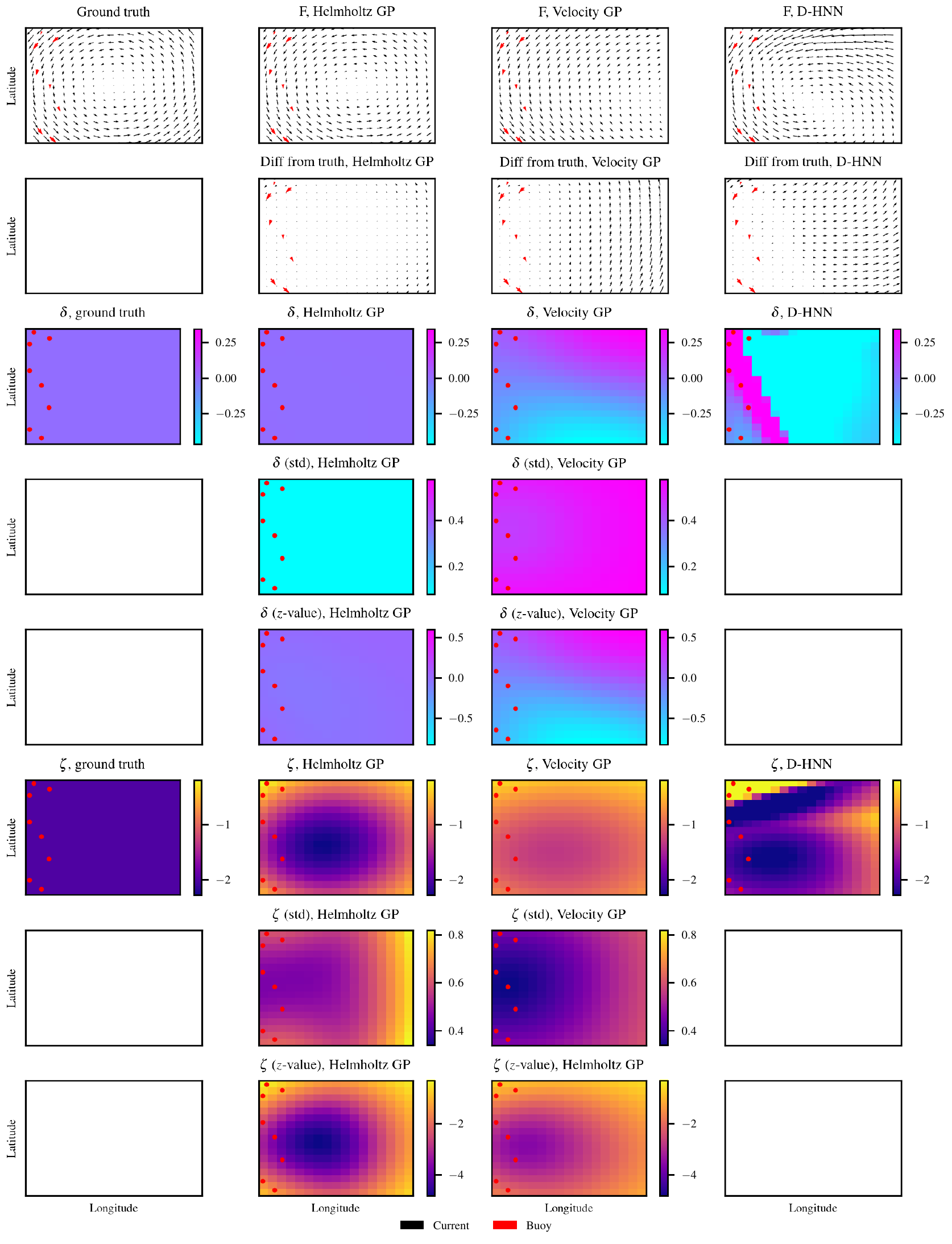}
    \caption{Single vortex. First column: ground truths. Second column: $\sehelmgp$ results. Third column: $\sevelgp$ results. Fourth column: D-HNN results.}
    \label{fig:vortexapp}
\end{figure*}
\clearpage
}

\newpage

\subsubsection{Simulated experiment 2: vortex adjacent to straight current}\label{app:sec-vortex-current}

This task elaborates on the previous one by splitting the spatial domain in two regions: the upper region has a single vortex, and the lower region has a constant and straight flow from left to right. %at the bottom. In the lower region, fluid particles move in a straight line with a constant velocity.
%This behavior could be caused by a steady wind or current that is blowing or flowing in one direction.
%The velocity field is discontinuous at the boundary, and therefore the divergence and vorticity are not defined as fields along this line. Particularly, if one considers regions containing the boundary, they in general have negative flux on the left side of the view (were the vortex has downward velocity) and positive flux on the right hand side of the view (where the vortex has upward velocity).

The velocity field is discontinuous at the boundary, and therefore the divergence and vorticity are not defined as fields along this line.
In general, though, if we were to consider a smaller sub-region containing the boundary on the lefthand side (where the vortex has downward velocity), we expect such a sub-region to have a negative flux. Analogously, if we were to consider a sub-region containing the boundary on the righthand side (where the vortex has upward velocity), we would expect that sub-region to have a positive flux. 
%In general, we expect negative flux on the left side of the view (were the vortex has downward velocity) and positive flux on the right hand side of the view (where the vortex has upward velocity).
%This example has divergence concentrated along a line, leading to a discontinuous vector field.
% Since the divergence and vorticity fields are not defined everywhere, it is not obvious what the desired behavior is.
% But the $\sehelmgp$ provides a reasonable approximation to the underlying discontinuous  velocity field.

Since the divergence and vorticity fields are not defined everywhere, it is not obvious what the desired behavior is in terms of recovering these fields. Due to this ambiguity, we do not report the results of this experiment in the main text. However, these discontinuities are reminiscent of fronts, which are of substantial interest to oceanographers, so we include this experiment in this appendix for completeness and in case it spurs future advancements.

\paragraph{Simulation details.} To simulate such a vortex vector field, we first define a grid of points $L$ of size 25 x 50, equally spaced over the interval $[-1,1] \times [-1,2]$. We can see this grid as composed of two subgrids $L_1$ and $L_2$, each of dimension 25 x 25, with $L_1$ representing the top grid and $L_2$ the lower one.  Next, for each point $\bm{x} = (x^{(1)}, x^{(2)}) \in L_1$, we compute the vortex as done in \cref{app:subsectionvortex}:
\begin{align*}
    F^{(1)}(\bm{x}) &= -x^{(2)} \\
    F^{(2)}(\bm{x}) &= x^{(1)}
\end{align*}
and we still have $\delta(\bm{x}) = 0$ and $\zeta(\bm{x}) = -2$ for any $\bm{x} = (x^{(1)}, x^{(2)}) \in L_1$. 

For each point $\bm{x} \in L_2$, we simulate a constant field with the following equations:
\begin{align*}
    F^{(1)}(\bm{x}) &= 0.7 \\
    F^{(2)}(\bm{x}) &= 0.
\end{align*}
The divergence and vorticity for each $\bm{x} \in L_2$
are $\delta(\bm{x}) = 0$ and $\zeta(\bm{x}) = 0$. 

As done for the previous experiment, we then use this vector field to simulate buoys trajectories making the stationarity assumption. Here we consider 7 buoys, covering the full region, observed for a total time of 0.5 and 2 time steps. We reconstruct the buoys trajectories by solving the ODE and interpolating as specified before. By doing this interpolation, the simulated dataset consists of 14 observations.

\paragraph{Model fitting.} We fit the three models with the routine specified in \cref{app:subsectionvortex}. To fit the $\sehelmgp$, we initialize the hyperparameters as follows:  $\ell_{\Phi} = 1, \sigma_{\Phi}=1, \ell_{\Psi} = 2.7, \sigma_{\Psi} = 0.368, \obsvar = 0.135$. The optimal hyperparameters are: $\ell_{\Phi} = 3.8698, \sigma_{\Phi}=0.0885, \ell_{\Psi} = 1.2997, \sigma_{\Psi} = 0.9773, \obsvar = 0.0609$. The same optimization routine is performed for the $\sevelgp$. In this case, the initial hyperparameters are $\ell_{1} = 1, \sigmaFu=1, \ell_{2} = 2.7, \sigmaFv = 0.368, \obsvar = 0.135$. The optimal hyperparameters are: $\ell_{1} = 0.9397, \sigmaFu=1.0755, \ell_{2} = 2.7183, \sigmaFv = 0.5528, \obsvar = 0.0087$. For both optimization routines, we tried different initial parametrizations, and the results agree substantially both in terms of RMSEs and visual reconstruction. 

\paragraph{Results.} We show the results in \cref{fig:contvortexapp}. As before, for each of the plots, the horizontal and vertical axes represent, respectively, latitude and longitude. The first row represents the ground truth simulated vector field (left), and the reconstruction using the $\sehelmgp$ (center-left), the $\sevelgp$ (center-right) and the D-HNN (right). Red arrows are the observed buoy data, black arrows show the predicted current at test locations. We can see how our method predicts accurately the vortex structure, whereas it has some problems in the lower right corner, and smooths out the discontinuity at the boundary. The $\sevelgp$ is accurate as well for the vortex part, but has a significant issue in the lower subgrid: the current flows from left to right, then gets interrupted, and then restarts in a different direction. This behavior goes against the idea that currents are continuous (by conservation of momentum). The D-HNN predictions look very similar to the $\sehelmgp$. In the second row we include the differences from the ground truth, and these show as well that $\sehelmgp$ and D-HNN are accurate, whereas the $\sevelgp$ has issues in the lower part of the grid. The RMSE for the $\sehelmgp$ is $0.30$, whereas for the $\sevelgp$ it is $0.49$ and for the D-HNN is $0.28$. 

In the third row, we analyze the divergence. The left box shows the constantly zero ground truth except for at the boundary line, where the divergence is undefined. Our model captures the zero divergence outside the boundary line, but does not estimate any divergence at or around the boundary line. The $\sevelgp$ (center-right) estimates an irregular pattern. The D-HNN divergence estimate has the perhaps desirable property that it is negative on the left side and positive on the right side, which might lead to reasonable predictions of flux into and out of regions containing the boundary between the straight current and vortex.
%The main difference from before is the size of the predicted divergence in the $\sevelgp$, which is now much bigger. This difference is reflected in the z-values: now there are various regions in the domain where z-values for the $\sevelgp$ divergence are beyond the thresholds +1 and -1. This means that the $\sevelgp$ predicts significantly non-zero divergence, and this is a big difference compared to our model.
As it is unclear what a reasonable approximation to the divergence is when the divergence is concentrated on a line, we do not report RMSE on this example. % We have 0.0 RMSE for the $\sehelmgp$, $0.57$ for the $\sevelgp$, and $0.51$ for the D-HNN.  

Finally, in the last three rows we analyze results for the vorticity. The left box shows the ground truth, again ignoring the boundary line where the vorticity is undefined. Here both the GP models' predictions look very similar. %Neither is able to model the vorticity well at the boundary of $G_1$ and $G_2$, where the vorticity is discontinuous (and not well-defined).  Nonetheless,
Both predict that there is a negative vorticity area in the top grid, and a close-to-zero vorticity area in the lower grid.
%The D-HNN predicts accurately the boundary, but the magnitudes are slightly off.
The D-HNN vorticity estimates in the upper and lower regions are similar to those of the GPs. Also the D-HNN estimate suggests a crisper boundary than the GP approaches do.
%In terms of standard deviation, both GP models are quite uncertain about their predictions, and this uncertainty is reflected in the z-values plot.
In the standard deviation and z-value plots, we see that both GP models seem quite certain about the existence of a negative vorticity at least in part of the upper region.
We again do not report RMSE for vorticity in this example as the vorticity is not well-defined over the entire region we consider. %We have the following RMSEs: 0.88 for the $\sehelmgp$, 1.08 for the $\sevelgp$, and 1.91 for the D-HNN. 

In summary, in this experiment we showed a situation in which the $\sehelmgp$ is at least as good as the other two models in predicting the velocity field. It is not entirely clear what desirable reconstruction of the divergence and vorticity field is, as they are not defined on the boundary of the two regions, and the models show substantially different properties. Finally, we note that this field violates the modeling assumptions made by the the two GP models (particularly continuity) and further modeling innovation is likely needed to improve fidelity in examples like this one.

% (ii) significantly better than the other models in understanding there is no divergence, and (iii) has the same problems as the $\sevelgp$ for predicting a discontinuous vorticity field. 

\afterpage{
\begin{figure*}[!h]
    \centering
    \includegraphics[width=0.96\textwidth]{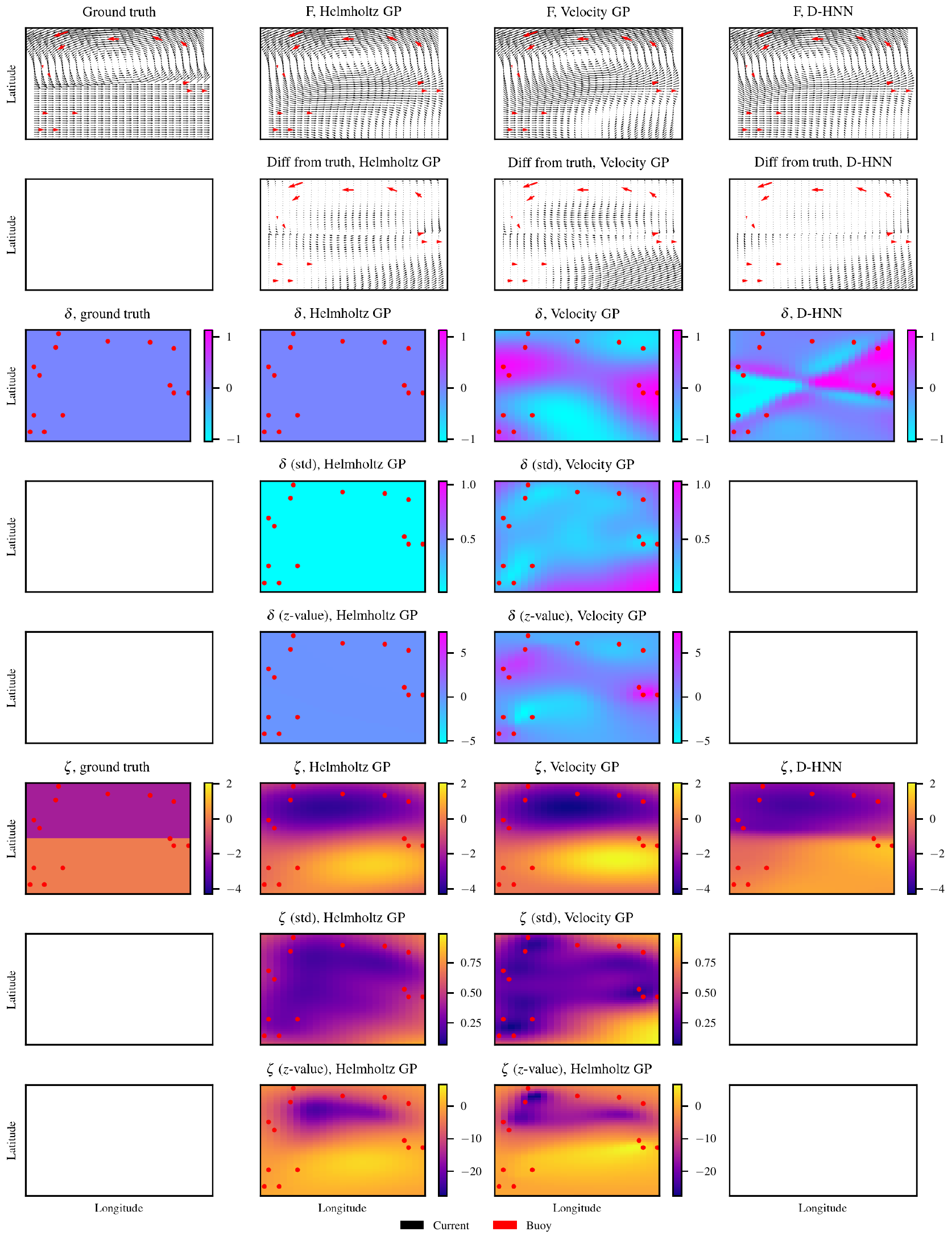}
    \caption{Vortex adjacent to straight current. First column: ground truths. Second column: $\sehelmgp$ results. Third column: $\sevelgp$ results. Fourth column: D-HNN results.}
    \label{fig:contvortexapp}
\end{figure*}
\clearpage
}

\newpage

\subsubsection{Simulated experiment 3: areas of concentrated divergence}\label{app:sec-divergence}

 For this experiment, we consider three different scenarios, differing only in the size of the divergence area. In each of these, we observe 5 buoys, each for 2 time steps. For all the scenarios, the two models perform very well. We show the results in \cref{fig:smalldivapp,fig:middivapp,fig:bigdivapp,}. Both models reconstruct the velocity field well, see \cref{tab:mse} for more details. Moreover, this field has no vorticity and divergence that peaks at the center of the region and slowly decreases in a circular way. This behavior is captured by both models in an accurate way. We conclude that in this important simulated experiment our model is at least as good as the $\sevelgp$ approach.

 A vector field with a single diffuse area of divergence simulates the behavior of an ocean fluid flow in which the water particles are spreading out from a particular region. This behavior can be caused by a variety of factors, such as the movement of warm and cold water masses, or the mixing of fresh and salt water, and can lead to increase in nutrient concentration and high primary production and biodiversity. The fluid particles in this area are not rotating in a circular motion as in a vortex, but instead moving away from each other, resulting in a decrease in density and velocity.

\paragraph{Simulation details.} To simulate a vector field with a divergence area in a two dimensional space, we first define a grid of points $L$ of size 20 x 20, equally spaced over the interval $[-2,2] \times [-2,2]$. The point $(0,0)$ represents the center of the divergence area. To obtain a vector field with divergence area around this point, for each point $\bm{x} = (x^{(1)}, x^{(2)}) \in L$, we can compute the longitudinal and latitudinal velocities by: 
\begin{align*}
    F^{(1)}(\bm{x}) &= \frac{x^{(1)}}{b_d+R_d^2(\bm{x})}  \\
    F^{(2)}(\bm{x}) &= \frac{x^{(2)}}{b_d+R_d^2(\bm{x})}
\end{align*}
with $R_d(\bm{x}) = ((x^{(1)})^2 + (x^{(2)})^2 $ being the distance from the center of divergence, and $b_d$ a parameter governing the size of the area of divergence. Larger $b_d$ implies larger area, but also smaller value at the center. Intuitively, this parameter measures how diffuse the divergence around a center point is. This intuition can be confirmed by computing the actual divergence value: 
\begin{align*}
\delta(\bm{x}) = \mathrm{div} \cdot F = \frac{\partial F^{(1)}}{\partial x^{(1)}} + \frac{\partial F^{(2)}}{\partial x^{(2)}} = \frac{b_d + (x^{(1)})^2 - (x^{(2)})^2}{(b_d+R_d^2(\bm{x}))^2} + \frac{b_d - (x^{(1)})^2 + (x^{(2)})^2}{(b_d+R_d^2(\bm{x}))^2} = \frac{2b_d}{(b_d+R_d^2(\bm{x}))^2}.
\end{align*}
For the vorticity instead we have
\begin{align*}
    \zeta(\bm{x}) = \mathrm{curl} \cdot F = \frac{\partial F^{(1)}}{\partial x^{(2)}} - \frac{\partial F^{(2)}}{\partial x^{(1)}} = \frac{-2(x^{(1)})(x^{(2)})}{(b_d+R_d^2(\bm{x}))^2} + \frac{2(x^{(1)})(x^{(2)})}{(b_d+R_d^2(\bm{x}))^2} = 0.
\end{align*}

The goals for each model then are to (1) reconstruct the velocity field in an accurate way, (2) predict that there is a divergent area and its size, and (3) predict zero vorticity. Finally note that in this experiment, we propose three different scenarios, where the only difference is how diffuse the divergence areas are. Specifically, we run three different experiments with $b_{\text{small}} = 0.4$, $b_{\text{medium}} = 2$, and $b_{\text{big}} = 15$. 

As before, our observations are simulated buoy trajectories. For each scenario the simulation part is the same. We simulate 5 buoys, starting in the non-divergent areas, observed for a total time of 3, and we consider 2 time steps.  Overall we have 10 observations. As usual, to get these trajectories we solve the velocity-time ODE and interpolate.

\paragraph{Model fitting.} For each of the three scenarios, we fit the three models with the routine specified in \cref{app:subsectionvortex}. The hyperparameter initialization for both GPs is always the same across the three different scenarios: $\ell_{\Phi} = 1, \sigma_{\Phi}=1, \ell_{\Psi} = 2.7, \sigma_{\Psi} = 0.368, \obsvar = 0.135$ for the $\sehelmgp$, $\ell_{1} = 1, \sigmaFu=1, \ell_{2} = 2.7, \sigmaFv = 0.368, \obsvar = 0.135$ for the $\sevelgp$. We provide the optimal hyperparameters for each scenario in the corresponding subsections.

\paragraph{Result: small divergence area, $b_{\text{small}} = 0.5$.}

The optimal hyperparameters in this scenario are the following: 
\begin{itemize}
    \item $\ell_{\Phi} = 1.1314, \sigma_{\Phi}=1.9422, \ell_{\Psi} = 5.3132, \sigma_{\Psi} = 0.1864, \obsvar = 0.1821$ for the $\sehelmgp$
    \item $\ell_{1} = 0.5078, \sigmaFu=1.6570, \ell_{2} = 2.7183, \sigmaFv = 1.8658, \obsvar = 0.1396$ for the $\sevelgp$.
\end{itemize}

In \cref{fig:smalldivapp} we show the results of this scenario. As before, for each of the plots, the horizontal and vertical axes represent, respectively, latitude and longitude. The first row represents the ground truth simulated vector field (left), and the reconstruction using the $\sehelmgp$ (center-left), the $\sevelgp$ (center-right) and the D-HNN (right). Red arrows are the observed buoy data, black arrows show the predicted current at test locations. All three models have some problems in reconstructing the underlying field. The two GPs are particularly problematic, because they predict constant strong current that abruptly stops in regions where there are no buoys. The predictions are particularly bad for the $\sevelgp$, which fails to understand the direction and size of the current in most of the region. 
The D-HNN prediction is the one that looks better here, but it is still problematic in the sense that far away from the buoys the current starts to rotate. The plots in the second row showing the difference from the ground truth show that all these models provide poor performances on this task. In terms of RMSE, we have $1.11$ for the $\sehelmgp$, $1.25$ for the $\sevelgp$, and $0.67$ for the D-HNN, confirming that our model performs much better.

In the third row, we analyze the divergence. The left box shows the divergence structure of this field. As described in the preamble, since $b_d$ is small, we have a small area of divergence with big magnitude. The two GP models identify this area. The $\sevelgp$ is more accurate in predicting the size of the divergence area. The $\sehelmgp$ predicts that there is a divergence area in the middle and gets the correct magnitude, but predicts it to be larger than it actually is. If we consider the z-value plots, we can see that this intuition is confirmed: the $\sevelgp$ predicts only a small area to have significant non-zero divergence, whereas our model overestimates the size of this area. The prediction of the D-HNN is less accurate. In terms of RMSEs, we have $2.62$ for the $\sehelmgp$, $1.45$ for the $\sevelgp$, and $4.14$ for the D-HNN. 

In the last three rows of the plot we have, as usual, the vorticity analysis. The left box shows the ground truth. Here the $\sehelmgp$ perfectly predicts zero vorticity, and the D-HNN is almost correct too. The $\sevelgp$, on the contrary, predicts very irregular vorticity, with very high uncertainty. If we consider the z-value plots, we see there is one region (in the center) where the vorticity is predicted to be non-zero in a significant manner. This is a problematic behavior that the $\sevelgp$ has and our model has not. We have $0.0$ RMSE for the $\sehelmgp$, $1.07$ for the $\sevelgp$, and $0.31$ for the D-HNN. 

\paragraph{Result: medium divergence area, $b_{\text{small}} = 5$.}

The optimal hyperparameters in this scenario are the following: 
\begin{itemize}
    \item $\ell_{\Phi} = 1.9387, \sigma_{\Phi}=1.2387, \ell_{\Psi} = 2.3894, \sigma_{\Psi} = 0.2192, \obsvar = 0.0675$ for the $\sehelmgp$ 
    \item $\ell_{1} = 1.6067, \sigmaFu=0.8181, \ell_{2} = 2.7183, \sigmaFv = 0.9859, \obsvar = 0.0742$ for the $\sevelgp$. 
\end{itemize}

\cref{fig:middivapp} shows the results of this scenario. In the top part we have as always the velocity predictions. Since the divergence area is more diffuse, both the velocity of the current and the lengthscale of variation are smaller, in the sense that there are less sharp deviations. Compared to the previous scenario, this property of the field makes the prediction task easier for all three models. In particular, the $\sehelmgp$ and $\sevelgp$ predict a field that almost resembles identically the ground truth. The D-HNN still has some issues, specifically it predict some rotations far away from the observations. This behavior can be seen by looking at the difference from ground truth in the second row. We have the following RMSEs: $0.17$ for the $\sehelmgp$, $0.19$ for the $\sevelgp$, and $0.55$ for the D-HNN.  

For the divergence, by looking at the ground truth plot on the left, we see the area of divergence is now more diffuse, and the magnitude is lower. Both the $\sehelmgp$ and the $\sevelgp$ predict this area accurately, both in terms of size and magnitude (they both predict this area to be a bit larger than it actually is). The D-HNN picks up divergence in a very irregular way. In terms of uncertainty, both GP models are more certain about their predictions around the buoys, and the z-values reflect this behavior: the area where the divergence is significantly different from zero (z-value above 1) is almost identical to the actual ground truth. The RMSEs are: $0.39$ for the $\sehelmgp$, $0.33$ for the $\sevelgp$, $1.32$ for the D-HNN. 

For the vorticity, we observe that the performances of all models are now worse. The $\sehelmgp$ still predicts vorticity very close to zero almost everywhere, but not exactly zero as before. The predictions for the $\sevelgp$ still look less accurate and irregular. The D-HNN performance is very poor. In terms of uncertainty, the $\sehelmgp$ has low uncertainty about its prediction, and this leads to an area where there is significantly non-zero vorticity (in terms of z-value). This behavior is somehow problematic, but note that the predicted mean is in absolute value very close to zero in that area too. The z-value for the $\sevelgp$ is as in the previous scenario, predicting significantly non-zero divergence in an area where the mean is quite distant from zero. Again, this is a very undesirable behavior. The RMSEs are: $0.05$ for the $\sehelmgp$, $0.12$ for the $\sevelgp$, $0.38$ for the D-HNN. 

\paragraph{Result: big divergence area, $b_{\text{small}} = 15$.}

We finally study the last scenario, with the big area of divergence. The optimal hyperparameters in this scenario are the following: 
\begin{itemize}
    \item $\ell_{\Phi} = 3.3732, \sigma_{\Phi}=0.8362, \ell_{\Psi} = 14.7644, \sigma_{\Psi} = 0.0659, \obsvar = 0.0074$ for the $\sehelmgp$ 
    \item $\ell_{1} = 2.3456, \sigmaFu=0.3376, \ell_{2} = 2.7183, \sigmaFv = 0.3355, \obsvar = 0.0055$ for the $\sevelgp$. 
\end{itemize}

In \cref{fig:bigdivapp} we show the results of this scenario. Here the divergence areas are even more diffuse, and this  seems to help a lot the $\sehelmgp$ predictions but not so much the other methods. 

For the velocity prediction task, the three models produce predictions that are close to the truth. It is clear, however, that the predictions of the $\sehelmgp$ are more precise, whereas both the $\sevelgp$ and D-HNN predict some rotational shapes that should not be there. This result is confirmed by the RMSE: $0.04$ for the $\sehelmgp$,  $0.10$ for the $\sevelgp$, and $0.19$ for the D-HNN. 

In terms of divergence, predictions for the two GP models are similar, but our model is slightly better at predicting the full size of the region, with low uncertainty. The D-HNN prediction is again poor. The z-values show how in the central area, both models significantly predict non-zero divergence, but further away in the corners z-values get closer and closer to zero. This behavior is due to the distribution of the buoys' observations. The RMSEs are: $0.05$ for the $\sehelmgp$, $0.12$ for the $\sevelgp$, and $0.27$ for the D-HNN. 

Finally, if we consider the vorticity, we can see how here the $\sehelmgp$ is superior to the other two methods, as in the two previous scenarios. It is able to detect that there is no vorticity, with very low uncertainty. The $\sevelgp$, on the contrary, predicts non-zero positive vorticity in the left side of the plot, and non-zero negative vorticity in the right side. These predictions are with low uncertainty and hence significant, as can be seen by looking at the z-values plot (most of the domain has z-values beyond the thresholds +1 and -1).  The prediction with D-HNN is in similar to the $\sevelgp$ one. The RMSEs are: $0.0$ for the $\sehelmgp$, $0.10$ for the $\sevelgp$, and $0.11$ for the D-HNN. 

In general, we saw how in these experiment the $\sehelmgp$ is at least as good as the other two methods in almost all the velocity prediction tasks, as good as the $\sevelgp$ for the divergence tasks, and remarkably better in predicting that there is no vorticity.

\afterpage{
\begin{figure*}[!h]
    \centering
    \includegraphics[width=0.96\textwidth]{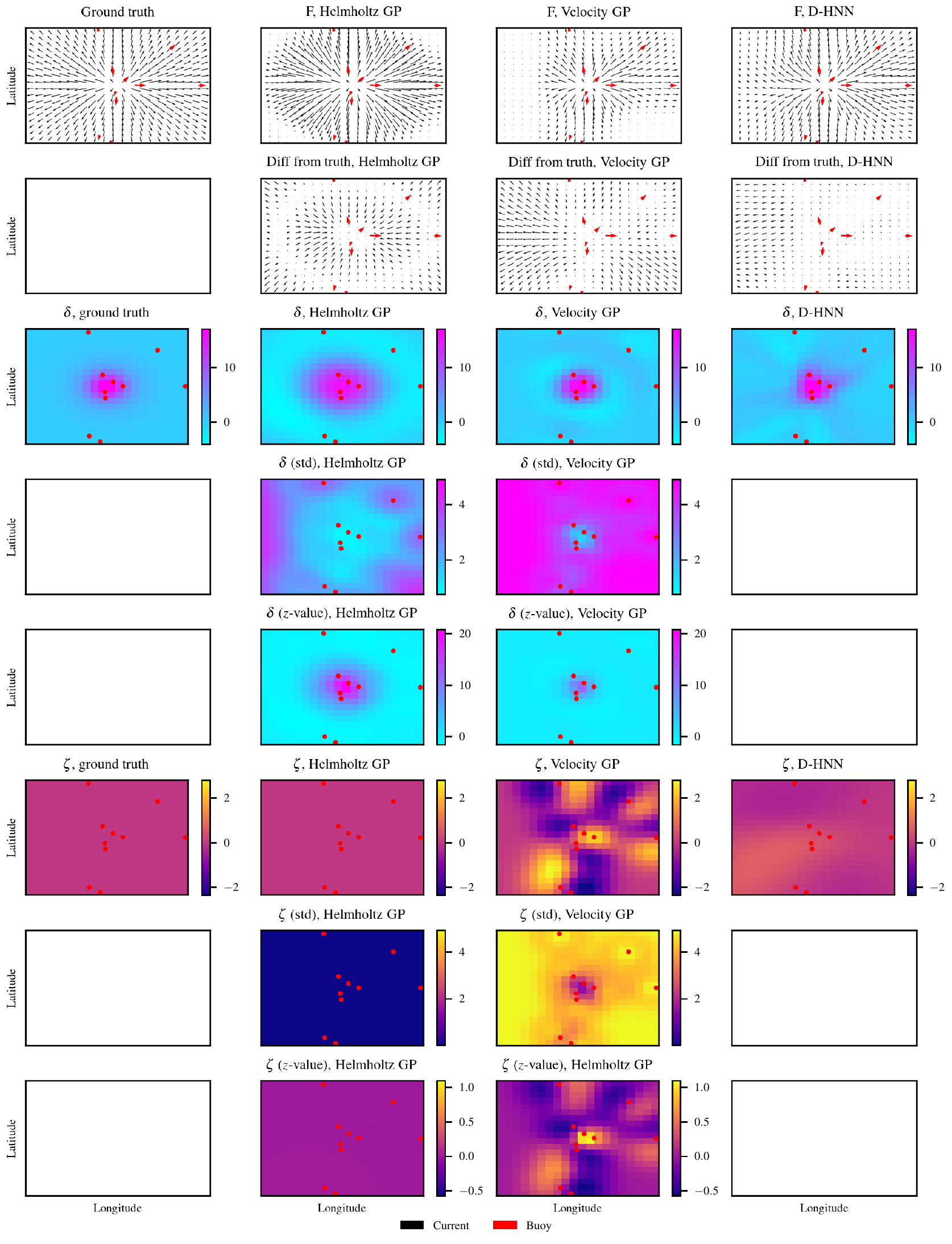}
    \caption{Small area of divergence. First column: ground truths. Second column: $\sehelmgp$ results. Third column: $\sevelgp$ results. Fourth column: D-HNN results.}
    \label{fig:smalldivapp}
\end{figure*}
\clearpage}

\afterpage{
\begin{figure*}[!h]
    \centering
    \includegraphics[width=0.96\textwidth]{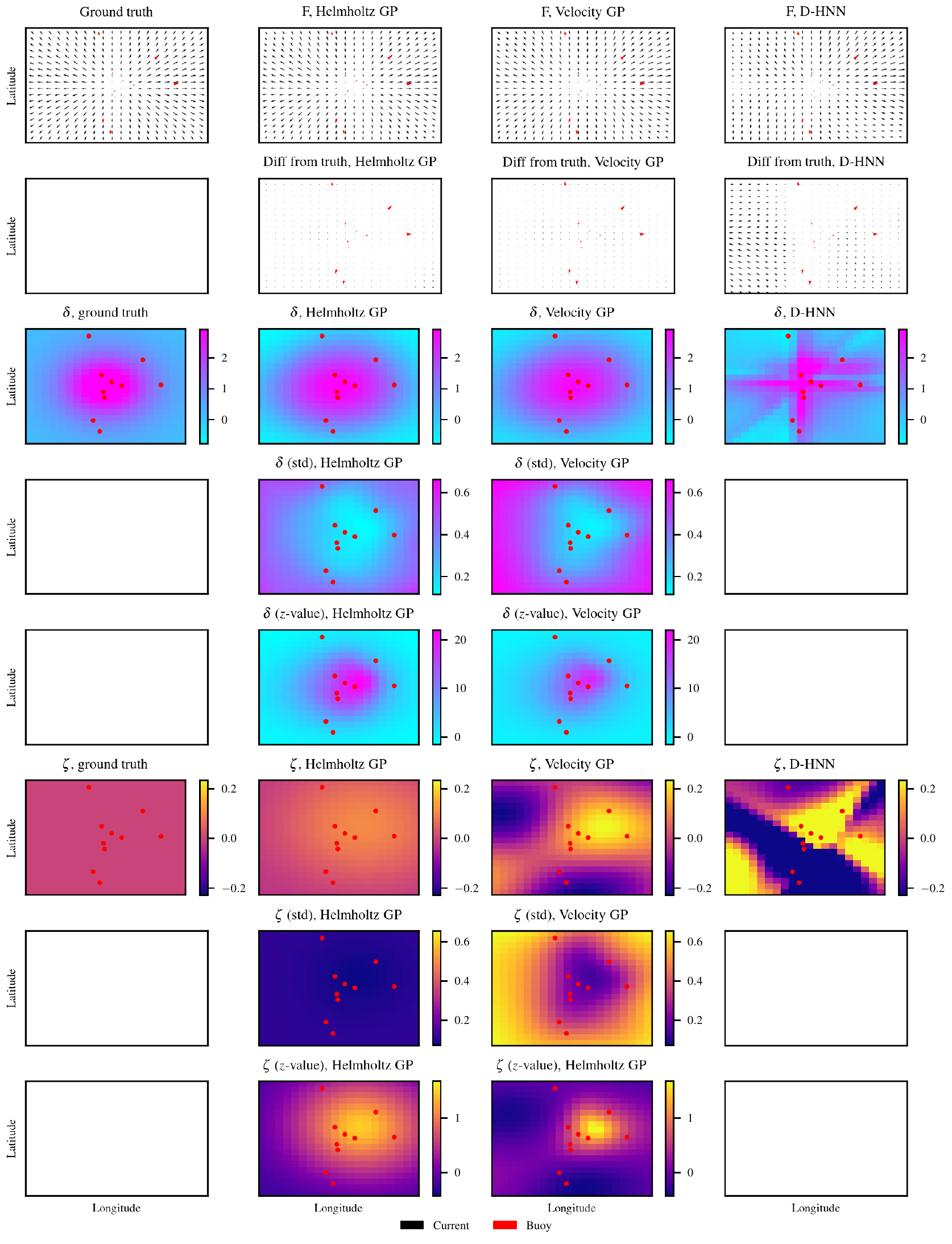}
    \caption{Medium area of divergence. First column: ground truths. Second column: $\sehelmgp$ results. Third column: $\sevelgp$ results. Fourth column: D-HNN results.}
    \label{fig:middivapp}
\end{figure*}
\clearpage}

\afterpage{
\begin{figure*}[!h]
    \centering
    \includegraphics[width=0.96\textwidth]{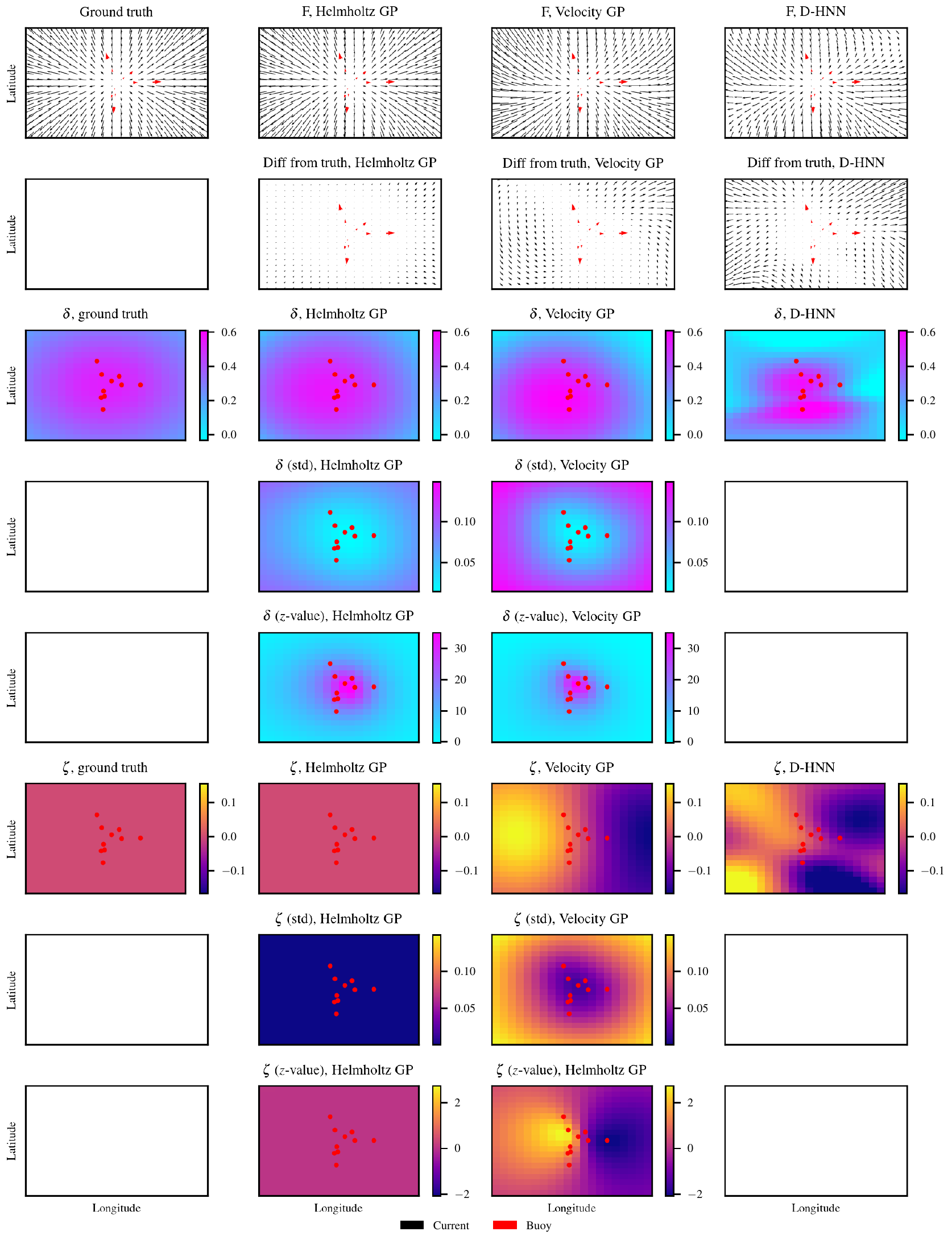}
    \caption{Big area of divergence. First column: ground truths. Second column: $\sehelmgp$ results. Third column: $\sevelgp$ results. Fourth column: D-HNN results.}
    \label{fig:bigdivapp}
\end{figure*}
\clearpage}

\newpage

\subsubsection{Simulated experiment 4: Duffing oscillator with areas of concentrated divergence}\label{app:sec-duffing-divergence}

The Duffing oscillator is a nonlinear dynamic system that can be used to study the dynamics of oceanic phenomena such as tides and currents. In this experiment, we add to this system a divergence area on the left region and a convergence area on the right one. See top-left plot in \cref{fig:smallduffing}. In this way we obtain a field that has both divergence (positive on the right, negative on the left), and vorticity (for the underlying Duffing system). 

\paragraph{Simulation details.} To simulate a Duffing oscillator in a two dimensional space, we first define a grid of points $L$ of size 30 x 30, equally spaced over the interval $[-4,4] \times [-4,4]$. Next, for each point $\bm{x} = (x^{(1)}, x^{(2)}) \in L$, we compute the Duffing longitudinal and latitudinal velocities by: 
\begin{align*}
    \Tilde{F}^{(1)}(\bm{x}) &= x^{(2)} \\
    \Tilde{F}^{(2)}(\bm{x}) &= (x^{(1)} - 0.1*(x^{(1)})^3)*(1+0.1 *\cos (50*\pi / 4)).
\end{align*}
On top of this field we add a divergent field at location $(-3,0)$, using equations:
\begin{align*}
    D^{(1)}(\bm{x}) &= \frac{(x^{(1)} - (-3))}{b_d+R_d^2(\bm{x})}  \\
    D^{(2)}(\bm{x}) &= \frac{x^{(2)}}{b_d+R_c^2(\bm{x})}
\end{align*}
with $R_d(\bm{x}) = (x^{(1)} - (-3))^2 + (x^{(2)} - 0)^2 $ being the distance from the center of divergence, and $b_d$ a parameter governing the size of the area of divergence. Larger $b_d$ implies larger area, but also smaller value at the center. It can be seen as a parameter measuring how diffuse the divergence around a center point is. We also have a convergent field around $(3,0)$, determined by the equations: 
\begin{align*}
    C^{(1)}(\bm{x}) &= - \frac{(x^{(1)} - 3)}{b_c+R_d^2(\bm{x})}  \\
    C^{(2)}(\bm{x}) &= - \frac{(x^{(2)})}{b_c+R_c^2(\bm{x})}
\end{align*}
with $R_c(\bm{x}) = (x^{(1)} - 3)^2 + (x^{(2)} - 0)^2 $, the distance from the center of convergence. To get the full velocity field, we sum up these three quantities: 
\begin{align*}
    F^{(1)}(\bm{x}) = \Tilde{F}^{(1)}(\bm{x}) +  D^{(1)}(\bm{x}) + C^{(1)}(\bm{x})\\
    F^{(2)}(\bm{x}) = \Tilde{F}^{(2)}(\bm{x}) +  D^{(2)}(\bm{x}) + C^{(2)}(\bm{x}).
\end{align*}
In this system, the divergence and vorticity do not have a simple form, but can be calculated. For the sake of our divergence analysis, it is sufficient to say that there are two areas of interest, around the center of divergence and convergence. In this experiment, we propose three different scenarios, where the only difference is how diffuse the divergence areas are. For simplicity, we assume $b = b_c = b_d$, and we run three different experiments with $b_{\text{small}} = 0.5$, $b_{\text{medium}} = 5$, and $b_{\text{big}} = 15$. 

As done before, to predict currents, divergence, and vorticity we simulate buoys. For each scenario the simulation part is the same. We first simulate 3 buoys, starting in the non-divergent areas, observed for a total time of 5, and 2 time steps. We then simulate 4 additional buoys, starting around the divergent areas, for a total time of 5, and 4 time steps. That is, we make observations coarser for buoys in these regions. Overall we have 22 observations. As usual, to get these observations we solve the velocity-time ODE and interpolate.

\paragraph{Model fitting.} For each of the three scenarios, we fit the three models with the routine specified in \cref{app:subsectionvortex}. The hyperparameter initialization for both GPs is the same across the three different scenarios: $\ell_{\Phi} = 1, \sigma_{\Phi}=1, \ell_{\Psi} = 2.7, \sigma_{\Psi} = 0.368, \obsvar = 0.135$ for the $\sehelmgp$, $\ell_{1} = 1, \sigmaFu=1, \ell_{2} = 2.7, \sigmaFv = 0.368, \obsvar = 0.135$ for the $\sevelgp$. We provide the optimal hyperparameters for each scenario in the corresponding subsections.

\paragraph{Result: small divergence area, $b_{\text{small}} = 0.5$.}

The optimal hyperparameters in this scenario are the following: 
\begin{itemize}
    \item $\ell_{\Phi} = 0.6335, \sigma_{\Phi}=0.3734, \ell_{\Psi} = 3.9115, \sigma_{\Psi} = 6.9294, \obsvar = 0.0083$ for the $\sehelmgp$
    \item $\ell_{1} = 0.7212, \sigmaFu=1.8767, \ell_{2} = 2.7183, \sigmaFv = 1.1361, \obsvar = 0.0084$ for the $\sevelgp$.
\end{itemize}

In \cref{fig:smallduffing} we show the results of this scenario. As before, for each of the plots, the horizontal and vertical axes represent, respectively, latitude and longitude. The first row represents the ground truth simulated vector field (left), and the reconstruction using the $\sehelmgp$ (center-left), the $\sevelgp$ (center-right) and the D-HNN (right). Red arrows are the observed buoy data, black arrows show the predicted current at test locations. First of all, we can see how our method predicts accurately the duffing structure in the left part of the plot, whereas has some issues in the right one, where we have the convergence area. The $\sevelgp$ prediction is more problematic: the correct current is predicted around the buoys, but farther away the prediction goes to zero, reverting to the prior mean. This is a problematic behavior, e.g., because it predicts very non-continuous currents. The D-HNN prediction is problematic as well: the current looks more continuous, but the general shape is very different from the ground truth. This behavior can be seen well from the second row, the comparison to the ground truth. In terms of RMSE, we have $0.96$ for the $\sehelmgp$, $2.05$ for the $\sevelgp$, and $2.14$ for the D-HNN, confirming that our model performs much better.

In the third row, we analyze the divergence. The left box shows the divergence structure of this field. There is a small area with very positive divergence on the left, and a small area with very negative divergence on the right. The two GP models are good at identifying these areas. At the same time, they both predict some other areas of divergence around the observed buoys. Nonetheless, if we consider the z-value plots (on the fifth row) we can see how the z-values for both models are very high in the two areas of divergence, meaning that there is a strongly significant non-zero mean in those areas, as desired. The D-HNN predicts a quite different divergence structure. The RMSEs are: $0.94$ for the $\sehelmgp$, $0.95$ for the $\sevelgp$, and $1.89$ for the D-HNN.  

Finally, in the last three rows we analyze results for the vorticity. The left box shows the ground truth. Here the $\sehelmgp$ prediction looks more accurate than the other two. Nonetheless, even our model is not fully able to capture the full vorticity structure. The predictions for the $\sevelgp$ look particularly problematic because it is highly affected by the location of the buoys, and that is reflected in the uncertainty and z-values plots. The D-HNN predicts a very different field on this task as well. The RMSEs are: $1.40$ for the $\sehelmgp$, $2.28$ for the $\sevelgp$, and $2.64$ for the D-HNN.

\paragraph{Result: medium divergence area, $b_{\text{small}} = 5$.}

The optimal hyperparameters in this scenario are the following: 
\begin{itemize}
    \item $\ell_{\Phi} = 1.2029, \sigma_{\Phi}=0.1666, \ell_{\Psi} = 3.3679, \sigma_{\Psi} = 9.5514, \obsvar = 0.0112$ for the $\sehelmgp$ 
    \item $\ell_{1} = 6.7677, \sigmaFu=4.5316, \ell_{2} = 2.7183, \sigmaFv = 23.3219, \obsvar = 0.0305$ for the $\sevelgp$. 
\end{itemize}

\cref{fig:midduffing} shows the results of this scenario. In the top part we have as always the velocity predictions. In this case, the ground truth field is very similar to before, but the divergence areas are more diffuse, and hence the current has generally longer lengthscale of variation (that is, the deviations are less sharp). This feature helps the predictions for all three methods. We can see indeed how now the three models produce predictions that are closer to the truth than before. Still, by looking at the difference from ground truth plots, we can see that the prediction of our model is slightly better than the $\sevelgp$, and significantly better than the D-HNN. We have the following RMSEs: $0.19$ for the $\sehelmgp$, $0.60$ for the $\sevelgp$, and $1.65$ for the D-HNN. These confirm what can see visually in the plots. 

In terms of divergence, by looking at the the ground truth plot on the left, one can immediately notice how the areas of divergence are now more diffuse, and the magnitudes are lower. The $\sehelmgp$ predicts accurately the two areas, with some noise in the central region. The $\sevelgp$ is less accurate, but overall understand that there are these two areas. The D-HNN fails in identifying the two regions. It is interesting to observe the z-value plots in this experiment: for the $\sehelmgp$, the z-values are very high in the two desired areas, meaning that our model is very certain about divergence being different from zero in those areas. For the $\sevelgp$, the z-values still look good, just less accurate than for our model.  The RMSEs are: $0.14$ for the $\sehelmgp$, $0.50$ for the $\sevelgp$, and $1.15$ for the D-HNN.  

Finally, we consider the vorticity. Here the two GP models agree significantly on the shape of their predictions, and they are both very similar to the ground truth. This result is reflected in the RMSEs: $0.24$ for the $\sehelmgp$, $0.26$ for the $\sevelgp$. The prediction for the D-HNN is far from the truth (RMSE $2.39$). The uncertainty is lower close to the data for both GP models. In general, both GP models seem to work well in recovering divergence and vorticity in this scenario. The $\sehelmgp$ is superior for the divergence, the $\sevelgp$ for the vorticity.  

\paragraph{Result: big divergence area, $b_{\text{small}} = 15$.}

The optimal hyperparameters in this scenario are the following: 
\begin{itemize}
    \item $\ell_{\Phi} = 2.9194, \sigma_{\Phi}=0.4599, \ell_{\Psi} = 3.2411, \sigma_{\Psi} = 10.1815, \obsvar = 0.0137$ for the $\sehelmgp$ 
    \item $\ell_{1} = 7.3457, \sigmaFu=4.0581, \ell_{2} = 2.7183, \sigmaFv = 24.7519, \obsvar = 0.0202$ for the $\sevelgp$. 
\end{itemize}

In \cref{fig:bigduffing} we show the results of this scenario. Here the divergence areas are even more diffuse, and the overall field ends up having longer lengthscale of variation. The results on velocity predictions, divergence, and vorticity are aligned with the medium size scenario. 

For the velocity prediction task, the three models produce predictions that are close to the truth. Now the two GP models are similar, as can be seen in the difference from the truth plots, and they are both significantly better than the D-HNN. This result is confirmed by the RMSEs: $0.41$ for the $\sehelmgp$,  $0.22$ for the $\sevelgp$, and $1.63$ for the D-HNN.

In terms of divergence, the $\sehelmgp$ accurately predicts the two areas of divergence, still with some noise in the central region. The $\sevelgp$ is less accurate, especially in the top right region, but overall understand that there are these two areas. The D-HNN prediction is poor. As in the past experiment, it is interesting to observe the z-value plots: both GP models have very high z-values in the areas of divergence, proving their ability to capture the locations of these. The RMSEs are: $0.08$ for the $\sehelmgp$, $0.17$ for the $\sevelgp$, $1.10$ for the D-HNN. 

Finally, also if we consider the vorticity, the results are similar to the previous scenario. Predictions are good for the two GPs, with meaningful z-values. Now the $\sevelgp$ predictions align almost perfectly with the ground truth, and this is reflected in the lower RMSE ($0.16$ vs. $0.48$ for the $\sehelmgp$). The D-HNN still fails to predict structure precisely ($2.41$ RMSE) 

In summary, with this experiment we showed that the $\sehelmgp$ is generally better than the other models in predicting the underlying velocity field (significantly better in the first scenario). In terms of divergence and vorticity, we do not see a large difference compared to the $\sevelgp$: both models are very good; $\sehelmgp$ is slightly better for the divergence and $\sevelgp$ is slightly better for the vorticity. This behavior is very interesting, showing how both models are able to predict a complex divergence pattern (more complex than the previous experiment).

\afterpage{
\begin{figure*}[!h]
    \centering
    \includegraphics[width=0.96\textwidth]{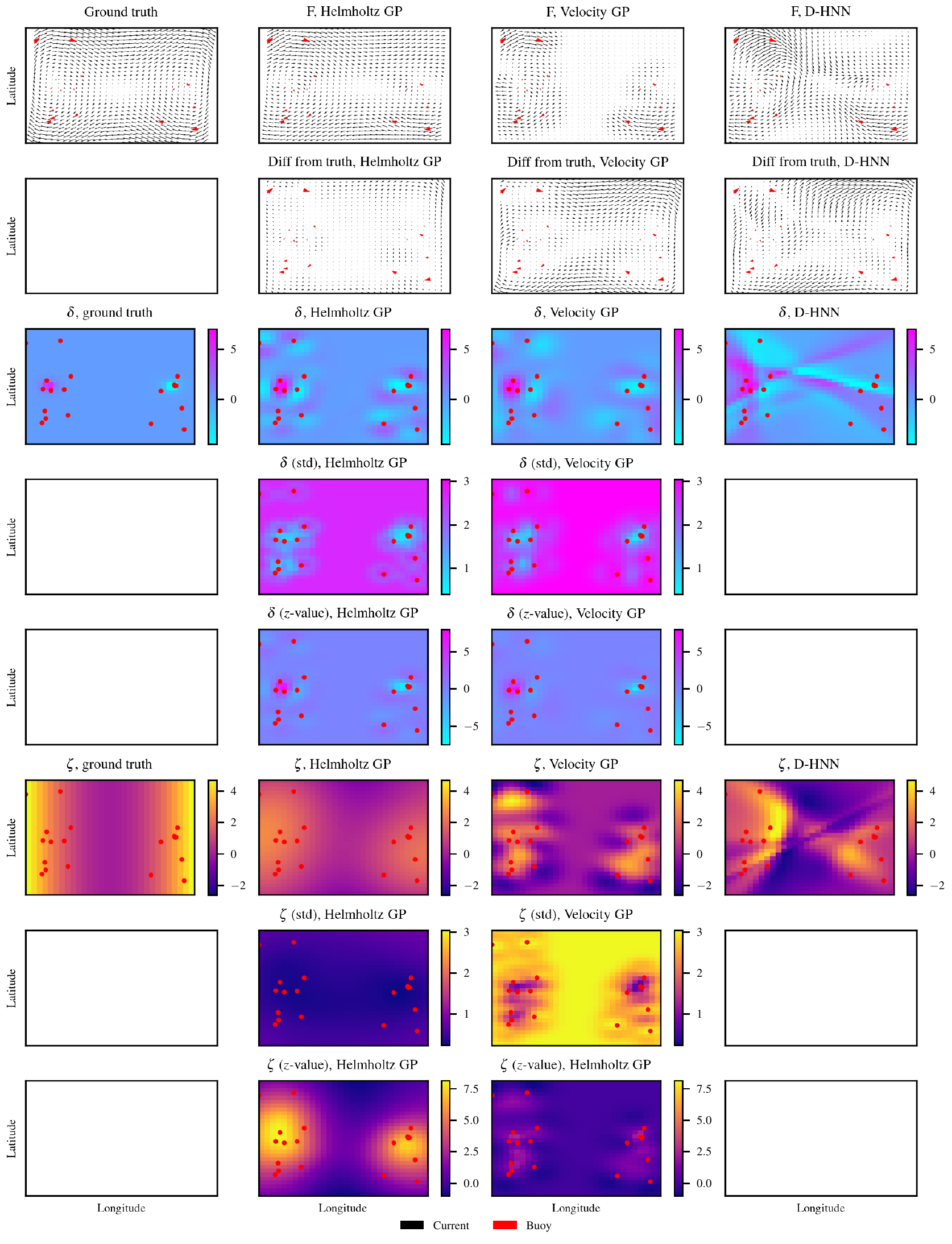}
    \caption{Duffing with small area of divergence. First column: ground truths. Second column: $\sehelmgp$ results. Third column: $\sevelgp$ results. Fourth column: D-HNN results.}
    \label{fig:smallduffing}
\end{figure*}
\clearpage}

\afterpage{
\begin{figure*}[!h]
    \centering
    \includegraphics[width=0.96\textwidth]{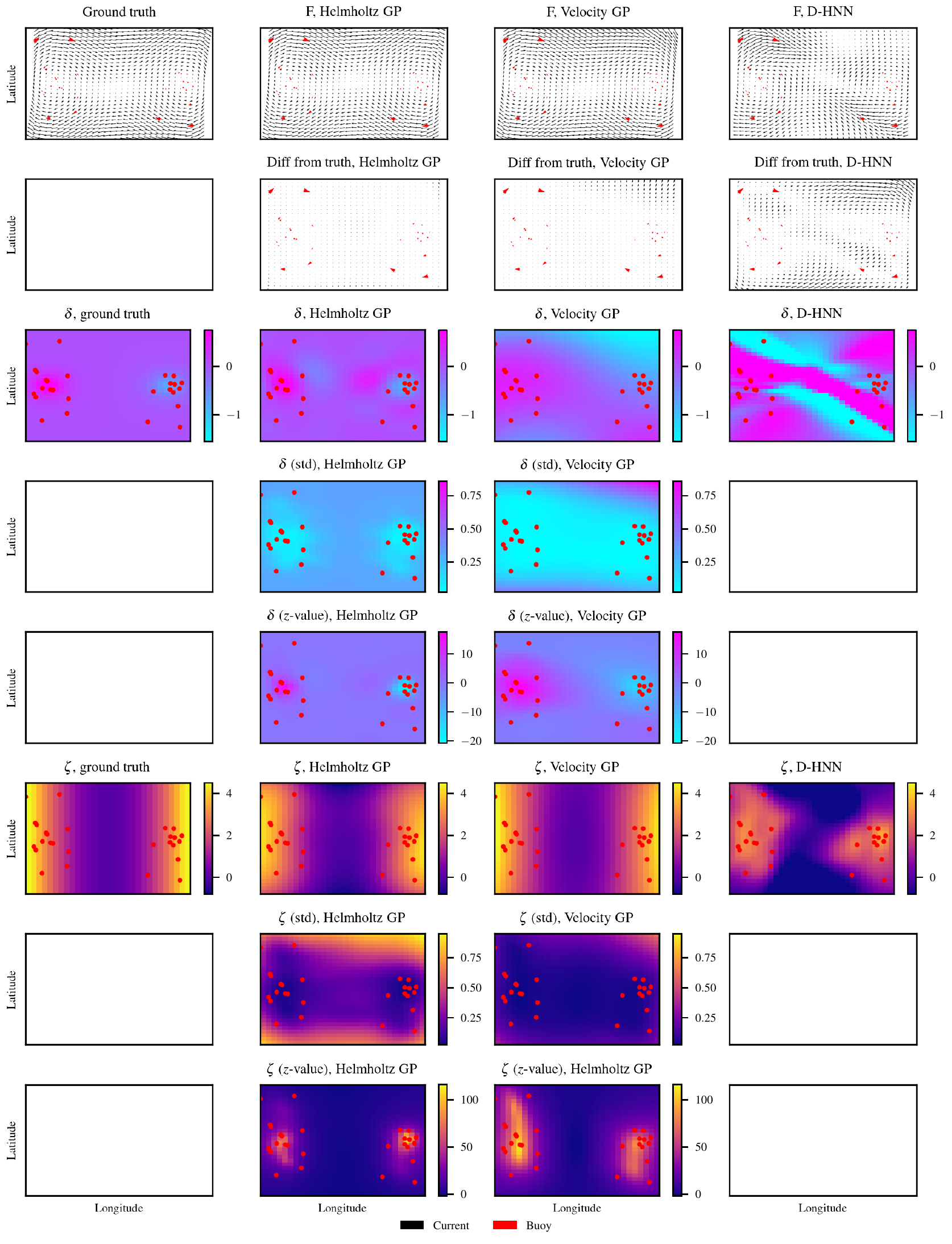}
    \caption{Duffing with medium area of divergence. First column: ground truths. Second column: $\sehelmgp$ results. Third column: $\sevelgp$ results. Fourth column: D-HNN results.}
    \label{fig:midduffing}
\end{figure*}
\clearpage}

\afterpage{\begin{figure*}[!h]
    \centering
    \includegraphics[width=0.96\textwidth]{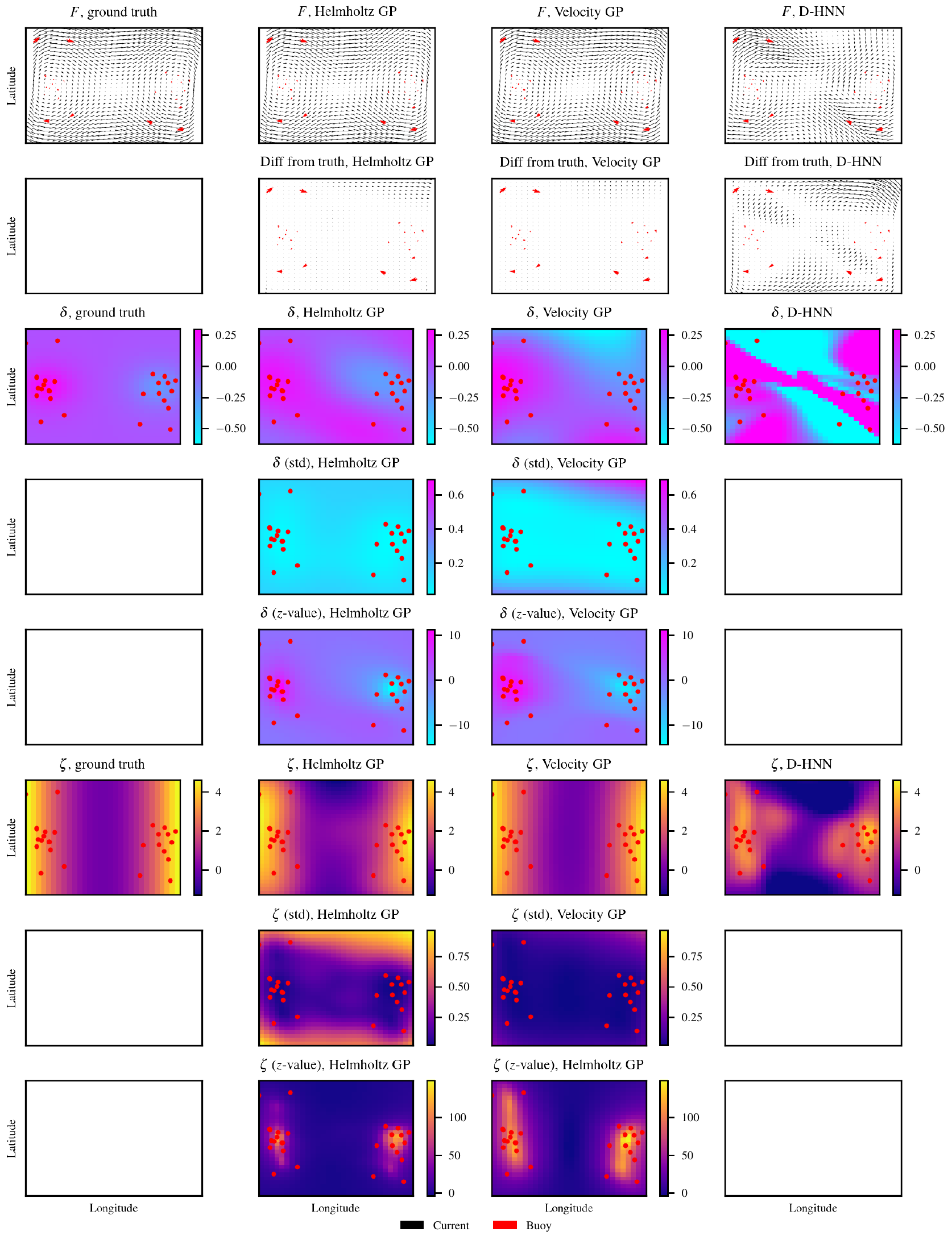}
    \caption{Duffing with big area of divergence. First column: ground truths. Second column: $\sehelmgp$ results. Third column: $\sevelgp$ results. Fourth column: D-HNN results.}
    \label{fig:bigduffing}
\end{figure*}
\clearpage}

\newpage

\subsection{Real-world data 1: LASER}\label{app:experiments-laser}

The LAgrangian Submesoscale ExpeRiment, or LASER \citep{novelli2017biodegradable}, was performed in the Gulf of Mexico in January-February 2016. Around 10 million data points were retrieved from more than 1000 near-surface biodegradable CODE-type ocean drifters (drogued at a depth of one meter) tracked in real-time using SPOT GPS units. These data were then preprocessed as described in \citet{yaremchuk2014filtering}. Finally, since satellite data can have errors and positions of buoys sometimes jump an unrealistic amount, oceanographers removed some bad points that were visible by eye. The preprocessed data are available at \url{https://data.gulfresearchinitiative.org/data/R4.x265.237:0001} \citep{dasaroLASERdata2017}.  In our analysis, we use locations and velocities of buoys as they appear in this dataset. 

The main goal of the experiment was to obtain data to understand the submesoscale ocean currents in the open ocean environment near the DeSoto Canyon, as well as how oil or other pollutants might be transported via these currents. In our analysis, we consider a subsample of the LASER data, in an area where the oceanographers expect a convergent front to be (from visual inspection of drifter data). This particular structure in the ocean happens when there are two different masses of water that collide and cause the formation of an area where water sinks. This behavior could happen when two water masses with different temperatures and/or salinities meet, or when water masses from different directions go towards the same area, such as the meeting of warm equatorial water and cold polar water. These fronts are very important for understanding ocean circulation and weather patterns, and can also be a source of nutrients for marine life. To study this structure, we consider two experiments: in the first one, we run our model on a small subset of buoys from this region, collapsing the time dimension and downsampling the observations. To confirm our finds, we then run our models on a dataset that contains more buoys and observations, still from that region.

\subsubsection{LASER, convergent front, sparse}
In this analysis, we consider 19 buoys, observed every fifteen minutes over a two hour time horizon. By downsampling by a factor of 3 and collapsing the time dimension, we obtain 55 observations. In these data, oceanographers expect to see a clear convergent front in the left region of the spatial domain.  

\paragraph{Model fitting.} The optimization routine is exactly the same that we do for the simulated experiments: gradient-based Adam algorithm until convergence or a sufficient amount of iterations has elapsed. For the initial hyperparameters, we have tried various alternatives, and found out that the predictions do not change significantly. Hence, for coherence, we stick to the usual initialization done for synthetic data, i.e.,  $\ell_{\Phi} = 1, \sigma_{\Phi}=1, \ell_{\Psi} = 2.7, \sigma_{\Psi} = 0.368, \obsvar = 0.135$ for the $\sehelmgp$, and $\ell_{1} = 1, \sigmaFu=1, \ell_{2} = 2.7, \sigmaFv = 0.368, \obsvar = 0.135$ for the $\sevelgp$. The optimal hyperparameters obtained are: $\ell_{\Phi} = 1.6032, \sigma_{\Phi}=0.0496, \ell_{\Psi} = 13.3272, \sigma_{\Psi} = 1.6392, \obsvar = 0.0232$ for the $\sehelmgp$, and $\ell_{1} = 8.3149, \sigmaFu=0.1384, \ell_{2} = 2.7183, \sigmaFv = 0.1318, \obsvar = 0.0276$

\paragraph{Results.} We show the results in \cref{fig:lasersparse}. The top row shows the predictions for the three models. As before, red arrows are the observed buoy data. 
The black arrows show the current posterior means at test locations. The test locations are 400 points evenly sparse on a 20 x 20 grid that covers the full range of latitude and longitude of our buoys' observations. The three models produce very similar results: a quasi-constant flow towards the south-west area of the region. There is a slight difference in prediction for the region where buoys seem to converge ($\sevelgp$ and D-HNN do not predict different current around there, $\sehelmgp$ predicts a more converging behavior). 

This difference is clear when we look at the posterior divergence plots, in the second row. Our model predicts a negative divergence area (in light-blue) in the area where the oceanographers expect a convergent front. On the contrary, the $\sevelgp$ predicts no divergence on the whole spatial domain. This is a very important difference, showing how our model can perform better in recovering this very important property of the ocean. Note that this same intuition is confirmed if we look at the fourth row, where we have z-value plots for both models: the z-values for the $\sehelmgp$ around the expected convergent front are strongly negative, meaning that the divergence there is significantly non-zero, as desired. 

For the vorticity, we just have very small values, almost zero, for both models. Unfortunately, there is no oceanographic knowledge to predict the vorticity far away from the observed drifter traces, and therefore we can not conclude anything related to this point.

\subsubsection{LASER, convergent front, full}\label{app:experiments-laser-full}
To further validate the result on the divergence, we consider the same buoys floating over a nine hour time horizon, downsampled by a factor of 3, obtaining 240 observations. We fit our models by performing the usual optimization routine, and we plot the results in \cref{fig:laserfull}. 

In the top row we show the prediction results. For all the models, the predictions around the buoy agree almost perfectly with predictions from the sparse experiment for the $\sehelmgp$; further away models, are more conservative and closer to the prior. The divergence plots in the second row are of the most interest. The prediction according to $\sevelgp$ changes remarkably relative to the past experiment. Now it matches closely the Helmholtz result, and both methods detect the convergent front. This result shows the strength of our model in being more data efficient, a very desirable property for a GP model. 

\afterpage{
\begin{figure*}[!h]
    \centering
    \includegraphics[width=0.98\textwidth]{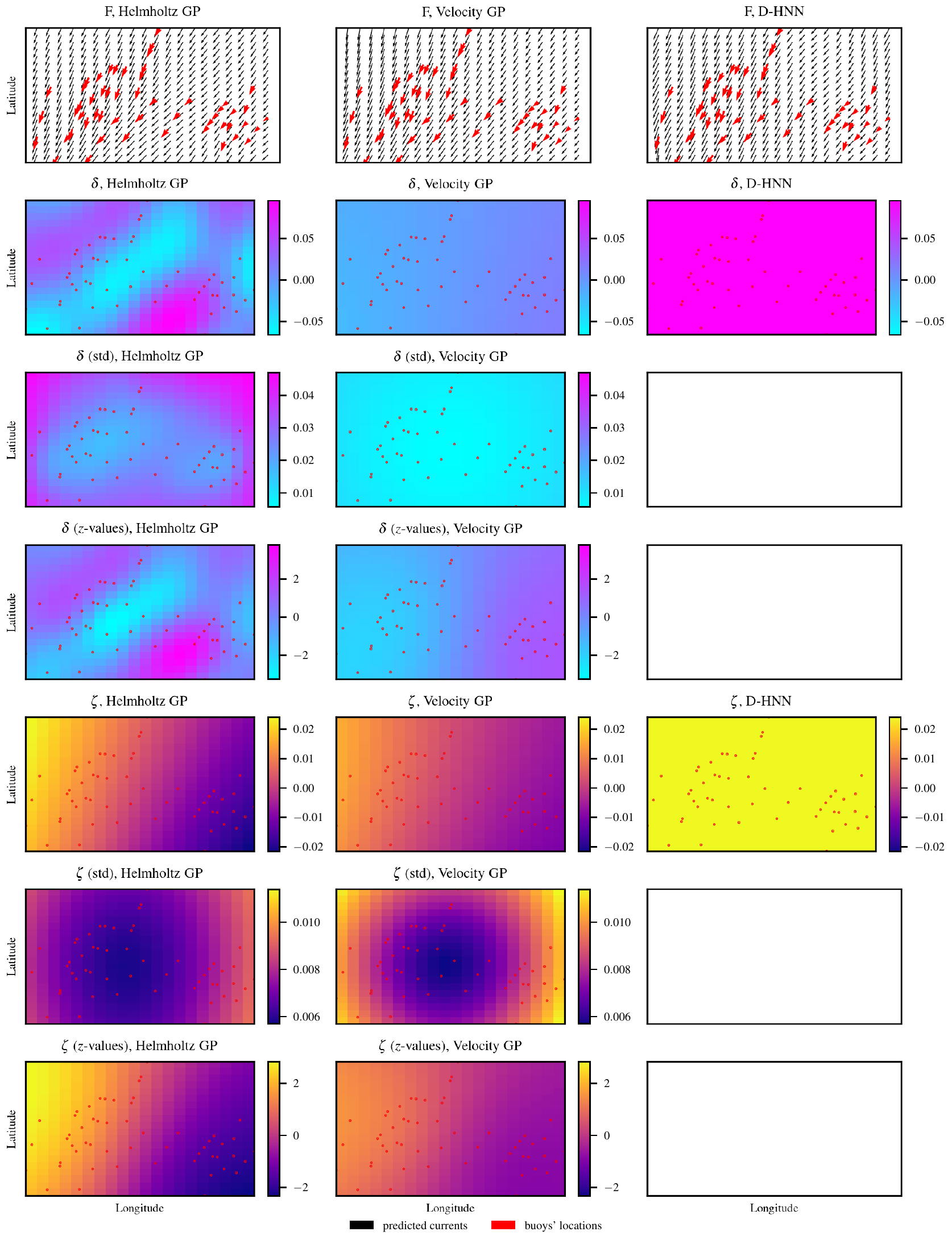}
    \caption{LASER sparse. First column: $\sehelmgp$ results. Second column: $\sevelgp$ results. Third column: D-HNN results.}
    \label{fig:lasersparse}
\end{figure*}
\clearpage}

\afterpage{
\begin{figure*}[!h]
    \centering
    \includegraphics[width=0.98\textwidth]{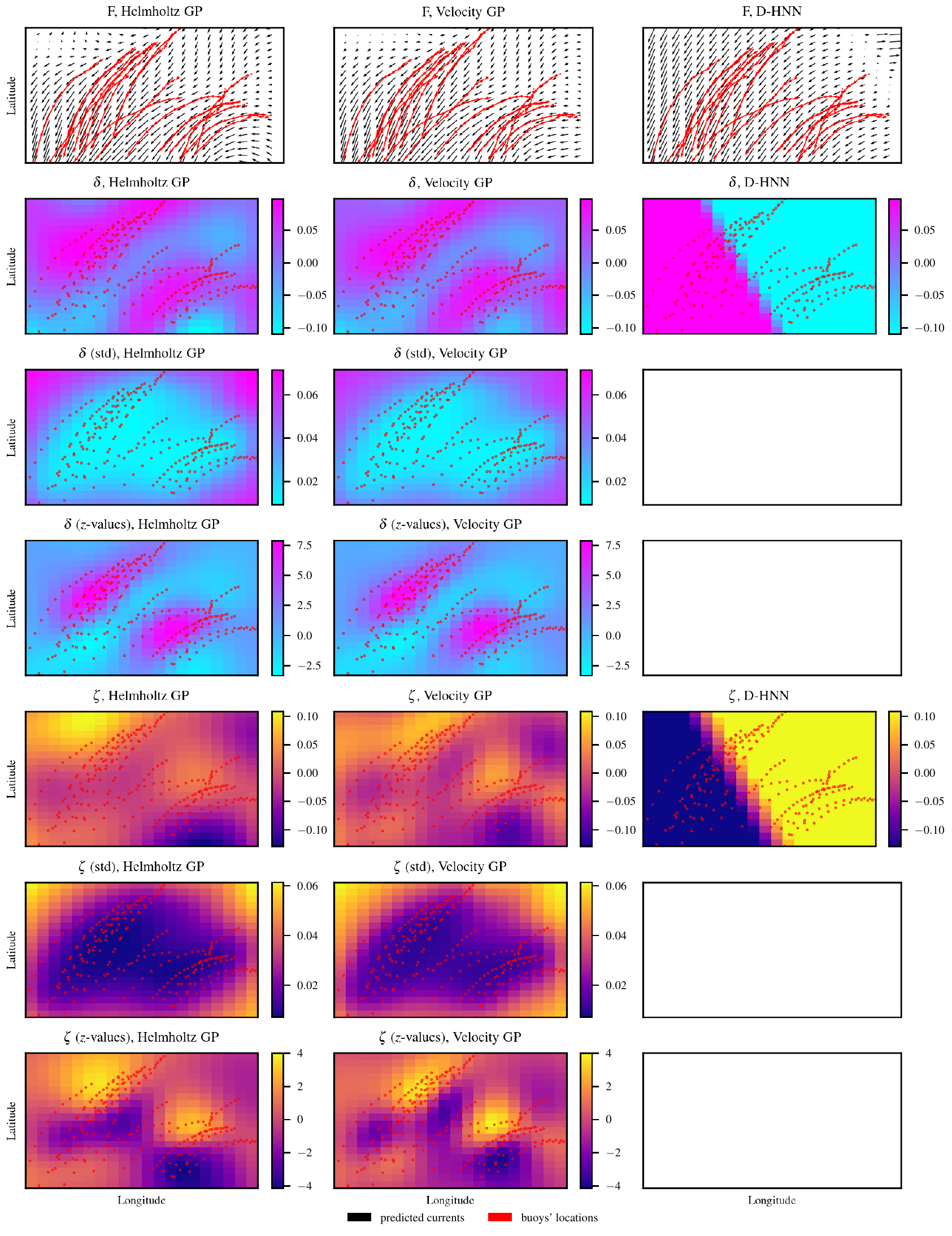}
    \caption{LASER complete. First column: $\sehelmgp$ results. Second column: $\sevelgp$ results. Third column: D-HNN results.}
    \label{fig:laserfull}
\end{figure*}
\clearpage}

\newpage

\subsection{Real-world data 2: GLAD}\label{app:experiments-glad}
The Grand Lagrangian Deployment (GLAD) experiment \citep{ozgokmen2012carthe} is another experiment conducted in the northern Gulf of Mexico in July 2012. More than 300 custom-made buoys (of the same type as in the LASER experiment) were deployed near the Deepwater Horizon site and Louisiana coast. This experiment was originally intended to help advance research in understanding the spread and dispersion of oil after the Deepwater Horizon tragedy. Researchers have been using this dataset to study interactions among ocean flows, the levels of influence on transport that large and small flows have, and the size of oil spread at which large flows dominate. Since the GLAD experiment was conducted in the summer time with a shallow 20-meter surface mixed layer for the buoys, the wind has a very strong impact on the trajectories, creating a lot of oscillations. These oscillations are due to a balance of forces due to wind forcing and Earth’s rotation, and get amplified during summer time. Filtering these oscillations is a very complicated task, so this wind-induced motions represent a true problem for buoys that are used for measuring oceanographic parameters. Note that we do not see these issues with the LASER data, because that was a winter experiment, where the surface layer is 100-meter deep and devoid of these oscillations.

\paragraph{Model fitting.} To deal with this issue, we consider a limited subset of our dataset. We take drifter traces of 12 buoys, observed hourly over a four days time horizon. We collapse the time dimension and downsample these traces by a factor 50, obtaining 85 observations. In terms of optimization routine, we follow very similarly what done in all the other experiments. The only difference is that here different hyperparameter optimization led to different prediction plots for some combinations. In our final results, we decided to stick to the hyperparameter initialization for which both the $\sehelmgp$ and the $\sevelgp$ results were visually more appealing. These are $\ell_{\Phi} = 12.18, \sigma_{\Phi}=0.135, \ell_{\Psi} = 7.4, \sigma_{\Psi} = 3, \obsvar = 0.135$ for the $\sehelmgp$, $\ell_{1} = 2.7, \sigmaFu=1, \ell_{2} = 2.7, \sigmaFv = 1, \obsvar = 0.135$ for the $\sevelgp$. 

The optimal hyperparameters obtained after the optimization routine are  $\ell_{\Phi} = 45.6840, \sigma_{\Phi}=0.0362, \ell_{\Psi} = 80.1871, \sigma_{\Psi} = 13.5514, \obsvar = 0.1715$ for the $\sehelmgp$, and $\ell_{1} = 72.5835, \sigmaFu=0.2622, \ell_{2} = 2.7183, \sigmaFv = 0.1354, \obsvar = 0.1739$ for the $\sevelgp$. 

\paragraph{Results.} In these data, we expect to see a continuous current with no sharp deviations (i.e., lengthscale of variation is long), with few smaller vortices distributed across the region. Unfortunately, here there is no explicit divergence structure that oceanographers expect, so any conclusion from the divergence and vorticity plots is difficult to verify. We show the results of the experiments in \cref{fig:gladsparse}. We have the predictions in the first row. As before, red arrows are the observed buoy data. The black arrows show the current posterior means at test locations. First of all, the D-HNN model makes physically implausible predictions, likely due to the sparse nature of the data on a large domain. For the GP models, both prediction plots look reasonable, but there are two regions of interest showing important issues with the $\sevelgp$. Consider the bottom right corner. Despite evidence of a strong current making a u-turn, the standard approach shows an abrupt drop in current away from observed data. Our method, on the contrary, predicts a strong current connecting across drifters, in accordance with the continuity of currents (the idea that when a fluid is in motion, it must move in such a way that mass is conserved). This behavior is very problematic. Consider then the top-left corner. Flow behavior around the observations suggests that there might be a vortex in that region. The standard approach shows none. With the $\sehelmgp$, instead, we can see the expected vortex between the two lines of current. 

To further prove our point, we increase the number of observations to 1200, by decreasing the downsampling factor, and we re-fit the two models with the same optimization routine. The velocity prediction results are included in the first row of \cref{fig:gladfull}. Here we can see that our model starts being affected by the oscillations in the data, predicting currents with shorter lengthscale of variation. But also it is still able to reconstruct a continuous current, also far away from the observations, with some vortices with shorter length scale. For the $\sevelgp$, the discontinuity issues increase significantly, and the model is still unable to detect vortices. These are two strong motivations to believe the $\sehelmgp$ provides a better alternative for this task. The prediction of the D-HNN remains poor. 

In terms of divergence and vorticity reconstruction on the sparse dataset, the $\sehelmgp$ predicts very small divergence almost everywhere, and vorticity coherent with the buoys trajectories. The $\sevelgp$, instead, predicts a reasonable vorticity field, but the divergence shows irregular patterns that look more suspicious. See the second and third blocks in \cref{fig:gladsparse} for a visual comparison.
By looking at the data, we can see how there are regions on the left where buoys observations seem to be more affected by the oscillations. The $\sevelgp$ is more influenced by this noise than our model, and hence predicts divergence areas around the buoys. This claim can be validated by looking at the plots when the dataset size increases. See the second and third blocks in \cref{fig:gladfull}. Here, both models seem to be affected more by the oscillations, but the $\sehelmgp$ still predicts divergence closer to zero, whereas the $\sevelgp$ predicts divergence areas around each conglomerate of buoys in the region. Therefore, we can conclude that our model is at least as good as the $\sevelgp$. Note that we cannot say anything stronger, because there is no expert knowledge suggesting that the $\sehelmgp$ behavior is the expected one.

\afterpage{
\begin{figure*}[!h]
    \centering
    \includegraphics[width=0.96\textwidth]{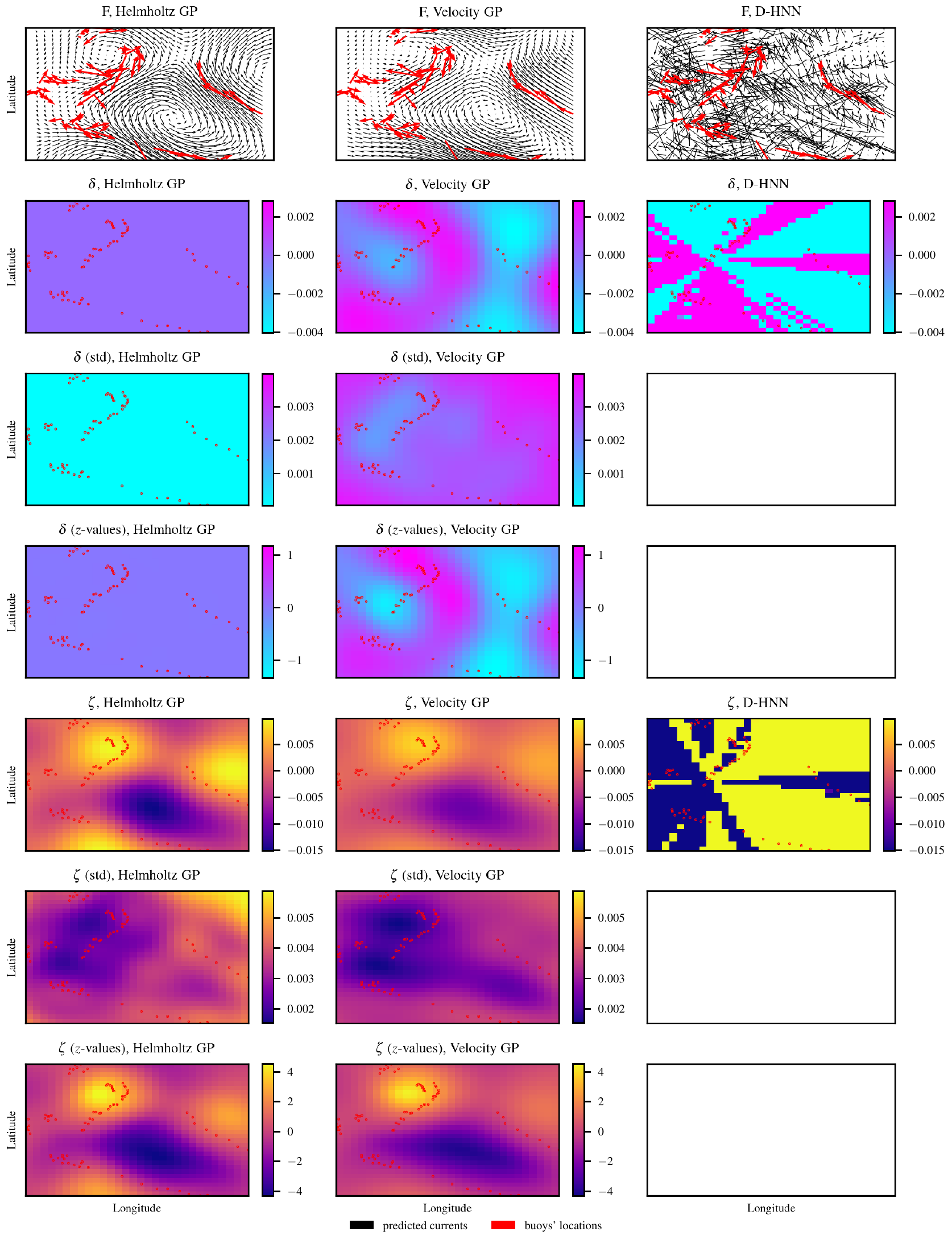}
    \caption{GLAD sparse. First column: $\sehelmgp$ results. Second column: $\sevelgp$ results. Third column: D-HNN results.}
    \label{fig:gladsparse}
\end{figure*}
\clearpage}
\afterpage{
\begin{figure*}[!h]
    \centering
    \includegraphics[width=0.96\textwidth]{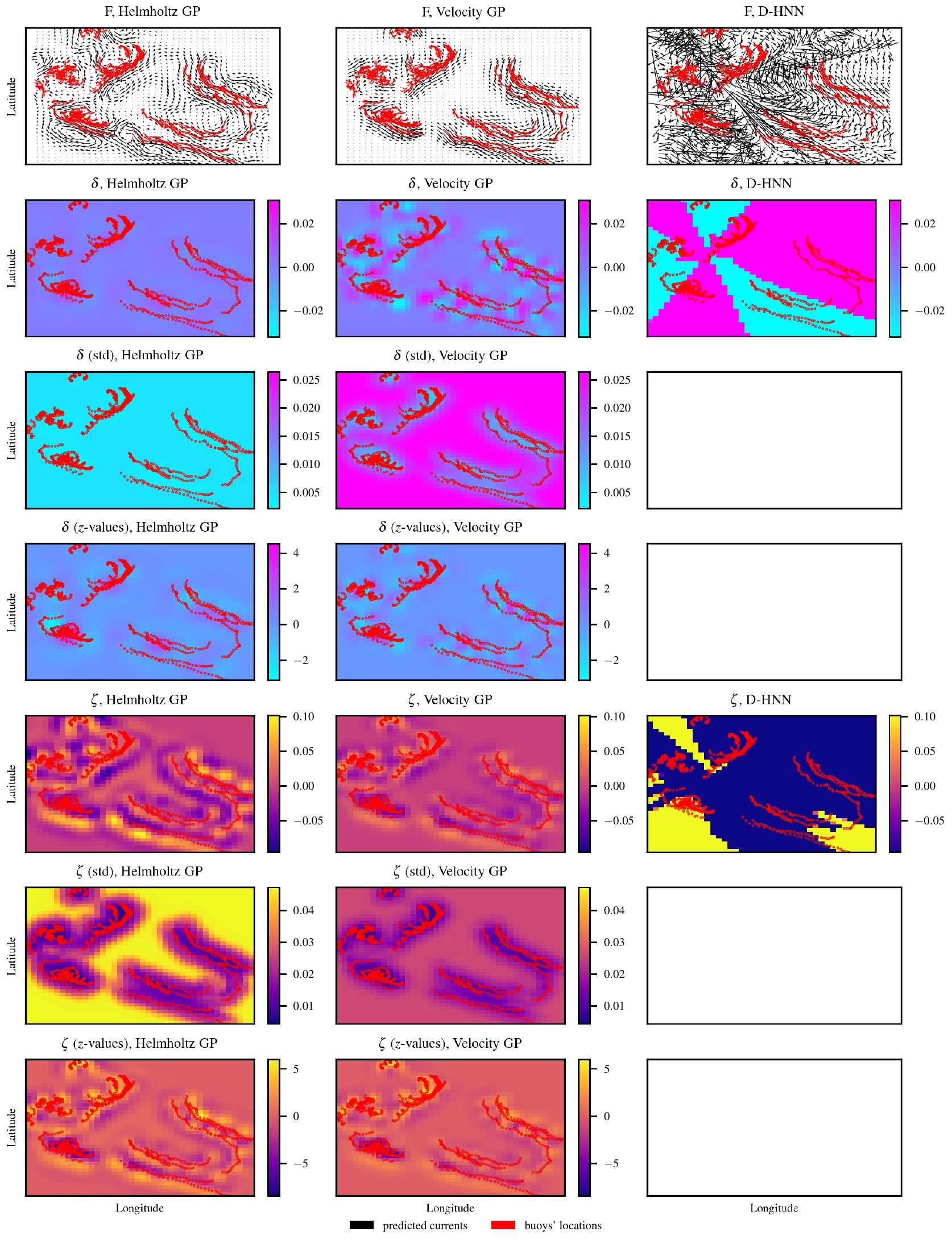}
    \caption{GLAD complete. First column: $\sehelmgp$ results. Second column: $\sevelgp$ results. Third column: D-HNN results.}
    \label{fig:gladfull}
\end{figure*}
\clearpage}
%%%%%%%%%%%%%%%%%%%%%%%%%%%%%%%%%%%%%%%%%%%%%%%%%%%%%%%%%%%%%%%%%%%%%%%%%%%%%%%
%%%%%%%%%%%%%%%%%%%%%%%%%%%%%%%%%%%%%%%%%%%%%%%%%%%%%%%%%%%%%%%%%%%%%%%%%%%%%%%

\end{document}